\documentclass[11pt]{article}
\usepackage[margin=1in]{geometry}
\usepackage[T1]{fontenc}
\usepackage{amsmath}
\usepackage{amsthm}
\usepackage{newtxtext,newtxmath}
\usepackage{graphicx,booktabs,xcolor,tikz,setspace}
\usepackage{subcaption}
\usetikzlibrary{decorations.pathreplacing}
\usetikzlibrary{patterns}
\usepackage{natbib}
\bibpunct[, ]{(}{)}{,}{a}{}{,}
\usepackage[hidelinks]{hyperref}
\usepackage[capitalise]{cleveref}
\crefname{appendix}{Appendix}{Appendices}
\crefname{assumption}{Assumption}{Assumptions}

\theoremstyle{plain}
\newtheorem{proposition}{Proposition}
\newtheorem{lemma}{Lemma}
\newtheorem{corollary}{Corollary}
\newtheorem{assumption}{Assumption}
\theoremstyle{remark}
\newtheorem{remark}{Remark}

\renewcommand{\proof}[1]{\par\medskip\noindent\textit{#1}\enspace\ignorespaces}
\renewcommand{\endproof}{\par\medskip}
\newcommand{\Halmos}{\hfill$\square$}
\newcommand{\ACKNOWLEDGMENT}[1]{\section*{Acknowledgments}#1}

\newcommand{\ECSwitch}{\clearpage
  \setcounter{section}{0}\setcounter{equation}{0}\setcounter{figure}{0}%
  \setcounter{table}{0}\setcounter{lemma}{0}%
  \renewcommand{\thesection}{EC.\arabic{section}}%
  \renewcommand{\theequation}{EC.\arabic{equation}}%
  \renewcommand{\thefigure}{EC.\arabic{figure}}%
  \renewcommand{\thetable}{EC.\arabic{table}}%
  \renewcommand{\thelemma}{EC.\arabic{lemma}}}
\newcommand{\ECHead}[1]{\begin{center}{\Large\bfseries E-Companion: #1}\end{center}}
\title{Capital Costs Protect the Network: Mining Gap Dynamics in Proof-of-Work Blockchains Under a Stochastic Mempool}
\author{Hasan Selçuk Kılıç\\ \small Department of Electrical and Electronics Engineering, Bilkent University, 06800 Bilkent, Ankara, Turkey\\ \small \texttt{selcuk.kilic@bilkent.edu.tr} \and Nail Akar\\ \small Department of Electrical and Electronics Engineering, Bilkent University, 06800 Bilkent, Ankara, Turkey\\ \small \texttt{akar@ee.bilkent.edu.tr}}
\date{October 2026}
\begin{document}
\maketitle
\begin{abstract}
Proof-of-Work blockchains increasingly rely on transaction fees to keep miners engaged
as block subsidies decline. An immediate outcome of this is that miners may sit idle
at the start of a round until enough high-fee transactions accumulate
in order to make mining worthwhile. This idle period is known as the mining gap. We model the high-fee transaction backlog as a Lindley-type
recursion under a two-tier fee market and use its stationary distribution to
characterize the mining gap. Under Bitcoin's difficulty-adjustment mechanism, the expected
gap grows strictly with the hashpower active on the network and consumes the entire
round as that hashpower grows without bound. A free-entry condition
determines a unique long-run equilibrium hashpower. Capital cost caps the
equilibrium hashpower and prevents the mining gap from occupying the entire
round. We further characterize how the block subsidy, transaction fees, and
operating cost shape the equilibrium hashpower and the mining gap.

\medskip\noindent\textbf{Key words:} proof-of-work, mining gap, blockchain security, transaction fees, capital costs

\noindent\textbf{Funding:} The authors received no funding for this research.
\end{abstract}

\section{Introduction}
\label{sec:intro}
Since its introduction by Nakamoto, Bitcoin's Proof-of-Work (PoW) consensus mechanism
has underpinned trust in a fully decentralized and adversarial environment \citep{nakamoto2008}. The blockchain
is a shared ledger that records all confirmed transactions in a chain of blocks. The
participants who add new blocks to this chain are called \emph{miners}. Miners compete to
extend the blockchain by solving cryptographic puzzles. The miner who finds a block is
rewarded with \emph{block subsidies} (newly minted coins) and the \emph{transaction fees}
that users pay to have their transactions included in that block
\citep{nakamoto2008,bitcoinDevBlockChain}.
Each attempt to solve the cryptographic puzzle succeeds with a small probability. The number of attempts a
miner makes per unit of time is called its \emph{hashpower}, and a miner with more
hashpower is more likely to find the next block. Every 2016 blocks, Bitcoin's
\emph{difficulty-adjustment mechanism} makes the puzzle harder or easier so that blocks
keep arriving about every ten minutes on average \citep{bitcoinDevBlockChain}. We call
the time between two consecutive blocks a \emph{round}.

For most of Bitcoin's history, the
block subsidy has dominated miner revenue. However, the halving schedule cuts the block
subsidy in half roughly every four years \citep{carlsten2016,bitcoinDevBlockChain}. This results in a shift toward a \emph{fee-dominated
regime}, where transaction fees become the primary source of miner revenue \citep{carlsten2016,easley2019fees}. This transition
raises concerns about the security of the network, as miners may be less incentivized to
participate if transaction fees are insufficient to cover their costs \citep{capponi2023mining}. Miners face two
kinds of costs \citep{carlsten2016,budish2025trust,kimSeo2026mind}. \emph{Operating costs} are
the costs of running the mining hardware, such as electricity, and miners pay them only
while they mine. \emph{Capital costs} are the costs of owning the hardware, and miners pay them
whether they mine or stay idle. The umbrella term
for the problem of whether fees will cover these costs once the subsidy shrinks is ``security budget'', representing the total budget for incentivizing
miners to secure the network \citep{ciaian2021security,budish2025trust}. There are active debates about whether this security budget will be sufficient, and
the security budget argument is frequently used to justify protocol or monetary-policy
proposals \citep{ciaian2021security,budish2025trust}. ``Tail emission'' is one policy aimed directly at the security budget problem by
providing a nonzero block subsidy indefinitely \citep{moneroTailEmission}. This policy is implemented in cryptocurrencies such as Monero and Dogecoin
\citep{moneroTailEmission,dogecoinFAQ}, and has also been discussed in the
Bitcoin context as a possible hard-fork change to the subsidy schedule
\citep{todd2022tailEmission}. This would depart from Nakamoto's original
fee-transition design, which anticipated that the mining incentive could
eventually transition to transaction fees once the coin supply reached its
predetermined limit \citep{nakamoto2008,nakamoto2008fees}.

One concern that arises in a fee-dominated regime that is related to the security budget
is the \emph{mining gap}, an idle period at the start of each round during
which accumulated transaction fees are insufficient to cover miners' operating costs,
making mining temporarily unprofitable. The gap forms at the start of the round
because each newly found block removes the highest-fee waiting transactions, so the
next block initially has little fee revenue to offer. We call a miner \emph{honest} if
it always extends the longest chain and publishes each block as soon as it finds one, as
Bitcoin's protocol prescribes \citep{nakamoto2008,carlsten2016}. An honest miner still
decides when to run its hardware. During this gap, all honest miners are expected
to be idle. An idle period might look harmless, or even desirable, since idle miners use
no electricity. From a security standpoint, however, this is a serious concern. The work
of honest miners is what protects a PoW blockchain from attackers. To reverse a
confirmed payment, an attacker must build a competing chain of blocks that grows faster
than the chain extended by honest miners \citep{nakamoto2008}. Even while honest miners
work, the honest chain grows only when one of them solves the puzzle, and this can take
some time. When they stop working, the honest chain is guaranteed to stop growing, and
the attacker can gain ground in this race without any competition. The electricity that
honest miners save during the gap is therefore protection that the network gives up \citep{goren2019mind}. Nakamoto's original security
analysis assumes that miners are always active \citep{nakamoto2008}. A mining gap creates
an advantage for adversaries that are not bound by the same economic incentives. An
adversary who profits from the attack itself, for example by reversing a payment, has no
reason to wait for fees to accumulate. Such
adversaries can start mining immediately after a block is found, while honest miners
wait for the gap to end. In this manner, even without a majority of the hashpower,
adversaries can create problems such as double-spending attacks, selfish mining and fee
sniping \citep{nakamoto2008,rosenfeld2014double,eyal2018majority,bitcoinOptechFeeSniping}.
A double-spending attacker reverses a payment after the recipient has accepted it, so
a payment that looked settled is undone. A selfish miner withholds the blocks it finds.
This wastes the work of honest miners and earns the attacker more than its share of
hashpower would justify, which draws other miners to join the attacker and concentrates
control of the network \citep{eyal2018majority}. A fee sniper re-mines a recent block
to take its fees and thereby replaces a block that the network had already accepted
\citep{bitcoinOptechFeeSniping}.

Unconfirmed transactions wait in a pool called the \emph{mempool} until a block includes
them, and transactions that do not fit into one block wait there for the next. The fees
available at the start of a round therefore depend on how long the earlier rounds
lasted. Since the time needed to find a block is random, these fees are random as well.
\citet{carlsten2016} identified the mining gap through a simulation that keeps this
carryover. The analytical models that followed assume instead that the fees available
to miners reset to zero after every block \citep{tsabary2018,noguchi2025,kimSeo2026mind}.
None of these studies derives the length of the mining gap when fees carry over between
rounds. Security analyses of idle honest miners, such as \citet{doger2024} and
\citet{doger2025}, need this length but take it as a given input.
This paper focuses on the mining gap problem with a stochastic model of the mempool
that keeps this carryover. We derive the probability distribution of the length of the
mining gap in this model and study how the gap responds to the aggregate hashpower and to
the economic parameters of the network. Our contributions are as follows.
\begin{itemize}
\item We formalize the mining decision as a profit rate maximization problem over a
long horizon. The miners form a \emph{non-atomic} population, meaning that each miner holds a
negligible share of the aggregate hashpower. We show that the myopic threshold-based rule
is a weakly dominant strategy for every miner. The rule is myopic because it activates
a miner as soon as the current reward rate covers the operating cost, without regard
to future rounds. It is weakly dominant because no other strategy earns the miner a
higher long-run profit rate, whatever strategies the other miners use.
\item We derive the closed-form stationary distribution of the mining gap under a persistent stochastic mempool and a two-tier fee market.
The stationary distribution gives the long-run distribution of the gap length within
each \emph{epoch}, which is the run of consecutive rounds between two difficulty adjustments. In the two-tier fee market, high-fee transactions arrive over time, and
low-fee transactions are always available to fill the rest of a block. We model the high-fee transaction backlog with a Lindley-type recursion with a floor. The recursion tracks how unconfirmed
high-fee transactions carry over from one block to the next, and the floor captures that
mining does not start until
enough of them have accumulated. We first formulate its transient round-by-round distribution. Subsequently, from the mempool
distribution, we derive the stationary distribution of the mining gap and its
expectation for each winning probability. The \emph{winning probability} is the probability
that a single attempt at the puzzle succeeds, and Bitcoin's difficulty-adjustment
mechanism tunes it to keep blocks arriving at a target pace.
\item The difficulty-adjustment mechanism stops changing the winning probability once
the average round length equals its target. Following \citet{kimSeo2026mind}, we
characterize this \emph{invariant winning probability} under our mempool model and analyze its
stability. The invariant winning probability is stable when the winning probability
returns to it after a small disturbance.
\item We show that the expected mining gap is strictly
increasing in the aggregate hashpower under the stochastic mempool model. When
hashpower grows, the difficulty-adjustment mechanism makes the puzzle harder so that
blocks still arrive at the target pace. At a given level of fees, each unit of hashpower
then earns less per unit of time, so miners wait for more fees before they start. As the
aggregate hashpower grows without bound, the expected gap approaches the full length of
a round, and an attacker who mines continuously needs an ever smaller share of the
aggregate hashpower to outpace the honest chain. \citet{carlsten2016} reach a similar
conclusion in a simplified model in which no fees carry over between blocks.
\item We analyze the long-run equilibrium hashpower, the no-gap
condition, and the parameter effects on network security under the same model. By
parameter effects we mean the direction in which each security metric moves when a
single parameter, such as the block subsidy, increases while the others stay fixed.
Since miners keep joining the network only while their revenue covers both of their
costs, a positive capital cost keeps the equilibrium hashpower finite and stops the gap
from taking up the whole round. A larger block subsidy shortens the gap and raises the
equilibrium hashpower.
\item We estimate the inputs of the model from Bitcoin data of 2026 and follow the model
along the halving schedule. With free entry and the estimated costs, a gap forms only after
several more halvings, in 2044 or 2056 in our scenarios, or not before 2070 if fees fall in
bitcoin. It lasts about two and a half minutes per ten-minute round by 2070, and the
invariant winning probability stays stable. If no miner could enter or leave the network
after 2026, a gap would instead form after the next one or two halvings, unless the bitcoin
price rose enough to offset them, and the invariant winning probability would be unstable
for many years. A higher ratio of operating to capital cost, for example from cheaper
mining hardware, brings the gap forward, lengthens it and can make the invariant winning
probability unstable.
\end{itemize}

The rest of the paper is organized as follows. \Cref{sec:related} reviews related work.
\Cref{sec:model} describes the mining process and our model, and \cref{sec:objective}
derives the myopic activation rule. \Cref{sec:aggregate} derives the aggregate behavior
of the miners, and \cref{sec:backlog} derives the stationary distributions of the
backlog and the mining gap. \Cref{sec:daa} analyzes the difficulty-adjustment mechanism
and its invariant winning probability. \Cref{sec:metrics} defines the security metrics.
\Cref{sec:hashpower} shows how the mining gap grows with the aggregate hashpower, and
\cref{sec:entry} determines the equilibrium hashpower when miners can freely join or
leave the network. \Cref{sec:prevent,sec:reduce} study how the parameters prevent and
reduce the mining gap, and \cref{sec:numerical} illustrates the results with Bitcoin data.
\Cref{sec:discussion} discusses the scope of the results, and
\cref{sec:conclusion} concludes. Proofs are in the e-companion.
\section{Related Work}
\label{sec:related}
The literature most directly relevant to this paper is summarized in four parts:
game-theoretic models of the mining gap, queueing-based models of mempool dynamics,
security analysis with regard to miner inactivity, and non-atomic games as a
modeling framework.
\subsection{Gap Games}
\citet{carlsten2016} identify the mining gap and study it through
simulation. They show that once fees replace the block subsidy, a miner has an
incentive to leave part of the available fees unclaimed in a block. This leftover acts
as bait, because it makes extending that block more attractive to the next miner than forking
away from it to claim the fees outright. They call this behavior undercutting. Miners
also rationally delay rig activation when the available fee pool is insufficient to
cover operating costs. Their objective is a miner's profit accumulated over an epoch,
divided by the total time elapsed over that epoch. An epoch is a set of consecutive
rounds sharing the same difficulty level. The difficulty level is inversely
proportional to the winning probability, so the rounds of an epoch also share the same
winning probability. This differs from the profit-per-round objective used
elsewhere in the literature. Their simulation also allows fees left in the mempool at
the end of one round to carry over into the next round. A round's backlog is
therefore not reset to zero at each block.

\citet{tsabary2018} instead formalize the problem as a one-shot
\emph{gap game}. Each miner chooses a single activation time to maximize expected
revenue per round. The fee available to the winner grows deterministically from zero
at the start of that round. It has no dependence on rounds that came before. They
compute Nash equilibria for this game numerically. A key finding is that miners of
different sizes choose different gap lengths and that coalition formation among
miners is incentivized.

\citet{noguchi2025} and \citet{kimSeo2026mind} work within this
same per-round setting. Noguchi gives a closed-form equilibrium analysis for symmetric
$n$-miner gap games, handling the resulting non-differentiable payoffs through a
region-wide decomposition. Kim and Seo instead formulate an infinite-horizon optimal
control problem for each miner and establish an open-loop Nash equilibrium. Neither
treatment carries a fee backlog across blocks. A new round begins with no memory of
the fees left over from the previous one.

Two simplifications recur across this line of work. Our model departs from both. The
per-round objective, used by everyone except \citet{carlsten2016},
does not correspond to what a miner who can choose to wait across many rounds is
actually maximizing. The epoch-level profit-per-time objective used in that simulation
improves on this, and we place it on an analytical footing here.
Separately, the assumption of a mempool that resets after every block, used by
everyone except \citet{carlsten2016}, discards the stochastic backlog
of unconfirmed transactions carried over from prior rounds. We retain this backlog, as
\citet{carlsten2016} do, but derive its dynamics and stationary
behavior in closed form rather than through simulation.

Other work proposes changes to the protocol that shorten the gap.
\citet{shi2020capthegap} design an incentive mechanism that induces miners to switch on
their rigs earlier, and \citet{budinsky2023feeredistribution} pass part of each block's
fees to the miners of later blocks through smart contracts. We take the current protocol
as given and characterize the gap that it produces.
\subsection{Stochastic Mempool and Queueing Models}
\citet{kasahara2017} model Bitcoin's transaction confirmation
process as a two-priority bulk-service queue. They derive mean confirmation times for
high- and low-priority transactions. Their framework captures the two-tier fee
structure explicitly and was an important step toward quantitative mempool analysis.
It does not model miner incentives, gap formation, or the inter-round backlog
dynamics that arise when miners are selective about activation. Our model differs
from that work by tracking the high-fee backlog with possible mining gaps explicitly
across rounds as a Lindley-type recursion. We derive its stationary distribution in
closed form and obtain the gap distribution directly from it.
The rule that miners start only once enough high-fee transactions have accumulated
resembles the N-policy of queueing theory, under which an idle server is switched on
when the queue reaches a threshold
\citep{yadin1963removable,heyman1968optimal,teghem1986control}. Together with the
block capacity, it also resembles the general bulk service rule, under which service
starts only when enough customers wait and each service clears at most a fixed batch
\citep{neuts1967bulk,powell1986bulk}. In those models the threshold is a parameter of
the service policy. In our model it follows from the miners' activation rule and moves
with the winning probability that the difficulty-adjustment mechanism sets.
\subsection{Security Under Fee Regimes and Miner Inactivity}
\citet{doger2024} study the security-latency problem under general
random network delays. They extend the framework to settings where honest miners
require a minimum number of high-priority transactions before activating, formalizing
the security cost of idle phases. Their mempool assumption, that fees
reset to zero after every block, is relaxed in our model. \citet{doger2025} analyze double-spending probabilities under time-varying honest
mining rates using ruin theory. They treat the mining gap as one source of rate
variation. Both papers take the gap duration as an input to their security bounds
rather than deriving it from first principles. The closed-form expected gap we derive
supplies this missing input. On the adversarial strategy side, \citet{sarenche2024} show via reinforcement learning that the transition to a
fee-dominated reward regime can make selfish mining immediately profitable. This holds
without the two-week loss period that deters it under fixed block subsidies.
\subsection{Non-Atomic Games}
The formal treatment of populations of individually negligible agents originates with
\citet{aumann1964}. He models a market with a continuum of traders.
\citet{schmeidler1973} develops this into a general equilibrium theory of
games with a continuum of players. Within the mining literature,
\citet{carlsten2016} already appeal to a notion of non-atomic miners when
describing the limiting behavior of a large population in their simulation. More
recently, \citet{liReppenSircar2024} formulate cryptocurrency
mining as a mean-field game. Their model also represents miners as a continuum, with
each miner's own contribution to the population's aggregate hashpower treated as
negligible.
\section{Background and Model}
\label{sec:model}
\subsection{Blockchain and the Mining Process}
A blockchain is a distributed ledger that records transactions in a sequence of
blocks \citep{nakamoto2008}. No central authority governs the ledger. Instead, a decentralized set of
participants called miners compete to append the next block. Each miner
collects a batch of unconfirmed transactions, assembles them into a candidate block,
and repeatedly attempts to find a solution to a cryptographic puzzle. This process is
known as PoW \citep{nakamoto2008,bitcoinDevBlockChain}. The puzzle requires a miner to find a value such
that hashing it together with the block header produces an output below a
protocol-specified target. There is no shortcut to finding such a value, meaning that a miner must
try many hash attempts, and each attempt succeeds with a small probability. We call
this probability the winning probability and denote it by $w$. A lower winning probability means
a harder puzzle, since more attempts are then needed on average to find a valid
solution. The protocol expresses the hardness of the puzzle as a difficulty level,
which is inversely proportional to $w$. \citet{kimSeo2026mind} call the same quantity
the winning rate. We use the term probability because $w$ refers to a single hash
attempt and does not depend on time. The rate
at which a miner can perform these attempts is called the miner's hashpower.
A miner with hashpower $h$ therefore finds blocks at rate $wh$ per unit of time.
The miner who finds a valid solution first
broadcasts the block to the network. The remaining participants then verify the solution and
append the new block to their local copy of the chain.
\subsection{Epochs and the Difficulty-Adjustment Mechanism}
The protocol recalibrates the winning probability periodically rather than continuously. In
Bitcoin, recalibration occurs every $E=2016$ blocks, roughly two weeks
\citep{bitcoinDevBlockChain}. We call a run of $E$ consecutive rounds sharing a common
winning probability an epoch. We index epochs by $j=1,2,\dots$ and index rounds within
an epoch by $n=1,\dots,E$, so round $n$ of epoch $j$ has winning probability $w^{(j)}$,
constant across $n$.

At the end of epoch $j$, the protocol measures the realized average round length over
that epoch,
\begin{equation}
\bar T^{(j)} \;:=\; \frac{1}{E}\sum_{n=1}^E T^{(j,n)},
\label{eq:epochavg}
\end{equation}
where $T^{(j,n)}$ is the realized length of round $n$ of epoch $j$, and sets the next
epoch's winning probability to
\begin{equation}
w^{(j+1)} \;=\; w^{(j)}\cdot\mathrm{clamp}_{[1/4,4]}\!\left(\frac{\bar T^{(j)}}{B}\right),
\label{eq:daaprotocol}
\end{equation}
where $B$ is the protocol's target round length, ten minutes for Bitcoin
\citep{bitcoinDevBlockChain}, and the clamp restricts any single adjustment to at most
a fourfold increase or decrease. If the previous epoch ran faster than target, the
winning probability falls and the puzzle becomes harder. If it ran slower, the winning probability
rises.

Equation \eqref{eq:daaprotocol} is our model of Bitcoin's difficulty-adjustment
mechanism, the rule that is in force in the Bitcoin protocol today, and every result
in this paper is derived under it. \citet{kimSeo2026mind} model the mechanism in the
same way. \citet{tsabary2018} and \citet{noguchi2025} also require the winning
probability to settle at the value where the expected round length equals the
protocol's target. For brevity, we refer to this rule as the difficulty-adjustment
mechanism from here on.
\subsection{The Player Space}
\label{sec:playerspace}
We model the population of participants supplying hashpower to the network as a
continuum of individually negligible units. This choice reflects the scale of
Bitcoin mining, where the total hashpower is spread over a very large number of
mining machines and each machine contributes a negligible share. Representing
participants as a continuum is the approach of non-atomic games
\citep{aumann1964,schmeidler1973}, and its use in the mining literature is reviewed in
\cref{sec:related}. Let $I=[0,1]$ with $\mathcal I$ the Borel subsets of $I$, and let
\begin{equation}
(I,\mathcal I,\eta)
\label{eq:playerspace}
\end{equation}
be a non-atomic measure space. Let $\eta$ be a finite measure on $(I,\mathcal I)$
with total mass
\begin{equation}
\eta(I) \;=\; \gamma,
\label{eq:totalmass}
\end{equation}
where $\gamma>0$ is the total hashpower available to the network. We also call $\gamma$
the \emph{aggregate hashpower}, since it adds up the hashpower of all honest participants. The interval
$I=[0,1]$ is an index set. Nothing in the model depends on its length. A subset $A\subseteq I$
represents a group of participants, and $\eta(A)$ is the hashpower that group
controls collectively. Readers unfamiliar with measure theory can read $\eta$ as a
rule that assigns to every group of participants the total hashpower of that group.
The Borel subsets are the groups for which this total is well defined, and every
group that arises in our analysis is one of them. Accordingly, an integral
$\int_A f(i)\,d\eta(i)$ adds up a quantity $f(i)$ over all participants in $A$, with
each participant weighted by its hashpower.

Non-atomicity of $\eta$ gives $\eta(\{i\})=0$ for every $i\in I$. This equality holds
exactly, for every $i$ individually and for every $\gamma$. No participant carries
positive weight and none can move a quantity computed as an integral over $I$ by any
positive amount.
Read informally, this is the mathematical statement that ``no individual participant
can manipulate the network, and every miner is negligible on their own.'' Only the
aggregate distribution of hashpower across the continuum affects outcomes. A single
participant's decision, taken alone, changes nothing that the model tracks.

The non-atomic population is a deliberate restriction. It excludes participants who
hold a non-negligible share of the hashpower, such as large mining pools. Such
participants can profit from strategies that a negligible participant cannot use,
for example withholding their own blocks in selfish mining \citep{eyal2018majority}, or
slowing down their mining to lower the difficulty and then exploiting the easier puzzle
\citep{fiat2019energy}.
As \citet{carlsten2016} note, real miners are atomic, but the non-atomic model
isolates the behavior that does not depend on a miner's size. Finite-player models
of the mining gap cover large participants \citep{tsabary2018,noguchi2025,kimSeo2026mind},
and a model that combines negligible participants with a few large ones is left for
future research.
\subsection{The Mempool}
\label{sec:mempool}
Before a transaction is confirmed into a block, it waits in a buffer called the
memory pool, or mempool \citep{budish2025trust}. Miners inspect the mempool and choose which
transactions to include in their candidate block. Each block has a fixed capacity
$s_b$, measured in transaction units. When the mempool is congested and more
transactions are waiting than a single block can hold, miners must select among them.

We restrict attention to the network in a healthy state and assume no forking occurs,
and the results we derive are meant to hold within that regime. A block, once found, is immediately
adopted by the network, and no participant can remove it by extending a rival chain.
A fork occurs when two competing blocks extend the same chain, for example because a
new block takes time to reach every participant and part of the network briefly
builds on a different block \citep{decker2013propagation}. One of the competing blocks is eventually discarded,
and the work spent on it is wasted. We exclude forks deliberately, and the analytical
models of the mining gap make the same restriction. \citet{tsabary2018} let all
miners move on to the next block as soon as one is found, and \citet{kimSeo2026mind}
disregard propagation delays so that all miners observe the same mempool. The
simulation study of \citet{carlsten2016} is an exception. It shows that when fees
dominate miner revenue, miners may fork a block on purpose to claim the fees it
contains. We view the absence of forks as the favorable case for the network, since
forks waste honest work and create further incentives to attack, so we expect the
problems identified here to persist when forks occur. A model of the mining gap that
includes forks is left for future research.

A single participant $i\in I$ is an infinitesimally small share of the network's
hashpower. Given that $i$ is the one confirmed in some round, taking the available
fee now secures it outright. Waiting to collect it later is not a rational action for
a participant this small. Because $i$'s hashpower is infinitesimal, the probability
that $i$ itself is the one who eventually mines the block containing that fee is
zero, however the winning probability evolves. \citet{carlsten2016} make the same
observation about non-atomic miners, who care only about the block they have found
because they will never find another one. We
therefore assume every candidate block is filled with the highest-fee transactions
available, up to capacity $s_b$, consistent with observed miner behavior
\citep{bitcoinDevTransactions}. \citet{kimSeo2026mind} fill blocks in the same way.

We envision two classes of transactions. \emph{Type-1 transactions} carry a high fee $r_1$
per unit and arrive to the mempool at a constant rate $\lambda$. \emph{Type-0
transactions} carry a lower fee $r_0 < r_1$ per unit and are assumed to exist in an
infinite reservoir. Whenever block capacity remains after including all available Type-1
transactions, a miner fills the rest with Type-0 transactions.

Reducing the mempool to two fee tiers is a common simplification.
Priority-queue models of Bitcoin's confirmation process adopt the same reduction.
\citet{kasahara2017} split pending transactions into a small
number of priority classes to keep the confirmation process tractable. Fee-estimation
tools built on live mempool data follow a comparable practice. They quote a handful of
discrete priority tiers rather than a continuum of prices~\citep{mempoolspace2026}.
\citet{doger2024} use the same two-tier structure in their security analysis of idle
honest miners. In their model, high-priority transactions arrive over time and
low-fee transactions fill the rest of each block. Our two-tier structure is this same
reduction taken to its simplest case. It does not capture every individual
transaction's fee. It does, however, lead to an analytically tractable model. A
mempool with many fee levels and strategic fee bidding \citep{huberman2021monopoly}
would require a considerably
more complex analysis, which we leave for future research.

We let $X^{(j,n)}$ denote the number of Type-1 transactions that carry over from round
$n-1$ of epoch $j$ into round $n$ of epoch $j$, meaning that they were not included in
the previous block. If $n=1$, the carryover instead comes from round $E$ of epoch
$j-1$. There is no reset at an epoch boundary, since the mempool does not reset when
the protocol recalibrates $w$. We call $X^{(j,n)}$ the \emph{Type-1 backlog} at the
start of that round. It is the central stochastic quantity in our model. The dynamics
of $X^{(j,n)}$ within an epoch form a recursion that we analyze in detail in the
subsequent sections.
\subsection{Block Reward}
A participant who finds the block in round $n$ of epoch $j$ at elapsed time $t$ collects two sources of revenue.
The first is a fixed block subsidy $R_b \ge 0$ issued by the protocol.
In Bitcoin, the subsidy changes only at a halving, which occurs every 210{,}000
blocks, roughly every four years \citep{bitcoinDevBlockChain}. One halving period
therefore contains about 104 epochs of $E=2016$ blocks. Our analysis concerns
individual rounds, epochs, and the long-run behavior of the network between two
halvings, and over this horizon the subsidy is constant. The effect of a different
subsidy level, such as the one after the next halving, is captured by comparing
different values of $R_b$. \citet{tsabary2018} and \citet{liReppenSircar2024} also
treat the subsidy as fixed.
The second is the transaction fee revenue collected from the transactions included in the block.

At elapsed time $t$, the available Type-1 supply is the carryover backlog $X$ from the previous round plus the new arrivals, given by $X + \lambda t$.
If this total Type-1 supply is smaller than the block capacity $s_b$, the block contains all of it.
The remaining $s_b - X - \lambda t$ units are then filled with Type-0 transactions.
If the Type-1 supply meets or exceeds $s_b$, the block is saturated entirely with Type-1 transactions.

The total reward for finding the block at elapsed time $t$ is therefore given by the function
\begin{equation}
\begin{aligned}
R(X, t) \;:=\;& R_b + \min\Bigl((X + \lambda t) r_1 \\
&+ (s_b - X - \lambda t) r_0,\;s_b\, r_1
\Bigr)
\end{aligned}
\label{eq:reward}
\end{equation}
This quantity is non-decreasing in $t$.
As more high-fee transactions arrive, the total reward grows until block saturation.
Before saturation, it grows linearly in $t$. \citet{kimSeo2026mind} report that an
affine reward function approximates observed Bitcoin fee data, and \citet{tsabary2018}
also let the fee reward grow with the time elapsed since the last block.
It reaches a maximum possible value $R_{\max} := R_b + s_b r_1$.
Further Type-1 arrivals beyond $s_b$ do not increase the fee revenue.
\subsection{Costs}
Each unit of hashpower $i\in I$ faces an operating cost rate $c_{\mathrm{op}}(i)>0$, charged
per unit of hashpower and per unit of time. This cost is paid only while $i$ is
active. A group $A\subseteq I$ active over an interval of length $t$ therefore pays
$t\int_A c_{\mathrm{op}}(i)\,d\eta(i)$ in total. As explained in \cref{sec:playerspace}, this
integral adds up the cost rates of all participants in $A$, each weighted by its
hashpower. $c_{\mathrm{op}}(\cdot)$ may vary arbitrarily across $I$. We place no
restriction on it beyond measurability and positivity for now, and postpone any
restriction on its cross-sectional shape to \cref{sec:aggregate}.

A separate capital cost rate $c_{\mathrm{cap}}(i)>0$ is paid per unit of hashpower and per unit
of time regardless of whether $i$ is active. Because it is paid either way, it plays
no role in the activation decision. This is the logic of a temporary shutdown, as in
models of mines and firms that operate only while their revenue covers their variable
cost \citep{brennan1985natural,mcdonald1985shutdown}. We set it aside until the free-entry analysis of
a later section.
\subsection{Strategies}
\label{sec:strategies}
At any elapsed time $t \ge 0$ during round $n$ of epoch $j$, each participant $i \in I$ decides
what fraction of their hashpower to supply to the network. Let $\mathbb{H}$ denote the space of
admissible public histories. We write $\mathcal{H}^{(j,n)}(t) \in \mathbb{H}$ for the realization
of the public history up to elapsed time $t$. Every participant observes $\mathcal{H}^{(j,n)}(t)$. No
participant sees anything beyond it.

A strategy for participant $i \in I$ is a non-anticipative mapping
\begin{equation}
a_i : \mathbb{H} \to [0,1],
\end{equation}
where $a_i\bigl(\mathcal{H}^{(j,n)}(t)\bigr)$ specifies the active fraction of participant $i$'s
hashpower at time $t$ under history $\mathcal{H}^{(j,n)}(t)$. We denote by $\mathcal{S}$ the common strategy
space consisting of all such mappings. A mapping is non-anticipative when the choice it
makes at time $t$ uses only the history up to $t$ and no information about the future.

A population strategy profile is an assignment $\mathbf{a} : I \to \mathcal{S}$ under which the active fraction $a_i(h)$ is measurable jointly in the participant $i$ and the history $h$, so that the aggregate in \eqref{eq:active_hashpower} below is well defined. It
assigns a strategy $a_i \in \mathcal{S}$ to each participant $i \in I$. To analyze individual
incentives, we denote the strategy profile of all participants other than $i$ as
$\mathbf{a}_{-i}$. Under profile $\mathbf{a}$, the aggregate active hashpower operating
on the network at history $\mathcal{H}^{(j,n)}(t)$ is given by the integral
\begin{equation}
\gamma_{\mathrm{act}}\!\left(\mathcal{H}^{(j,n)}(t)\right) \;:=\; \int_I a_i\left(\mathcal{H}^{(j,n)}(t)\right) d\eta(i)
\label{eq:active_hashpower}
\end{equation}

By definition, the active hashpower satisfies $\gamma_{\mathrm{act}} \le \gamma$, with equality holding if and only if every participant operates at full capacity.
\section{Miner Objective and the Myopic Activation Rule}
\label{sec:objective}
We now formalize the optimization problem for an individual participant.
At elapsed time $t$ in round $n$ of epoch $j$, the network operates with a total active hashpower $\gamma_{\mathrm{act}}(t)$ and a winning probability $w^{(j)}$.
Here and below we write $\gamma_{\mathrm{act}}(t)$ for $\gamma_{\mathrm{act}}\bigl(\mathcal{H}^{(j,n)}(t)\bigr)$, dropping the history argument for convenience whenever the round and epoch are clear from context.
The probability that no block has been found by time $t$ is given by the survival function
\begin{equation}
\bar{F}\bigl(t \mid X, \mathbf{a}_{-i}\bigr) \;:=\; \exp\left(-\int_0^t w^{(j)} \gamma_{\mathrm{act}}(u) du\right).
\end{equation}

If participant $i$ is active at elapsed time $t$, they discover blocks at rate $w^{(j)}$ and earn the reward $R(X, t)$, for a net payoff rate of $\pi_i(t) := w^{(j)} R(X, t) - c_{\mathrm{op}}(i)$.
The expected profit per unit of capacity over a single round is the integral of this payoff rate weighted by the survival probability.
We define this single-round expected profit as
\begin{equation}
\Pi_i^{(j,n)}(a_i ; \mathbf{a}_{-i}) \;:=\; \int_0^\infty a_i(t)\, \pi_i(t)\, \bar{F}\bigl(t \mid X, \mathbf{a}_{-i}\bigr)\, dt.
\end{equation}

Miners are economic agents who plan to maximize their profit across as many epochs as the network runs.
The rational objective for participant $i$ is therefore best represented as the long-run expected profit rate.
We define this objective directly over an infinite horizon, as the limiting ratio of cumulative expected profit to cumulative expected time. Writing $\bar\Pi_i^{(j)}(a_i ; \mathbf{a}_{-i}) := \sum_{n=1}^E \Pi_i^{(j,n)}(a_i ; \mathbf{a}_{-i})$ for the participant's total profit over epoch $j$, and recalling the epoch-average round length $\bar T^{(j)}$ from \eqref{eq:epochavg}, the objective is
\begin{equation}
\rho_i(a_i ; \mathbf{a}_{-i}) \;:=\; \liminf_{J \to \infty} \frac{\sum_{j=1}^J \mathbb{E}\bigl[\bar\Pi_i^{(j)}(a_i ; \mathbf{a}_{-i})\bigr]}{E\cdot\sum_{j=1}^J \mathbb{E}\bigl[\bar T^{(j)}\bigr]}.
\end{equation}

To solve this optimization problem, we establish the exogeneity of the network state.
Because the population is non-atomic, participant $i$ has a measure of exactly zero ($\eta(\{i\}) = 0$).
Any unilateral change to the individual strategy $a_i$ alters the population profile on a set of measure zero.
This implies exact invariance of the aggregate.
The active hashpower integral $\gamma_{\mathrm{act}}(t)$ is identical under any choice of $a_i$.

This exact invariance propagates through every level of the network dynamics.
First, the in-round survival function $\bar{F}$ remains unchanged.
Second, the probability distribution of the round duration $T^{(j,n)}$ is unchanged.
The carryover backlog $X$ for the subsequent round depends only on transaction arrivals and the realized duration of the previous round.
Therefore, the inter-round backlog trajectory is independent of participant $i$'s actions.
Finally, the expected durations of all rounds in an epoch are invariant.
The average epoch duration $\bar{T}^{(j)}$ is unchanged.
The difficulty-adjustment mechanism consequently yields the exact same winning probability $w^{(j+1)}$ for the next epoch.

The entire trajectory of public states $(X, w)$ and the expected time denominator in the long-run objective are exogenous to participant $i$.
The optimization of $\rho_i$ therefore decouples across rounds.
Participant $i$ maximizes the long-run ratio by maximizing the numerator $\Pi_i^{(j,n)}$ independently for each round.
For a non-atomic participant, maximizing the long-run objective is therefore equivalent to maximizing the expected profit of each round separately, which is the per-round objective of earlier gap games (\cref{sec:related}).
Our setting thus justifies that objective for individually negligible miners.
For a miner with a non-negligible share of the hashpower, whose actions move the backlog and the winning probability, the two objectives need not coincide.

The single-round profit $\Pi_i^{(j,n)}$ is a linear functional of the choice variable $a_i(t)$.
The survival function $\bar{F}$ acts as a strictly positive weight.
Participant $i$ maximizes the integral by setting $a_i(t) = 1$ whenever the instantaneous net profit rate is non-negative, and $a_i(t) = 0$ otherwise.
This pointwise maximization yields the optimal myopic activation rule:
\begin{equation}
a_i^*(t) \;=\;
\begin{cases}
1 & \text{if } w^{(j)} R(X, t) \ge c_{\mathrm{op}}(i) \\
0 & \text{otherwise.}
\end{cases}
\label{eq:myopic_rule}
\end{equation}

The reward function $R(X, t)$ is non-decreasing in elapsed time $t$.
The instantaneous net profit rate therefore crosses the zero threshold at most once per round.
This single crossing point ensures that the optimal strategy is a single continuous active phase.
The participant activates their hardware at the exact moment profitability is reached and remains active until the round ends.

The argument above holds for every profile $\mathbf{a}_{-i}$ of the other participants, so the myopic rule \eqref{eq:myopic_rule} is a weakly dominant strategy.
No other strategy gives participant $i$ a higher long-run profit rate, whatever strategies the other participants use.
The dominance is weak because a strategy that differs from \eqref{eq:myopic_rule} only on a set of times of zero length, such as the instant at which $w^{(j)} R(X, t) = c_{\mathrm{op}}(i)$, earns the same profit rate.
\begin{proposition}
\label{prop:myopic}
Fix any profile $\mathbf{a}_{-i}$ of the other participants under which rounds have a
finite expected length. Among all strategies of participant $i$, the myopic rule
\eqref{eq:myopic_rule} maximizes the long-run profit rate $\rho_i$, so it is weakly
dominant. Under this rule, participant $i$ is active from the first time at which
$w^{(j)}R(X,t)\ge c_{\mathrm{op}}(i)$ until the round ends.
\end{proposition}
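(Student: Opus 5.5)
The plan follows the informal argument given before the statement, making three points rigorous: the exogeneity of the network state, the decoupling of the ratio objective, and the pointwise maximization within each round.

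\emph{Step 1 (exogeneity).} Fix $\mathbf{a}_{-i}$ and compare two profiles that differ only in the strategy of participant $i$. Since $\eta(\{i\})=0$, the integrand in \eqref{eq:active_hashpower} changes only on a $\eta$-null set, so $\gamma_{\mathrm{act}}(h)$ is identical for every history $h$. I will argue by induction over rounds (ordered lexicographically in $(j,n)$) that the joint law of $(X^{(j,n)},w^{(j)},T^{(j,n)})$ does not depend on $a_i$. Given the state at the start of a round, $\bar F(\cdot\mid X,\mathbf{a}_{-i})$ depends only on $w^{(j)}$ and $\gamma_{\mathrm{act}}$, hence so does the law of $T^{(j,n)}$. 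The next backlog is a measurable function of $X^{(j,n)}$, the arrivals $\lambda T^{(j,n)}$, and $s_b$. The next winning probability is given by \eqref{eq:daaprotocol}, a function of the realized round lengths. Consequently the denominator $E\sum_{j\le J}\mathbb{E}[\bar T^{(j)}]$ is a constant $D_J$, independent of $a_i$. The finite-expected-length hypothesis ensures $0<D_J<\infty$, so the ratio is well defined.

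\emph{Step 2 (pointwise optimality).} Condition on the start-of-round state. I will show that for every strategy $a_i$,
\[
\Pi_i^{(j,n)}(a_i;\mathbf{a}_{-i})\le \int_0^\infty \bigl(\pi_i(t)\bigr)^+\,\bar F(t\mid X,\mathbf{a}_{-i})\,dt = \Pi_i^{(j,n)}(a_i^*;\mathbf{a}_{-i}).
\]
The inequality holds because $a_i(t)\in[0,1]$ and $\bar F>0$, and equality is attained by \eqref{eq:myopic_rule}. Integrability follows from $|\pi_i|\le wR_{\max}+c_{\mathrm{op}}(i)$ together with $\int_0^\infty\bar F\,dt=\mathbb{E}[T\mid X]$, which is finite almost surely. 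Non-anticipativity of $a_i^*$ is immediate, since it uses only $X$, $t$ and $w^{(j)}$, all of which are public at time $t$.

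\emph{Step 3 (aggregation).} Take expectations and sum over rounds. By Step 1, the same state law appears under $a_i$ and $a_i^*$, so the partial sums satisfy $\sum_{j\le J}\mathbb{E}[\bar\Pi_i^{(j)}(a_i)]\le\sum_{j\le J}\mathbb{E}[\bar\Pi_i^{(j)}(a_i^*)]$ for every $J$. Dividing by the common $D_J$ and taking $\liminf$, which is monotone, gives $\rho_i(a_i)\le\rho_i(a_i^*)$. Since $\mathbf{a}_{-i}$ was arbitrary, $a_i^*$ is weakly dominant.

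\emph{Step 4 (single active phase).} Note that $t\mapsto R(X,t)$ in \eqref{eq:reward} is the minimum of a nondecreasing affine function (since $r_1>r_0$) and a constant, so it is continuous and nondecreasing. Hence $\{t: w^{(j)}R(X,t)\ge c_{\mathrm{op}}(i)\}$ is a closed up-set $[\tau_i,\infty)$, possibly empty or equal to $[0,\infty)$. The rule \eqref{eq:myopic_rule} is therefore active from $\tau_i$ until the block is found.

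The main obstacle is Step 1. The wording ``identical'' must be made rigorous, and in particular one must be careful that the distribution of the history process, and not only its expectation, is unchanged, because $w^{(j+1)}$ is a nonlinear (clamped) function of realized lengths. I would handle this by constructing all rounds on a common probability space, with arrivals deterministic at rate $\lambda$ and block times generated by a time-change of a unit Poisson process through $\int w\gamma_{\mathrm{act}}$. This makes the entire path a pathwise function of quantities unaffected by $a_i$, so invariance holds sample by sample.
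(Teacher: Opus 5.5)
Your proposal is correct and follows the paper's proof essentially step for step. Because $\eta(\{i\})=0$, the aggregate $\gamma_{\mathrm{act}}$ does not change with $a_i$, so the state law and the denominator of $\rho_i$ do not depend on $a_i$. The pointwise bound $a_i\pi_i\le\pi_i^+=a_i^*\pi_i$, weighted by $\bar F>0$, then gives domination round by round, and summing over rounds against the common denominator gives $\rho_i(a_i)\le\rho_i(a_i^*)$. Finally, monotonicity of $R(X,t)$ in $t$ gives the single active phase. Your time-changed Poisson coupling and your continuity argument for the closed set $[\tau_i,\infty)$ only make explicit what the paper states in one line each.
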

\begin{remark}
The derived myopic rule places no restrictions on the profile $\mathbf{a}_{-i}$ of the other participants.
A non-atomic participant maximizes their long-run objective even when the rest of the network adopts arbitrary strategies.
Furthermore, this rule applies to heterogeneous operating costs without modification.
A larger $c_{\mathrm{op}}(i)$ implies a later activation threshold.
This outcome parallels the findings of finite-player mining gap models in the broader literature, which similarly conclude that higher operating costs delay activation \citep{kimSeo2026mind}.
The non-atomic framework, however, reaches this conclusion as an immediate consequence of exact invariance.
It avoids the complex equilibrium constructions required when participants hold a non-negligible share of the hashpower.
\end{remark}
\subsection{Operating Regime and Stability Restrictions}
\label{sec:regime}
To ensure the network functions continuously, we restrict our analysis to a sustainable operating regime.
This requires establishing structural boundaries on the reward mechanics and the transaction arrival rate.

The maximum possible reward a participant can extract from a single block is $R_{\max} := R_b + s_b r_1$.
We require the condition $w^{(j)} R_{\max} > c_{\mathrm{op}}(i)$ to hold for at least some subset of participants $A \subset I$ with positive measure $\eta(A) > 0$.
If this condition fails for the entire population, the net profit rate remains negative regardless of the accumulated transaction volume.
Participants remain idle.
This idleness triggers natural market and protocol corrections.
Users face prolonged wait times and increase their fee bids \citep{easley2019fees}.
This establishes a new two-tier fee structure with a higher maximum reward.
Simultaneously, the extended round duration decreases the epoch's average block rate.
The difficulty-adjustment mechanism then yields a higher winning probability $w^{(j+1)}$ for the subsequent epoch.
Both corrections naturally restore profitability.
We therefore model the network in its stabilized state where a positive measure of participants can already mine profitably.
\citet{kimSeo2026mind} adopt a similar assumption, arguing that if mining ceases for an extended period, the accumulating fees draw miners back.

A natural assumption is that the arrival rate of high-fee transactions does not overwhelm the network's clearing capacity.
If the expected number of Type-1 arrivals per round exceeds the block capacity $s_b$, the backlog grows indefinitely.
Type-1 bidders facing infinite wait times naturally escalate their bids to bypass the congestion.
This market reaction simply establishes a new two-tier fee structure at elevated price levels.
Our analysis therefore models the network after the market has settled into a sustainable fee equilibrium.
We analyze only Type-1 arrival rates that are small enough to maintain a stable backlog across rounds.
\citet{doger2024} make a similar assumption, taking the arrival rate of high-priority transactions small enough that their number in the mempool stays below the block capacity.
The transition periods in which fees and the winning probability are still adjusting are outside the scope of this paper and are left for future research.
\section{Aggregate Behavior Under a Common Cost}
\label{sec:aggregate}
The myopic rule \eqref{eq:myopic_rule} holds for an arbitrary measurable cost
function $c_{\mathrm{op}}:I\to(0,\infty)$, but the resulting aggregate
$\int_I a_i^*(\cdot)\,d\eta(i)$ then depends on the full cross-sectional
distribution of costs across $I$. To obtain a tractable
aggregate, we restrict attention to the case where operating costs are homogeneous
across the population. This restriction is a common baseline in the mining gap and blockchain mining literature
\citep{tsabary2018,liReppenSircar2024,noguchi2025,kimSeo2026mind,budish2025trust}.
\citet{noguchi2025}, for example, assumes identical costs across miners and notes that
this is done for mathematical tractability.
When every participant faces the same threshold, the population acts as a
single unit and no participant activates their rig earlier than any other.
Heterogeneity in costs would spread activation times
across a range of backlog levels, and participants with lower costs would begin
mining before those with higher costs. The resulting aggregate would no longer be a
single threshold, and the backlog analysis of \cref{sec:backlog} would need a
different approach. We leave heterogeneous costs for future research.

The myopic rule \eqref{eq:myopic_rule} was derived without any restriction on the
cross-sectional shape of $c_{\mathrm{op}}$. We impose the restriction below for the aggregate
analysis of this section and all subsequent sections.
\begin{assumption}[Common Operating Cost]
\label{ass:commoncost}
There exists $\bar c_{\mathrm{op}}>0$ such that $c_{\mathrm{op}}(i)=\bar c_{\mathrm{op}}$ for
$\eta$-almost every $i\in I$.
\end{assumption}

The phrase ``$\eta$-almost every $i$'' means every participant except possibly a group
that controls zero hashpower in total. Such a group contributes nothing to any
integral over $I$, so it cannot affect the active hashpower or any other aggregate
quantity. The assumption therefore amounts to a common operating cost for the network
as a whole.

Under \cref{ass:commoncost}, the myopic rule activates participant $i$ at
elapsed time $t$ if and only if $w\,R(X,t)\ge\bar c_{\mathrm{op}}$. Because $R(X,t)$
is non-decreasing in the total Type-1 supply $X+\lambda t$, this condition is
met once $X+\lambda t$ reaches a level that depends only on $w$. We call this
level the \emph{mining threshold}:
\begin{equation}
s_1(w) \;:=\; \max\!\left(
\frac{\bar c_{\mathrm{op}}/w \;-\; R_b \;-\; s_b\,r_0}{r_1-r_0},\;0
\right).
\label{eq:s1common}
\end{equation}

The expression inside the max is the total Type-1 supply at which the reward
rate $w\,R(X,t)$ first covers $\bar c_{\mathrm{op}}$. It is obtained by solving
$w\bigl[R_b + s_b\,r_0 + (X+\lambda t)(r_1-r_0)\bigr] = \bar c_{\mathrm{op}}$
for $X+\lambda t$ in the non-saturated regime. The max with zero captures the
case where even an empty mempool already provides sufficient reward. In that case,
$s_1(w)=0$ and no gap forms.

Under the myopic rule, each participant's activation depends on the history
$\mathcal{H}^{(j,n)}(t)$ only through the current total Type-1 supply
$x = X+\lambda t$. Recall that \cref{sec:objective} already writes $\gamma_{\mathrm{act}}(t)$ for $\gamma_{\mathrm{act}}\bigl(\mathcal{H}^{(j,n)}(t)\bigr)$. Here we overload the same shorthand once more and write $\gamma_{\mathrm{act}}(x)$ in place of
$\gamma_{\mathrm{act}}(\mathcal{H}^{(j,n)}(t))$ from
\eqref{eq:active_hashpower}. Every participant outside a null set shares the
threshold $s_1(w)$, so the aggregate active hashpower at total Type-1 supply
$x$ is
\begin{equation}
\begin{aligned}
\gamma_{\mathrm{act}}(x)
\;&=\;\int_I \mathbf{1}\{x\ge s_1(w)\}\,d\eta(i)\\
\;&=\;\gamma\cdot\mathbf{1}\{x\ge s_1(w)\}
\label{eq:aggregatestep}
\end{aligned}
\end{equation}

The population is either fully active or fully idle. The null set on which
costs may differ contributes nothing to the integral, since $\eta$ assigns it
zero mass.
\section{Backlog Dynamics}
\label{sec:backlog}
The winning probability $w$ is constant within a single epoch and changes only at
epoch boundaries through the difficulty-adjustment mechanism
\eqref{eq:daaprotocol}. For the analysis of this section we fix $w$ and
study the backlog recursion within one epoch. The evolution of $w$ across
epochs is deferred to \cref{sec:daa}. This incurs no loss of generality.
Every quantity derived below is a function of $w$, and substituting
$w = w^{(j)}$ recovers the epoch-specific behavior.
The mining threshold $s_1(w)$ from \eqref{eq:s1common} is one such quantity, and it is constant throughout the epoch because $w$ is. For the remainder of this section we therefore drop the argument and write $s_1$ for $s_1(w)$, reintroducing the explicit dependence on $w$ once \cref{sec:daa} lets $w$ vary again.
For convenience, we write $\mu := w\gamma$ for the aggregate block-finding rate when the full
population is active.
\subsection{The Recursion}
At the start of round $n$ the mempool holds backlog $X^{(n)}$. If
$X^{(n)} \geq s_1$, the full population is active immediately. If
$X^{(n)} < s_1$, the population remains idle until Type-1 arrivals,
which occur at the constant rate $\lambda$, bring the backlog up to $s_1$. In either case, mining begins with
$\max(X^{(n)},\,s_1)$ Type-1 transactions present. Once mining begins, the population finds a block after a random duration $T_\mu^{(n)}$, the time until the first success at the aggregate rate $\mu$, so $T_\mu^{(n)} \sim \mathrm{Exp}(\mu)$. During this active
phase a further
$\lambda T_\mu^{(n)}$ Type-1 transactions arrive. The block then clears up
to $s_b$ of them, and the remainder carries into round $n+1$:
\begin{equation}
X^{(n+1)} = \Bigl(\max\bigl(X^{(n)},\,s_1\bigr) + \lambda\,T_\mu^{(n)}
- s_b\Bigr)^{+}.
\label{eq:recursion}
\end{equation}

This is a Lindley-type recursion with a reflecting barrier at zero and a
floor at $s_1$. The floor arises because mining never begins below the
threshold.
\subsection{Stationary Backlog Distribution}
Each epoch contains $E$ rounds. In Bitcoin, $E = 2016$. Under the
stability condition $\lambda/\mu < s_b$ discussed in \cref{sec:regime}, the recursion \eqref{eq:recursion} is ergodic. That is,
it admits a unique stationary distribution, the long-run distribution that the
backlog settles into, and converges to it from every initial condition. We treat the within-epoch distribution as stationary.
This is an approximation. The first few rounds of each epoch carry a
transient component inherited from the previous epoch's terminal state.
We neglect this component.

Write $\theta := \mu/\lambda$, $p_0 := \Pr(X = 0)$ for the probability that the
stationary backlog $X$ is zero, and $f_X$ for its density on $(0,\infty)$. The
stationary distribution depends on $\theta$ through the characteristic equation
\begin{equation}
K = \theta\bigl(1 - e^{-Ks_b}\bigr).
\label{eq:char}
\end{equation}
Under the stability condition, this equation has exactly one positive root
(\cref{lem:root}), which we denote by $K$. It has the closed form $K = \theta + W_0\bigl(-\theta s_b e^{-\theta s_b}\bigr)/s_b$, where
$W_0$ is the principal branch of the Lambert W function, the inverse of $z\mapsto ze^{z}$
on $[-1,\infty)$ \citep{corless1996lambert}.

The stationary distribution of the backlog is then
fully specified:
\begin{equation}
\begin{aligned}
p_0 &= 1 - e^{-K(s_b - s_1)}, \\\qquad
f_X(x) &= K\,e^{-K(s_b-s_1)}\,e^{-Kx}, \quad x > 0
\label{eq:stationarybacklog}
\end{aligned}
\end{equation}
\begin{proposition}
\label{prop:backlog}
Suppose $\lambda/\mu<s_b$ and $s_1<s_b$. The recursion \eqref{eq:recursion} has a unique
stationary distribution, and the backlog converges to it in distribution from every
initial condition. Under this distribution, $\Pr(X>x)=e^{-K(s_b-s_1+x)}$ for $x\ge0$, so
$X$ has the atom $p_0$ at zero and the density $f_X$ of \eqref{eq:stationarybacklog},
where $K$ is the unique positive root of \eqref{eq:char}.
\end{proposition}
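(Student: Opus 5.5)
The plan is to recast the recursion as a standard Lindley recursion, existence and uniqueness first and the exponential form second. Set $Y^{(n)} := \max(X^{(n)},s_1)-s_1 = (X^{(n)}-s_1)^+$ and $A^{(n)} := \lambda T_\mu^{(n)} \sim \mathrm{Exp}(\theta)$, where $\theta=\mu/\lambda$. The map $x\mapsto \max(x,s_1)$ is monotone and nonexpansive, and so is the whole update $x\mapsto(\max(x,s_1)+a-s_b)^+$. The chain $X^{(n)}$ is therefore a stochastically monotone Markov chain driven by the i.i.d.\ inputs $A^{(n)}$. Note that $A^{(n)}-s_b$ has negative mean: the condition $\lambda/\mu<s_b$ gives $\mathbb{E}[A]=1/\theta<s_b$.

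\textbf{Existence, uniqueness and convergence.} I would show that $0$ is a regeneration state reached with positive probability from every state. From any $x$, repeated small arrivals $A<s_b$ drive the backlog down, and the drift of $\max(x,s_1)+A-s_b$ is $\mathbb{E}[A]-s_b<0$ once $x\ge s_1$. A Foster--Lyapunov argument with $V(x)=x$ then gives positive recurrence of the atom $\{0\}$; the return time is finite in mean because the increments are i.i.d.\ with negative mean, by the standard random-walk ladder argument. The atom is aperiodic, since $\Pr(X^{(n+1)}=0\mid X^{(n)}=0)=\Pr(s_1+A\le s_b)>0$ because $s_1<s_b$. The split-chain theory for chains with an accessible atom then yields a unique stationary law and total-variation convergence from every initial condition. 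An alternative is Loynes' backward coupling using the monotonicity noted above. Either way this part is routine.

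\textbf{Identifying the law.} This is the main computational step, and I would handle it by a fixed-point verification. Conjecture $\Pr(X>x)=c\,e^{-Kx}$ for $x\ge0$ with $c=e^{-K(s_b-s_1)}$, and check that the law is reproduced by one step of the recursion. For $x\ge0$,
\[
\Pr(X'>x)=\Pr\bigl(A>x+s_b-M\bigr),\qquad M:=\max(X,s_1).
\]
Since $A$ is exponential with rate $\theta$ and $x+s_b-M\ge x+s_b-M$ may be negative only when $M>x+s_b$, we split the expectation into $\{M\le x+s_b\}$ and $\{M> x+s_b\}$. This gives
\[
\Pr(X'>x)=\mathbb{E}\bigl[e^{-\theta(x+s_b-M)};\,M\le x+s_b\bigr]+\Pr(M>x+s_b).
\]
The law of $M$ is an atom at $s_1$ of mass $1-c\,e^{-Ks_1}$, together with density $cK e^{-Km}$ for $m>s_1$. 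Computing the two pieces is routine integration of exponentials, and it produces terms proportional to $e^{-Kx}$ and to $e^{-\theta x}$. The $e^{-\theta x}$ terms cancel exactly when $1-c\,e^{-Ks_1}$ balances $c(1-\ldots)$ appropriately. Requiring the coefficient of $e^{-Kx}$ to equal $c$ reduces to $K=\theta(1-e^{-Ks_b})$, which is \eqref{eq:char}. The cancellation of the $e^{-\theta x}$ term is what pins down $c=e^{-K(s_b-s_1)}$. I expect the bookkeeping of the atom at $s_1$ to be the only delicate point, so I would organise it by first computing the Laplace-type quantity $\mathbb{E}[e^{\theta M};M\le y]$ in closed form.

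\textbf{The root $K$.} By \cref{lem:root}, $K$ is the unique positive root of \eqref{eq:char}; if needed, this follows because $g(k)=\theta(1-e^{-ks_b})-k$ is concave with $g(0)=0$ and $g'(0)=\theta s_b-1>0$. This choice makes the conjectured law a proper stationary distribution: the survival function at $x=0$ equals $c\in(0,1)$ because $s_1<s_b$. By the uniqueness established in the first step, it is the stationary law. Reading off $p_0=1-c$ and $f_X=-\frac{d}{dx}\Pr(X>x)$ then gives \eqref{eq:stationarybacklog}.
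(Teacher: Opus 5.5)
Your proof is correct. The identification step is the paper's computation in different coordinates. Your $M=\max(X,s_1)$ is exactly $s_1+Y$, and its law (atom $1-e^{-Ks_b}$ at $s_1$, density $cKe^{-Km}$ above it) is the paper's law of $Y$ shifted by $s_1$. Your $\Pr(X'>x)=\Pr(M+A>x+s_b)$ is the paper's identity $\Pr(Y+Z>u)=e^{-Ku}$ evaluated at $u=x+s_b-s_1$. The $e^{-\theta x}$ cancellation that fixes $c$ is the same one the paper uses to find the constant $A$ before its proof.

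Where you genuinely differ is the existence, uniqueness and convergence step. You treat $X$ as a general Markov chain and invoke Foster--Lyapunov plus atom/regeneration theory. The paper instead notices that your own $Y^{(n)}$ evolves autonomously: $Y^{(n+1)}=(Y^{(n)}+Z_n-s_b)^+$, using $\max(x^+,s_1)=\max(x,s_1)$. This is the textbook Lindley recursion for a D/M/1 waiting time with negative-mean increments, so existence, uniqueness and convergence become a citation. Convergence of $X^{(n+1)}$ then follows because it is a continuous function of $(Y^{(n)},Z_n)$ with $Z_n$ independent of $Y^{(n)}$. The reduction also lets the paper check invariance on the simpler $Y$ chain and read off the law of $X$ afterwards.

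Your route is more self-contained and gives total-variation convergence directly, but it needs more care than your sketch shows. The drift of $X$ itself, $\mathbb{E}[(x+A-s_b)^+]-x$, equals $\mathbb{E}[A]-s_b$ only for $x\ge s_b$, not for all $x\ge s_1$. For $x<s_b$ the truncation adds $\mathbb{E}[(s_b-x-A)^+]>0$. So the exceptional set in Foster's criterion should be $[0,s_b]$. That set is small, because from any of its points the chain reaches $\{0\}$ within two steps with probability bounded below; this uses $s_1<s_b$.

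A more natural regeneration set than $\{0\}$ is $[0,s_1]$, on which the transition kernel of $X$ is constant. That set is precisely $\{Y=0\}$, the atom the paper's reduction exploits.
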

\Cref{ec:backlog} derives the transient round-by-round distribution of the
backlog, obtains the stationary distribution from it, and proves
\cref{prop:backlog}.

The distribution is a mixture. The point mass $p_0$ is the probability that
the backlog is exactly zero at the start of a round. The exponential density
governs rounds where some Type-1 transactions carry over. A larger $K$ means
the density decays faster, so large backlogs are less likely.
\Cref{fig:mempool} illustrates a sample path of the backlog dynamics, and
\cref{fig:dists} displays the resulting stationary distributions.
\begin{figure}[tb]
\centering
\begin{tikzpicture}[x=0.33cm, y=0.45cm, >=latex, every node/.style={font=\small}]
\fill[gray!15] (3,0)    -- (3,3)    -- (4.5,4.5) -- (4.5,0)  -- cycle;  %
\fill[gray!15] (7.5,0)  -- (7.5,3)  -- (10.5,6)  -- (10.5,0) -- cycle;  %
\fill[gray!15] (12.5,0) -- (12.5,3) -- (16.5,7)  -- (16.5,0) -- cycle;  %
\fill[gray!15] (17.5,0) -- (17.5,3) -- (25.5,11) -- (25.5,0) -- cycle;  %
\fill[gray!15] (25.5,0) -- (25.5,6) -- (27,7.5)  -- (27,0)   -- cycle;  %
\fill[gray!15] (27.5,0) -- (27.5,3) -- (28.5,4)  -- (28.5,0) -- cycle;  %
\draw[dashed] (0,3) -- (30,3); \node[left] at (0,3) {$s_1$};
\draw[->,thick] (-0.2,0) -- (31,0) node[below right] {time};
\draw[->,thick] (0,-0.2) -- (0,12.6) node[above right] {Available high-fee transactions};
\draw[very thick]
(0,0) -- (4.5,4.5) -- (4.5,0) -- (10.5,6) -- (10.5,1)
-- (16.5,7) -- (16.5,2) -- (25.5,11) -- (25.5,6)
-- (27,7.5) -- (27,2.5) -- (28.5,4) -- (28.5,0);
\node at (29.8,0.7) {$\cdots$};
\draw[decorate,decoration={brace,amplitude=4pt}] (16.5,7) -- (16.5,2);
\node[right=5pt] at (16.5,4.5) {$s_b$};
\draw[decorate,decoration={brace,amplitude=3pt}] (10.5,1) -- (10.5,0);
\node[right=4pt] at (10.6,0.5) {$X^{(n)}$};
\draw[decorate,decoration={brace,amplitude=4pt}] (12.5,-0.4) -- (10.5,-0.4);
\node at (11.5,-1.1) {$T_g^{(n)}$};
\draw[decorate,decoration={brace,amplitude=4pt}] (16.5,-0.4) -- (12.5,-0.4);
\node at (14.5,-1.1) {$T_\mu^{(n)}$};
\draw[decorate,decoration={brace,amplitude=5pt}] (16.5,-2.0) -- (10.5,-2.0);
\node at (13.5,-2.7) {$n$th block};
\end{tikzpicture}
\caption{A sample path of the available high-fee transaction count across several
rounds. The backlog rises at the arrival rate $\lambda$. The miners are idle whenever the backlog stays below the mining threshold $s_1$. This creates the unshaded gap $T_g^{(n)}$.
Once the backlog reaches $s_1$, they mine until a block is found. This corresponds to the shaded active phase
$T_\mu^{(n)}$. At block discovery, up to $s_b$ transactions are cleared, which creates the
vertical drop, and the leftover $X^{(n)}$ is carried into the next round. A
carryover above $s_1$ lets mining begin with no gap.}
\label{fig:mempool}
\end{figure}
\begin{figure}[tb]
\centering
\begin{subfigure}[t]{0.48\textwidth}
\centering
\begin{tikzpicture}[scale=0.8, >=latex, every node/.style={font=\small}]
\draw[->,thick] (-0.2,0) -- (6.2,0) node[below right] {$x$};
\draw[->,thick] (0,-0.2) -- (0,4.4) node[above left] {$f_X(x)$};
\draw[very thick,smooth,domain=0:9,samples=40,variable=\x]
plot ({0.6*\x},{2.4*exp(-0.8-0.4*\x)});
\draw[very thick] (0,0) -- (0,3.3);
\fill (0,3.3) circle (2.2pt);
\node[right] at (0.15,3.45) {$p_0$};
\end{tikzpicture}
\caption{Backlog $X$: an atom $p_0=1-e^{-K(s_b-s_1)}$ at the origin and the
exponential density $Ke^{-K(s_b-s_1)}e^{-Kx}$ for $x>0$.}
\label{fig:dist-X}
\end{subfigure}
\hfill
\begin{subfigure}[t]{0.48\textwidth}
\centering
\begin{tikzpicture}[scale=0.8, >=latex, every node/.style={font=\small}]
\draw[->,thick] (-0.2,0) -- (6.2,0) node[below right] {$t$};
\draw[->,thick] (0,-0.2) -- (0,4.4) node[above left] {$f_{T_g}(t)$};
\draw[very thick,smooth,domain=0:3,samples=30,variable=\x]
plot ({1.6*\x},{2.4*exp(-2+0.4*\x)});
\draw[very thick] (0,0) -- (0,0.81);
\fill (0,0.81) circle (2.2pt);
\node[right] at (0.15,1.0) {$e^{-Ks_b}$};
\draw[very thick] (4.8,0) -- (4.8,3.3);
\fill (4.8,3.3) circle (2.2pt);
\node[right] at (4.95,3.45) {$p_0$};
\draw (4.8,0.12) -- (4.8,-0.12);
\node[below] at (4.8,-0.18) {$s_1/\lambda$};
\end{tikzpicture}
\caption{Mining gap $T_g$: atoms $e^{-Ks_b}$ at $0$ and $p_0$ at the maximal gap
$s_1/\lambda$, with the increasing density $K\lambda e^{-K(s_b-\lambda t)}$ between.}
\label{fig:dist-Tg}
\end{subfigure}
\caption{Stationary distributions of the Type-1 backlog $X$ and the mining gap
$T_g$. Both are mixed distributions, drawn here with the probability
masses, shown as dots and the continuous densities on the same axes. The empty
mempool $\{X=0\}$ carries probability $p_0$. The same mass $p_0$ reappears as the atom at
the maximal gap in the mining gap distribution $T_g$. $\{X\ge s_1\}$ becomes the no-gap
atom $e^{-Ks_b}$ at the origin for the mining gap.}
\label{fig:dists}
\end{figure}
\subsection{Mining Gap Distribution and Expectation}
The stationary approximation adopted above applies equally to the gap
distribution derived from it. The gap is
\begin{equation}
T_g \;:=\; \frac{\bigl(s_1 - X\bigr)^{+}}{\lambda}.
\label{eq:gapdef}
\end{equation}

The gap is zero whenever $X \geq s_1$. It equals
$(s_1 - X)/\lambda$ whenever $X < s_1$. The maximum possible gap is
$s_1/\lambda$, reached when the mempool is empty.

The stationary backlog distribution \eqref{eq:stationarybacklog} gives the
distribution function of the gap,
\begin{equation}
F_{T_g}(t) =
\begin{cases}
0 & t < 0,\\[4pt]
e^{-K(s_b - \lambda t)} & 0 \leq t < s_1/\lambda,\\[4pt]
1 & t \geq s_1/\lambda.
\end{cases}
\label{eq:gapcdf}
\end{equation}

The distribution has atoms at zero and at the maximum $s_1/\lambda$.
The atom at zero is $\Pr(X \geq s_1) = e^{-Ks_b}$, the probability that
the carryover backlog already meets the threshold. The atom at
$s_1/\lambda$ is $p_0$, the probability that the mempool is empty.

The expected gap is
\begin{equation}
\mathbb{E}[T_g] = \frac{s_1}{\lambda}
- \frac{e^{-Ks_b}}{K\lambda}\bigl(e^{Ks_1} - 1\bigr).
\label{eq:expectedgap}
\end{equation}
\begin{corollary}
\label{cor:gap}
Under the conditions of \cref{prop:backlog} and $s_1>0$, the stationary mining gap
$T_g$ has the distribution function \eqref{eq:gapcdf}, with the atom $e^{-Ks_b}$ at zero
and the atom $p_0$ at $s_1/\lambda$, and the mean \eqref{eq:expectedgap}.
\end{corollary}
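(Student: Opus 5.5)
The corollary follows from \cref{prop:backlog} by a change of variables. Under the conditions of \cref{prop:backlog}, the stationary backlog satisfies $\Pr(X>x)=e^{-K(s_b-s_1+x)}$ for $x\ge 0$, with $p_0=\Pr(X=0)$. The gap \eqref{eq:gapdef} is the deterministic, non-increasing map $g(X)=(s_1-X)^+/\lambda$. Its law is therefore obtained from the tail of $X$ by inverting $g$ on each range of $t$, and the mean by integrating the tail of $T_g$.

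\emph{Distribution function.} For $t<0$ we have $F_{T_g}(t)=0$ because $T_g\ge 0$. Since $X\ge 0$, we also have $T_g\le s_1/\lambda$, so $F_{T_g}(t)=1$ for $t\ge s_1/\lambda$. For $0\le t<s_1/\lambda$, we use the fact that $\{T_g\le t\}=\{X\ge s_1-\lambda t\}$. If $X\ge s_1$, both sides hold. If $X<s_1$, the inequality $(s_1-X)/\lambda\le t$ rearranges to $X\ge s_1-\lambda t$. The threshold $s_1-\lambda t$ is strictly positive on this range, so $\Pr(X\ge s_1-\lambda t)=\Pr(X>s_1-\lambda t)$, because $X$ has no atom away from zero. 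Substituting $x=s_1-\lambda t$ into the tail from \cref{prop:backlog} gives $e^{-K(s_b-\lambda t)}$, which is \eqref{eq:gapcdf}.

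\emph{Atoms.} The atom at zero is $F_{T_g}(0)=e^{-Ks_b}$, which equals $\Pr(X\ge s_1)$. The assumption $s_1>0$ keeps this point distinct from $s_1/\lambda$. The atom at $s_1/\lambda$ is the jump $1-\lim_{t\uparrow s_1/\lambda}e^{-K(s_b-\lambda t)}=1-e^{-K(s_b-s_1)}=p_0$. This agrees with $\{T_g=s_1/\lambda\}=\{X=0\}$. On the open interval between the atoms, differentiating gives the density $K\lambda e^{-K(s_b-\lambda t)}$.

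\emph{Mean.} Since $T_g$ is nonnegative and bounded, I would use the tail formula rather than summing the atom and density contributions:
\begin{equation*}
\mathbb{E}[T_g]=\int_0^{s_1/\lambda}\bigl(1-F_{T_g}(t)\bigr)\,dt=\frac{s_1}{\lambda}-\int_0^{s_1/\lambda}e^{-K(s_b-\lambda t)}\,dt .
\end{equation*}
The remaining integral equals $e^{-Ks_b}(e^{Ks_1}-1)/(K\lambda)$, which yields \eqref{eq:expectedgap}.

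No step is a real obstacle once \cref{prop:backlog} is taken as given. The one point needing care is the boundary in the middle range: one must confirm that replacing $\ge$ by $>$ in $\Pr(X\ge s_1-\lambda t)$ is valid, which holds because the only atom of $X$ sits at zero. One must also confirm that $s_1<s_b$ keeps $p_0\in(0,1)$, so both atoms are genuine probabilities.
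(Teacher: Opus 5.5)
Your proof is correct and follows essentially the same route as the paper. You identify $\{T_g\le t\}=\{X\ge s_1-\lambda t\}$ on $[0,s_1/\lambda)$, use that $X$ has no atom on $(0,\infty)$ to read off the tail from \cref{prop:backlog}, and integrate the survival function of $T_g$ for the mean; your derivation of the atoms as jumps of $F_{T_g}$ only spells out what the paper states directly.
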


The first term is the maximum possible gap, corresponding to an empty
mempool. The second term subtracts the expected time saved across rounds
where the carryover backlog is large enough to shorten or eliminate the
waiting period. In a mempool that resets after every block
\citep{tsabary2018,noguchi2025,kimSeo2026mind}, every round starts from an empty mempool
and the gap equals $s_1/\lambda$ in every round. The carryover makes the gap random, with
an atom at zero, and the second term of \eqref{eq:expectedgap} shortens its mean. This
term vanishes as $K$ grows, so at high aggregate block-finding rates the backlog behaves
as in a mempool that resets after every block.
\section{The Difficulty Adjustment Map and Stability}
\label{sec:daa}
\subsection{The Mean Round Length}
The expected round length at winning probability $w$ is the sum of the expected
gap and the expected active phase. Once the gap ends, every participant is active, and the population as a whole finds a block at the aggregate rate $\mu$, so the active phase has expected duration $1/\mu$. From \eqref{eq:expectedgap} and
$\mu = w\gamma$, the expected round length is
\begin{equation}
m(w) \;:=\; \mathbb{E}[T_g] + \frac{1}{\mu}.
\label{eq:mdef}
\end{equation}

Combining \eqref{eq:expectedgap} with the characteristic equation
\eqref{eq:char} simplifies this to
\begin{equation}
m(w) \;=\; \frac{s_1(w)}{\lambda}
+ \frac{1 - e^{-K(w)\left(s_b - s_1(w)\right)}}{K(w)\,\lambda},
\label{eq:mw}
\end{equation}
where $K(w)$ is the unique positive root
of \eqref{eq:char} evaluated at $\mu = w\gamma$. The function $m(w)$ is strictly decreasing
in $w$, since a higher winning probability lowers the mining threshold $s_1(w)$ and raises
the aggregate block-finding rate $\mu$, and both effects shorten the round (\cref{app:stability}).
\subsection{The Protocol Map}
The difficulty-adjustment mechanism \eqref{eq:daaprotocol} maps the
current winning probability to the next one through the realized average round
length. Within an epoch, $w$ is fixed, and the rounds are stationary and
ergodic. Since $E$ is large, the ergodic theorem gives
$\bar{T}^{(j)} \approx m(w^{(j)})$. The protocol map in the unclamped
region is therefore
\begin{equation}
\Phi(w) \;:=\; w\cdot\frac{m(w)}{B}.
\label{eq:phimap}
\end{equation}

The clamp in \eqref{eq:daaprotocol} restricts any single adjustment to a
fourfold change. We work throughout in the unclamped region, where
$\Phi(w) = w\,m(w)/B$. The clamp binds only when the network is far from
its invariant point, and the local analysis below is unaffected by it.
\citet{kimSeo2026mind} drop the clamp in the same way, arguing that adjustments
beyond this range are highly unlikely.

If blocks came too fast ($m(w) < B$), then $\Phi(w) < w$, and the protocol
reduces the winning probability, making the puzzle harder. If blocks came too
slow ($m(w) > B$), then $\Phi(w) > w$, and the protocol increases the
winning probability. The map $\Phi$ encodes this correction.
\subsection{The Invariant Winning Probability}
A fixed point of \eqref{eq:phimap} satisfies $\Phi(w^*) = w^*$, which
requires $m(w^*) = B$. This is the only candidate for a steady state.
If $m(w) \neq B$, then $\Phi(w) \neq w$, and the protocol keeps
adjusting.

Existence follows from the intermediate value theorem. As $w \to \infty$,
the threshold $s_1(w) = 0$ and $\mu = w\gamma \to \infty$, so
$m(w) \to 0$. As $w$ decreases toward the minimum viable level
$\underline{w} := \bar{c}_{\mathrm{op}}/R_{\max}$, mining stops being worthwhile at all. This is exactly the point below which the profitability condition of \cref{sec:regime} fails for the entire population. At $\underline w$, the threshold $s_1(w) \to s_b$,
so $s_b - s_1(w) \to 0$, and $m(w) \to s_b/\lambda$. The condition
$s_b/\lambda > B$ says that filling a block through Type-1 arrivals alone,
without any mining, would take longer than the protocol's target. Under this condition,
$m$ exceeds $B$ near $\underline{w}$ and falls below $B$ for large $w$.
By continuity, there exists $w^*$ with $m(w^*) = B$.

Since $m(w)$ is strictly decreasing as $w$ increases, it crosses the level $B$ exactly once, and the invariant winning probability $w^*$ is unique.
\begin{proposition}
\label{prop:invariant}
Suppose $\lambda B<s_b$, and let $w$ range over the winning probabilities above
$\max\bigl(\underline w,\lambda/(\gamma s_b)\bigr)$, at which some mining is profitable
and the backlog is stable. On this range, the mean round length $m(w)$ in \eqref{eq:mw} is
continuous and strictly decreasing. It tends to $s_b/\lambda$ at the lower end and to zero
as $w\to\infty$, so there is a unique invariant winning probability $w^*$ with
$m(w^*)=B$.
\end{proposition}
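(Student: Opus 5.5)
The plan is to write the mean round length \eqref{eq:mw} as a function of two monotone ingredients, the threshold $s_1(w)$ and the root $K(w)$, and then read off monotonicity, continuity and the two limits. The key rewriting is
\begin{equation*}
\lambda\, m(w) \;=\; s_1(w) + g\bigl(K(w),\,s_b-s_1(w)\bigr),
\qquad g(K,y) := \frac{1-e^{-Ky}}{K} = \int_0^{y} e^{-Ku}\,du .
\end{equation*}
From the integral form, $g$ is strictly decreasing in $K$ for $y>0$. It is also increasing in $y$, with $\partial_y g = e^{-Ky}\in(0,1)$. Hence the map $s\mapsto s+g(K,s_b-s)$ has derivative $1-e^{-K(s_b-s)}>0$ for $s<s_b$, so it is strictly increasing in $s$.

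\textbf{Step 1: the root $K$ as a function of $\theta$.} With $\theta=w\gamma/\lambda$, I would set $G(K,\theta)=\theta(1-e^{-Ks_b})-K$. The function $G$ is strictly concave in $K$ and satisfies $G(0,\theta)=0$ and $\partial_K G(0,\theta)=\theta s_b-1$. Using \cref{lem:root}, when $\theta s_b>1$ the unique positive root lies where $\partial_K G<0$. The implicit function theorem then gives that $K$ is $C^1$ in $\theta$. Since $\partial_\theta G=1-e^{-Ks_b}>0$, it also gives $dK/d\theta>0$. Thus $K(w)$ is continuous and strictly increasing on the stated range.

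\textbf{Step 2: continuity and strict decrease of $m$.} The function $s_1(w)$ in \eqref{eq:s1common} is continuous and nonincreasing in $w$, strictly decreasing where it is positive. As $w$ increases, each ingredient moves the right way. The term $g$ strictly decreases through $K$, because $s_b-s_1>0$ on the range, since $w>\underline w$. The map $s\mapsto s+g$ is nonincreasing through $s_1$, by the derivative computed above. Together these give strict decrease of $m$, and continuity follows from continuity of $s_1$, $K$ and $g$.

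\textbf{Step 3: the limits.} The lower end $\max(\underline w,\lambda/(\gamma s_b))$ has two cases.
\begin{itemize}
\item If $\underline w$ is the binding bound, then $s_1\to s_b$ while $K$ stays bounded and positive. Hence $g\to0$ and $\lambda m\to s_b$.
\item If $\lambda/(\gamma s_b)$ is the binding bound, then $\theta s_b\downarrow1$, and the root in Step 1 gives $K\to0$. Hence $g(K,y)\to y=s_b-s_1$, and again $\lambda m\to s_b$.
\item If the two bounds coincide, both effects push $\lambda m$ to $s_b$, since $s_1+g\le s_b$ with $g\to s_b-s_1$.
\end{itemize}

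At the upper end $w\to\infty$, the threshold $s_1(w)=0$ eventually. Fix any $\theta_0$ in the range. Monotonicity gives $K\ge K(\theta_0)>0$, so
\begin{equation*}
K=\theta(1-e^{-Ks_b})\ge\theta(1-e^{-K(\theta_0)s_b})\to\infty .
\end{equation*}
Therefore $m\le 1/(K\lambda)\to0$.

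Finally, since $\lambda B<s_b$ gives $B<s_b/\lambda$, the intermediate value theorem yields some $w^*$ with $m(w^*)=B$. Strict decrease makes this $w^*$ unique.

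I expect the main obstacle to be Step 1: establishing that $K$ is continuous and strictly increasing in $\theta$, and that $K\to0$ at the stability boundary. This requires justifying $\partial_K G\ne0$ at the positive root, which I would get from strict concavity of $G$ together with the uniqueness in \cref{lem:root}. Everything else is monotone composition.
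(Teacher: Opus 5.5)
Your proposal is correct and follows essentially the same route as the paper: the rewriting $\lambda m(w)=s_1(w)+g\bigl(K(w),s_b-s_1(w)\bigr)$, the monotonicity and limits of $K$ from \cref{lem:root}, and the same two-case analysis at the lower end. The differences are cosmetic: you compare values by monotone composition where the paper differentiates on each side of the kink at $\bar c_{\mathrm{op}}/R_0$, and at the upper end you use the bound $g\le 1/K$ where the paper uses the identity $g(K,s_b)=1/\theta$.
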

\subsection{Local Stability}
The question is the following. Suppose the system is at $w^*$ and a
perturbation moves expected block times such that it is as if the winning probability were $w^* + \delta$ for small $\delta$. After one
epoch, the map sends it to $\Phi(w^* + \delta)$. Does the perturbation
shrink or grow?

A first-order expansion gives
$\Phi(w^* + \delta) \approx w^* + \Phi'(w^*)\delta$. The perturbation
in the next epoch is $\Phi'(w^*)\delta$. After $j$ epochs it is
$[\Phi'(w^*)]^j \delta$. The fixed point attracts nearby trajectories if
$|\Phi'(w^*)| < 1$ and repels them if $|\Phi'(w^*)| > 1$.

Differentiating $\Phi(w) = w\,m(w)/B$ and using $m(w^*) = B$ gives
\begin{equation}
\Phi'(w^*) = 1 + \frac{w^*\,m'(w^*)}{m(w^*)} = 1 + \varepsilon(w^*),
\label{eq:phiprime}
\end{equation}
where
\begin{equation}
\varepsilon(w) := \frac{w\,m'(w)}{m(w)}
\label{eq:elasticity}
\end{equation}
is the elasticity of the mean round length with respect to the winning
probability. It measures the percentage change in round length per percentage
change in $w$.

The stability criterion $|\Phi'(w^*)| < 1$ becomes
\begin{equation}
-2 < \varepsilon(w^*) < 0.
\label{eq:stabilitycriterion}
\end{equation}

This is the same criterion identified by \citet{kimSeo2026mind} for a
population of finitely many miners under a deterministic mempool, restated
here for the mean round length $m(w)$ induced by the stochastic backlog of
\cref{sec:backlog}.

Since $m$ is strictly decreasing, $m'(w) < 0$, so $\varepsilon(w) < 0$
for all $w$. The right inequality in \eqref{eq:stabilitycriterion} holds
automatically. The substantive condition is $\varepsilon(w^*) > -2$.
\paragraph{The no-gap case.}
When $s_1(w^*) = 0$, the mean round length reduces to $m(w) = 1/(w\gamma)$,
and its elasticity is $\varepsilon(w) = -1$. This lies in $(-2, 0)$ unconditionally, so the no-gap equilibrium is always
stable. An elasticity of exactly $-1$ has a clean reading. By
\eqref{eq:phiprime}, a perturbation $\delta$ maps to $\Phi'(w^*)\delta =
(1+\varepsilon(w^*))\delta = 0$, so it is entirely undone after a single
epoch, and the winning probability lands back on $w^*$ exactly, as if the
perturbation had never occurred. This full cancellation in a single epoch is the extreme case. More generally, whenever $|\Phi'(w^*)|<1$ but $\Phi'(w^*)\neq 0$, perturbations decay geometrically over several epochs, as illustrated in \cref{fig:stability}.
\paragraph{The gap case.}
When $s_1(w^*) > 0$, whether $\varepsilon(w^*)$
remains above $-2$ depends on the equilibrium values $(K^*, s_1^*, \mu^*)$
through an explicit closed-form condition, which we derive in full in \cref{app:stability} for
the interested reader. As the aggregate hashpower grows, $\mu^*$ grows without bound
(\cref{sec:hashpower}), so $K^*$ grows as well and the probability $p_0$ that a round
starts with an empty mempool tends to one. The mempool then behaves as one that resets
after every block, and $\varepsilon(w^*)$ approaches
$-1-(R_b+s_b r_0)/\bigl((r_1-r_0)\lambda B\bigr)$, which is the elasticity that
\citet{kimSeo2026mind} obtain for such a mempool (\cref{app:stability}). For large
aggregate hashpower, the invariant winning probability is therefore stable when the fee
premium that Type-1 arrivals add over a target round length, $(r_1-r_0)\lambda B$,
exceeds the baseline block revenue $R_b+s_b r_0$, and unstable when it falls short.
As with the other structural assumptions in this
paper, we proceed by working in the regime where this condition holds,
taking the best case for the sake of the argument. When
\eqref{eq:stabilitycriterion} fails, the fixed point repels nearby
trajectories. \citet{kimSeo2026mind} show, for a fixed non-stochastic
mempool, that such an unstable invariant point induces winning probabilities that
oscillate between two values rather than converging.
\Cref{fig:oscillation} illustrates the qualitative picture.
\begin{proposition}
\label{prop:stability}
Suppose $\lambda B<s_b$ and that $w^*$ differs from
$\bar c_{\mathrm{op}}/(R_b+s_b r_0)$, the winning probability at which the mining
threshold \eqref{eq:s1common} reaches zero. Then $w^*$ is locally asymptotically stable
under the map \eqref{eq:phimap} if $\varepsilon(w^*)>-2$ and unstable if
$\varepsilon(w^*)<-2$. Without a gap, $\varepsilon(w^*)=-1$. With a gap,
$\varepsilon(w^*)$ has the closed form given in \cref{app:stability}, and as
$\gamma\to\infty$ it converges to $-1-(R_b+s_b r_0)/\bigl((r_1-r_0)\lambda B\bigr)$.
\end{proposition}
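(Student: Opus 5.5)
The proof has three parts. First, a local stability argument for the one-dimensional map \eqref{eq:phimap}. Second, an explicit computation of $\varepsilon(w^*)$ in each regime. Third, the limit $\gamma\to\infty$.

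\textbf{Local stability.} The hypothesis $w^*\neq\bar c_{\mathrm{op}}/(R_b+s_b r_0)$ places $w^*$ away from the kink of $s_1(w)$ in \eqref{eq:s1common}. So near $w^*$ the threshold is either identically zero or equal to the smooth affine function $(\bar c_{\mathrm{op}}/w-R_b-s_br_0)/(r_1-r_0)$ of $1/w$. For $K(w)$ I would use the implicit function theorem on $G(K,\theta)=K-\theta(1-e^{-Ks_b})$ with $\theta=w\gamma/\lambda$. This gives $\partial_K G=1-\theta s_b e^{-Ks_b}$, and this is positive at the positive root. To see this, note that $\theta(1-e^{-Ks_b})$ is concave in $K$, it crosses the diagonal from above at the root by \cref{lem:root}, and its slope at that crossing is therefore below one. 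So $K$ is $C^1$, and hence $m$ and $\Phi$ are $C^1$ near $w^*$. The standard linearization theorem for $C^1$ maps of the line then gives the result. With $\Phi'(w^*)=1+\varepsilon(w^*)$ from \eqref{eq:phiprime}, the fixed point is locally asymptotically stable if $|1+\varepsilon|<1$ and unstable if $|1+\varepsilon|>1$. By \cref{prop:invariant}, $m$ is strictly decreasing, so $\varepsilon<0$ and the condition reduces to comparing $\varepsilon(w^*)$ with $-2$.

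\textbf{No-gap case.} Here $s_1\equiv0$ near $w^*$. From the characteristic equation, $1-e^{-Ks_b}=K/\theta$, so \eqref{eq:mw} gives $m=1/(\lambda\theta)=1/(w\gamma)$ and hence $\varepsilon=-1$.

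\textbf{Gap case.} I would differentiate \eqref{eq:mw} directly. Set $D=s_b-s_1$. This gives
\[
m'(w)=\frac{s_1'}{\lambda}\Bigl(1-e^{-KD}\Bigr)+\frac{K'}{\lambda}\Bigl(\frac{D e^{-KD}}{K}-\frac{1-e^{-KD}}{K^2}\Bigr),
\]
with $s_1'=-\bar c_{\mathrm{op}}/\bigl(w^2(r_1-r_0)\bigr)$ and
\[
K'=\frac{(\gamma/\lambda)(1-e^{-Ks_b})}{1-\theta s_b e^{-Ks_b}}=\frac{K/w}{1-\theta s_b e^{-Ks_b}}.
\]
Multiplying by $w/m(w^*)$, with $m(w^*)=B$, produces the closed form in $(K^*,s_1^*,\mu^*)$.

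\textbf{Limit $\gamma\to\infty$.} This is the main obstacle, because $w^*$ depends on $\gamma$. I would first show that $\mu^*=w^*\gamma\to\infty$. Suppose instead that $\mu^*$ stays bounded. Then $w^*\to0$, so $s_1(w^*)\to\infty$, which is impossible: $s_1<s_b$ is required for viability, and $m(w^*)\ge s_1/\lambda$ forces $s_1\le\lambda B$. Hence $\theta\to\infty$, and $K\approx\theta$, so $K\to\infty$ and $\theta s_be^{-Ks_b}\to0$. Since $s_1^*\le\lambda B<s_b$, $D$ stays bounded away from zero and $e^{-KD}\to0$. Consequently $m(w^*)\approx s_1^*/\lambda+1/(K\lambda)$, and $B=m(w^*)$ forces $s_1^*\to\lambda B$. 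In the derivative, the $K'$ term is $O(1/K)$ after multiplication by $w$, because $wK'\approx K$. Therefore
\[
\varepsilon\to\frac{w s_1'}{\lambda B}=-\frac{\bar c_{\mathrm{op}}/w^*}{(r_1-r_0)\lambda B}.
\]
Finally, from $s_1^*\to\lambda B$ we get $\bar c_{\mathrm{op}}/w^*\to R_b+s_br_0+(r_1-r_0)\lambda B$. Substituting yields the limit $-1-(R_b+s_br_0)/\bigl((r_1-r_0)\lambda B\bigr)$. The delicate point is making these uniform-in-$\gamma$ approximations rigorous. I would do this by bounding $K\ge\theta(1-e^{-1})$ once $\theta s_b\ge1$.
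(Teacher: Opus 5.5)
Your proposal is correct and matches the paper's proof: the same linearization of $\Phi$, the same decomposition $\lambda m'=s_1'(1-e^{-K\phi})+g_K K'$ with $K'=K/(w\beta)$, and the same limit mechanism, in which $s_1^*\le\lambda B$ forces $\mu^*\ge\gamma/C\to\infty$ and then $s_1^*\to\lambda B$ (the paper imports these two limits from \cref{prop:hashpower} and simplifies first with $1-e^{-K^*\phi^*}=K^*Q^*$, while you take the limit in the unsimplified form, which works equally well). One small correction: the bound $K\ge\theta(1-e^{-1})$ holds only once $Ks_b\ge1$, i.e.\ for $\theta s_b\ge e/(e-1)$, not for all $\theta s_b\ge1$ (indeed $K\to0$ as $\theta s_b\downarrow1$). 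This is harmless because $\theta\to\infty$ here and \cref{lem:root} already gives $K/\theta\to1$.
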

\begin{figure}[t]
\centering
\begin{subfigure}[t]{0.48\textwidth}
\centering
\begin{tikzpicture}[scale=0.75,>=latex,every node/.style={font=\small}]
\draw[->,thick] (-0.3,0) -- (6.3,0) node[below right] {$w^{(j)}$};
\draw[->,thick] (0,-0.3) -- (0,6.3) node[above left] {$w^{(j+1)}$};
\draw[dashed] (0,0) -- (6,6);
\node[below right] at (6,6) {\small $45^\circ$};
\draw[very thick,domain=0.4:5.8,samples=30,smooth,variable=\x]
plot ({\x},{3+0.4*(3-\x)/(0.6+abs(3-\x))});
\node[above] at (5.6,1.6) {$\Phi(w)$};
\fill (3,3) circle (2pt);
\node[below right] at (3,3) {$w^*$};
\draw[->,blue] (1.8,0) -- (1.8,3.27) -- (3.27,3.27) -- (3.27,2.88)
  -- (2.88,2.88) -- (2.88,3.07) -- (3.07,3.07) -- (3.07,2.96);
\node[blue,left] at (1.7,0.5) {convergence};
\end{tikzpicture}
\caption{Stable fixed point, $|\Phi'(w^*)|<1$. A small deviation from $w^*$ becomes
smaller from one epoch to the next, so the iterates move toward the fixed point. In the
no-gap special case, $\Phi'(w^*)=0$, and the iteration reaches $w^*$ in a single epoch.}
\label{fig:stability}
\end{subfigure}
\hfill
\begin{subfigure}[t]{0.48\textwidth}
\centering
\begin{tikzpicture}[scale=0.75,>=latex,every node/.style={font=\small}]
\draw[->,thick] (-0.3,0) -- (6.3,0) node[below right] {$w^{(j)}$};
\draw[->,thick] (0,-0.3) -- (0,6.3) node[above left] {$w^{(j+1)}$};
\draw[dashed] (0,0) -- (6,6); \node[below right] at (6,6) {\small $45^\circ$};
\draw[very thick,domain=0.4:5.8,samples=30,smooth,variable=\x]
plot ({\x},{3+2.6*(3-\x)/(0.6+abs(3-\x))});
\node[above] at (5.6,1.3) {$\Phi(w)$};
\fill (3,3) circle (2pt);
\node[below right] at (3,3) {$w^*$};
\draw[->,red] (2.4,0) -- (2.4,4.4) -- (4.4,4.4) -- (4.4,1.2)
-- (1.2,1.2) -- (1.2,5.4);
\node[red,left] at (1.1,5.4) {oscillation};
\end{tikzpicture}
\caption{Unstable fixed point, $|\Phi'(w^*)|>1$. A small deviation from $w^*$ is
amplified from one epoch to the next, causing the iterates to move away from the fixed
point and oscillate rather than converge.}
\label{fig:oscillation}
\end{subfigure}
\caption{Illustration of the iteration $\Phi$ near the invariant winning probability
$w^*$. A perturbation $\delta$ at epoch $j$ becomes approximately $\Phi'(w^*)\delta$ at
epoch $j+1$.}
\label{fig:cobweb}
\end{figure}

\section{Security Metrics}
\label{sec:metrics}
Three quantities jointly characterize the security state of the network, each capturing a
different dimension of vulnerability.
\subsection{Mining Gap}
The mining gap $T_g$ is the idle period at the start of a round during which all honest miners
are idle. This is a direct security concern. During the gap, an adversarial miner who continues
hashing uncontested can extend a private chain faster than the honest network. Nakamoto's
classical double-spending analysis assumes continuous honest participation, and the mining
gap violates this assumption for a fraction approximately $\mathbb{E}[T_g]/B$ of every round. A longer expected gap
therefore corresponds to a weaker security guarantee, increasing the success probability of
adversaries with less than half of the total hashpower \citep{nakamoto2008}. We use the expected gap length
$\mathbb{E}[T_g]$ as our measure of mining gap vulnerability.
\subsection{Aggregate Hashpower}
The aggregate hashpower $\gamma$ is the total hash attempts per second made by the honest network while its miners are active. A higher aggregate hashpower
raises the absolute hashpower behind any given fraction of the network. A double-spending or selfish mining attack succeeds more often as the attacker's share of hashpower grows. A higher aggregate hashpower makes a given share more costly to acquire, in hardware and energy. We use aggregate hashpower as our measure of the network's resistance to a hashpower-based attack.
\subsection{Hashwork}
The \emph{hashwork} of a round, a term we introduce here, is the total number of hash attempts
made by all honest miners before a valid block is found. Once mining begins, the population as a whole finds a block at the aggregate rate $\mu = w\gamma$, so the expected duration of the active mining phase is $1/\mu$. With aggregate hashpower $\gamma$ each hashing for an expected time of $1/\mu$, the expected total number of hash attempts, i.e. expected hashwork, in a
round is
\begin{equation}
\mathcal{W} \;:=\; \frac{\gamma}{\mu} \;=\; \frac{1}{w}.
\label{eq:hashwork_def}
\end{equation}

A higher expected hashwork means an adversary must expend more
computational resources to produce a competing chain, so it measures the raw cost of a
brute-force attack. We use the expected hashwork $\mathcal{W}$ as our measure of
attack-cost hardness. \Cref{fig:hashwork} illustrates this notion and contrasts it with the gap.
\begin{figure}[tb]
\centering
\begin{tikzpicture}[x=0.42cm, y=0.5cm, >=latex, every node/.style={font=\small}]
\draw[dashed] (0,3) -- (20.5,3);
\draw[dashed] (0,6) -- (20.5,6);

\draw[pattern=north east lines] (4,0)    rectangle (7,6);
\draw[pattern=north east lines] (10,0)   rectangle (13,3);
\draw[pattern=north east lines] (17.5,0) rectangle (19,6);

\draw[->,thick] (-0.2,0) -- (21,0) node[below right] {time};
\draw[->,thick] (0,-0.2) -- (0,7.4) node[above right] {Active hashpower};
\node[left] at (0,3) {$\gamma$};
\node[left] at (0,6) {$2\gamma$};
\node at (20,0.6) {$\cdots$};

\draw[decorate,decoration={brace,amplitude=4pt}] (7,-0.3) -- (4,-0.3);
\node at (5.5,-1.0) {$\Delta/2$};

\draw[decorate,decoration={brace,amplitude=4pt}] (13,-0.3) -- (10,-0.3);
\node at (11.5,-1.0) {$\Delta/2$};

\draw[decorate,decoration={brace,amplitude=4pt}] (19,-0.3) -- (17.5,-0.3);
\node at (18.25,-1.0) {$\Delta/4$};

\draw[decorate,decoration={brace,amplitude=5pt}] (6.8,-1.25) -- (1.2,-1.25);
\node[align=center] at (4,-2.55) {Block$_1$\\ $1 \gamma\Delta $ Hashwork};

\draw[decorate,decoration={brace,amplitude=5pt}] (12.8,-1.25) -- (7.2,-1.25);
\node[align=center] at (10,-2.55) {Block$_2$\\ $\tfrac12 \gamma\Delta$ Hashwork};

\draw[decorate,decoration={brace,amplitude=5pt}] (18.8,-1.25) -- (13.2,-1.25);
\node[align=center] at (16,-2.55) {Block$_3$\\ $\tfrac12 \gamma\Delta$ Hashwork};
\end{tikzpicture}

\caption{Hashwork is the product of active hashpower and active mining time.
Each block has length $\Delta$. The three blocks are alternative scenarios. The network is idle during the
white gap and active during the hatched interval. Block$_1$ has
$2\gamma$ hashpower active for $\Delta/2$, giving $1 \gamma\Delta$ Hashwork.
Block$_2$ has half the hashpower for the same active time, giving
$\tfrac12 \gamma\Delta$ Hashwork. Block$_3$ keeps the same hashpower as Block$_1$
but halves the active time, also giving $\tfrac12 \gamma\Delta$ Hashwork. The gap length determines the hashpower share needed to catch the honest chain. An attacker who mines throughout a block catches up once its hashwork over the whole block exceeds the honest hashwork, which requires just over $33\%$ of the combined hashpower in Blocks 1 and 2 and just over $20\%$ in Block 3. Hashwork determines the energy needed to match the honest network's block production rate.}
\label{fig:hashwork}
\end{figure}
These three metrics are not independent. The difficulty-adjustment mechanism couples
$\mu$ to $\gamma$ and to the gap length through \eqref{eq:mw}. Changes in the cost structure, fee levels, or
block subsidy impact all three metrics simultaneously, and not always in the same direction.
Understanding these interactions is the central goal of this paper.
\section{Aggregate Hashpower and the Vanishing Active Phase}
\label{sec:hashpower}
We now study how the equilibrium composition of the round
responds to growth in total hashpower $\gamma$, holding the invariant
winning probability $w^*$ determined by the difficulty-adjustment mechanism of
\cref{sec:daa}.
\subsection{When Does a Mining Gap Form?}
From \eqref{eq:s1common}, $s_1(w) > 0$ if and only if
$\bar{c}_{\mathrm{op}}/w > R_b + s_b r_0$, meaning that operating instantly with only the block subsidy and Type-0 transactions does not cover the operating cost. In the absence of a gap,
$m(w) = 1/(w\gamma)$ and the invariant condition $m(w^*) = B$ gives
$w^* = 1/(\gamma B)$. For a mining gap to occur, we require $s_1(w^*) > 0$. Substituting $w^* = 1/(\gamma B)$ into the gap condition $s_1(w^*) > 0$:
\begin{equation}
\gamma \;>\; \gamma_0 \;:=\; \frac{R_b + s_b\,r_0}{\bar{c}_{\mathrm{op}}\,B}
\label{eq:gamma0}
\end{equation}

For $\gamma \leq \gamma_0$, the baseline block revenue suffices to cover
operating costs and no gap forms. For $\gamma > \gamma_0$, a gap exists
at equilibrium. We assume $\gamma > \gamma_0$ for the remainder of this
section.
\subsection{The Aggregate Active Block-Finding Rate vs The Aggregate Active Hashpower}
Two equations jointly determine the equilibrium pair $(\mu^*, s_1)$ for
a given $\gamma$. The first is the expected block length condition
\eqref{eq:mw} evaluated at the invariant winning probability $w^* = \mu^*/\gamma$:
\begin{equation}
\frac{s_1}{\lambda} + \frac{1 - e^{-K(s_b - s_1)}}{K\lambda} = B,
\label{eq:sysA2}
\end{equation}
where $K$ is the unique positive root of
\eqref{eq:char} at $\mu = \mu^*$. The second is the threshold
\eqref{eq:s1common} rewritten with $w = \mu^*/\gamma$:
\begin{equation}
s_1 \;=\; \frac{\bar{c}_{\mathrm{op}}\,\gamma/\mu^* - R_b - s_b\,r_0}{r_1 - r_0}.
\label{eq:sysB2}
\end{equation}

Any change in $\gamma$ affects both equations simultaneously, and the
pair $(\mu^*, s_1)$ adjusts to satisfy both at once. Solving this system shows that
the aggregate active block-finding rate $\mu^*$ is strictly increasing in $\gamma$
(\cref{prop:hashpower}). Intuitively, more hashpower leads the difficulty-adjustment
mechanism to lower the winning probability. Each unit of hashpower then finds blocks
less often, so its reward rate $w\,R(X,t)$ covers the operating cost only at a higher
fee level, and miners wait for a higher threshold $s_1$ before they start. Since the
average round length stays at $B$, the longer wait leaves less time for the active
phase, and $\mu^*$ must rise. Through the
identity $\mathbb{E}[T_g] = B - 1/\mu^*$, every increment in $\gamma$
that raises $\mu^*$ also lengthens the expected gap.
\subsection{The Vanishing Active Phase}
Since $\mu^*$ is strictly increasing in $\gamma$, it either converges to
a finite limit or grows without bound. We show it must grow without
bound. From \eqref{eq:sysA2}, the second term on the left is strictly
positive, so $s_1/\lambda < B$, giving $s_1 < \lambda B$. Substituting
\eqref{eq:sysB2} and rearranging:
\begin{equation}
\frac{\gamma}{\mu^*} \;<\;
\frac{(r_1-r_0)\,\lambda B + R_b + s_b\,r_0}{\bar{c}_{\mathrm{op}}}
\;=:\; C \;<\; \infty.
\label{eq:Cbound2}
\end{equation}

The ratio $\gamma/\mu^*$ is bounded above by a finite constant that does
not depend on $\gamma$. Therefore $\mu^* > \gamma/C \to \infty$ as
$\gamma \to \infty$, and
\begin{equation}
\mathbb{E}[T_g] \;=\; B - \frac{1}{\mu^*} \;\longrightarrow\; B.
\label{eq:vanishing}
\end{equation}

The entire round is asymptotically consumed by the mining gap,
and the active mining phase vanishes.
\begin{proposition}
\label{prop:hashpower}
Suppose $\lambda B<s_b$. (i) The invariant winning probability has a mining gap,
$s_1(w^*)>0$, if and only if $\gamma>\gamma_0$. (ii) For $\gamma>\gamma_0$, the
aggregate block-finding rate $\mu^*$, the mining threshold $s_1(w^*)$, the expected gap
$\mathbb{E}[T_g]=B-1/\mu^*$ and the hashwork $\mathcal{W}=\gamma/\mu^*$ are continuous
and strictly increasing in $\gamma$. (iii) As $\gamma\to\infty$, $\mu^*\to\infty$,
$s_1(w^*)\to\lambda B$, $\mathbb{E}[T_g]\to B$, and $\mathcal{W}$ increases to $C$.
\end{proposition}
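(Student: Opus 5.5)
The plan is to reparametrize the equilibrium by $\mu$ rather than $\gamma$. Equation \eqref{eq:sysA2} does not involve $\gamma$, so I would first solve it for $s_1$ as a function of $\mu$. Then \eqref{eq:sysB2} gives $\gamma$ explicitly as a function of $\mu$, and inverting that map gives all of (ii) and (iii). Part (i) is handled separately, using the uniqueness in \cref{prop:invariant}.

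\emph{Step 1: properties of $K$.} From \eqref{eq:char}, $K\le\theta$ and $K\ge\theta(1-e^{-Ks_b})$. Since $K$ is increasing in $\theta$ and bounded below away from zero for large $\theta$, this gives $K/\theta\to1$, so $K\to\infty$ as $\mu\to\infty$. Strict monotonicity and continuity of $K$ in $\theta$ on the stable range $\theta s_b>1$ come from implicit differentiation of $g(K,\theta)=K-\theta(1-e^{-Ks_b})$. Here $\partial_\theta g<0$, and $\partial_K g=1-\theta s_b e^{-Ks_b}>0$ at the positive root. The latter holds because $g(\cdot,\theta)$ is convex, negative just to the right of $0$ when $\theta s_b>1$, and crosses zero exactly once (\cref{lem:root}).

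\emph{Step 2: $s_1=\sigma(\mu)$ from \eqref{eq:sysA2}.} Write the left side as
\[
L(\mu,s)=\frac{s}{\lambda}+\frac{1-e^{-K(s_b-s)}}{K\lambda}.
\]
Its $s$-derivative is $(1-e^{-K(s_b-s)})/\lambda>0$ for $s<s_b$. For fixed $a=s_b-s>0$, the map $K\mapsto(1-e^{-Ka})/K$ is strictly decreasing, so by Step 1 $L$ is strictly decreasing in $\mu$.

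At $s=0$, \eqref{eq:char} gives $(1-e^{-Ks_b})/(K\lambda)=1/(\theta\lambda)=1/\mu$, so $L(\mu,0)=1/\mu$. At $s$ near $s_b$, $L\to s_b/\lambda>B$. Hence for every $\mu>1/B$ there is a unique $\sigma(\mu)\in(0,s_b)$ with $L(\mu,\sigma(\mu))=B$, and $\sigma(1/B)=0$. By the implicit function theorem, $\sigma$ is continuous and strictly increasing. The stability requirement $\lambda/\mu<s_b$ holds automatically on $\mu\ge1/B$, because $\lambda B<s_b$. As $\mu\to\infty$ we have $K\to\infty$, the second term of $L$ vanishes, and so $\sigma(\mu)\to\lambda B$.

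\emph{Step 3: invert.} From \eqref{eq:sysB2},
\[
\gamma=\Gamma(\mu):=\frac{\mu\bigl(R_b+s_br_0+(r_1-r_0)\sigma(\mu)\bigr)}{\bar c_{\mathrm{op}}}.
\]
This is a product of two positive, continuous, strictly increasing functions, with $\Gamma(1/B)=\gamma_0$ and $\Gamma(\mu)\to\infty$. So $\Gamma$ is a bijection from $(1/B,\infty)$ onto $(\gamma_0,\infty)$, and $\mu^*=\Gamma^{-1}(\gamma)$ is continuous and strictly increasing in $\gamma$.

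The other quantities follow. The threshold $s_1=\sigma(\mu^*)$ is increasing in $\gamma$. The expected gap satisfies $\mathbb{E}[T_g]=B-1/\mu^*$ by \eqref{eq:mdef} with $m=B$, so it is increasing. The hashwork $\mathcal W=\gamma/\mu^*=(R_b+s_br_0+(r_1-r_0)\sigma)/\bar c_{\mathrm{op}}$ is increasing and tends to $C$; strict monotonicity of $\mathcal W$ then shows it increases to $C$.

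\emph{Part (i).} If $\gamma\le\gamma_0$, then $w=1/(\gamma B)$ satisfies $s_1(w)=0$ and $m(w)=B$. By the uniqueness in \cref{prop:invariant} this is $w^*$, so there is no gap. If $\gamma>\gamma_0$, the construction of Step 3 yields a gap solution, and by uniqueness it is $w^*$. Conversely, any gap solution must have $\mu>1/B$ and therefore $\gamma=\Gamma(\mu)>\gamma_0$.

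\emph{Main obstacle.} I expect the hard part to be Steps 1–2: the monotonicity of $K$ in $\theta$ and the sign bookkeeping that makes $\sigma$ strictly increasing. I would check $\partial_Kg>0$ at the root via the convexity argument first.
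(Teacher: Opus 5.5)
Your proposal is correct and follows essentially the paper's route. Your $\sigma(\mu)$ is the paper's $\hat s_1(\mu)$ from \cref{lem:daacurve}, and your map $\mu\mapsto\mu\bigl(R_0+\alpha\sigma(\mu)\bigr)/\bar c_{\mathrm{op}}$ is the paper's $H(\mu)/\bar c_{\mathrm{op}}$, which is inverted to get $\mu^*$, while part (i) is settled by the uniqueness in \cref{prop:invariant}. The only differences are cosmetic: you obtain $\mu^*\to\infty$ from the bijection rather than from the bound $\mu^*>\gamma/C$, and your symbols $g$ and $\Gamma$ clash with the paper's $g(K,\phi)$ and cost ratio $\Gamma$.
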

This result has a direct security
interpretation. During the gap, all
honest participants are idle. An adversary who mines continuously during
the gap accumulates blocks uncontested. As the gap approaches $B$, the
adversary's uncontested window approaches the full duration of every
round. The classical majority-hashpower security threshold then loses its
meaning. An adversary with hashpower $\gamma_{\mathrm{adv}}$ who mines
continuously finds blocks at rate $w^*\gamma_{\mathrm{adv}} =
\mu^*\gamma_{\mathrm{adv}}/\gamma > \gamma_{\mathrm{adv}}/C$ by \eqref{eq:Cbound2},
while the honest network finds one block per $B$ on average. The adversary
therefore outpaces the honest chain once $\gamma_{\mathrm{adv}} \ge C/B$. Since $C$
does not depend on $\gamma$, this requirement becomes a vanishing fraction of the
total hashpower as $\gamma$ grows. The exact requirement is
$\gamma_{\mathrm{adv}} > \mathcal{W}/B$, where $\mathcal{W}=\gamma/\mu^*$ is the hashwork
defined in \eqref{eq:hashwork_def}. By \eqref{eq:sysB2},
$\bar c_{\mathrm{op}}\mathcal{W}=R_b+s_b r_0+(r_1-r_0)s_1$, and by
\cref{prop:hashpower}, $s_1$ and with it $\mathcal{W}$ rise with $\gamma$ while
$\mathcal{W}$ stays below $C$. The absolute hashpower that an
attacker needs therefore rises with $\gamma$, even though its share of the total
hashpower vanishes.

The limit in \eqref{eq:vanishing} presumes that the difficulty-adjustment mechanism
settles at the invariant winning probability $w^*$. By \cref{sec:daa}, this is the case
for large $\gamma$ when $(r_1-r_0)\lambda B > R_b+s_b r_0$, that is, when the fee
premium makes up most of the expected block revenue. When the baseline block revenue is
larger, the invariant winning probability is unstable for large $\gamma$, and
\citet{kimSeo2026mind} show for a mempool without carryover that the winning
probability then oscillates across epochs. The next
section identifies the economic force that keeps $\gamma$ finite.
\section{Free Entry and the Equilibrium Hashpower}
\label{sec:entry}
We now let aggregate hashpower be determined by free entry. Each unit of
hashpower $i \in I$ bears a capital cost rate $c_{\mathrm{cap}}(i) > 0$, paid continuously
whether the unit is active or idle. This cost is sunk for the activation
decision in \cref{sec:strategies}, but it matters for entry.
We consider the case in which every participant pays the same price for mining hardware, such as the application-specific integrated circuits (ASICs) built for Bitcoin mining.
As in \cref{ass:commoncost}, the restriction needs to hold only for $\eta$-almost every participant, that is, for all participants except possibly a group with zero total hashpower.
\citet{tsabary2018} also treat all mining rigs as identical in cost, and the free-entry analysis of \citet{budish2025trust} uses a single cost per unit of computational power that includes the rental cost of capital.
Under this scenario, we impose the following restriction.
\begin{assumption}[Common Capital Cost]
\label{ass:capcost}
There exists $\bar{c}_{\mathrm{cap}}>0$ such that $c_{\mathrm{cap}}(i)=\bar{c}_{\mathrm{cap}}$ for $\eta$-almost every participant.
\end{assumption}

Write
\[
\alpha := r_1-r_0,
\qquad
R_0 := R_b+s_b r_0.
\]

Here $R_0$ is the baseline block revenue from the subsidy and the low-tier
fee component, while $\alpha$ is the premium earned by Type-1 transactions
over Type-0 transactions.

At the invariant winning probability, the mean round length is $B$. Flow balance
implies that the expected number of Type-1 transactions included per block
is $\lambda B$. Hence the expected total block revenue is
\begin{equation}
R_0+\alpha\lambda B.
\label{eq:totalrev}
\end{equation}

Since each unit of hashpower receives a proportional share of block rewards,
the expected revenue per unit of hashpower is
\begin{equation}
\mathcal{R}(\gamma)
\;:=\;
\frac{R_0+\alpha\lambda B}{\gamma}.
\label{eq:unitrevenue}
\end{equation}

The expected cost per unit of hashpower is
\begin{equation}
\mathcal{C}(\gamma)
\;:=\;
\bar{c}_{\mathrm{cap}}B+\frac{\bar c_{\mathrm{op}}}{\mu^*(\gamma)}.
\label{eq:unitcost}
\end{equation}

The first term is capital cost over one target round length. The second
term is operating cost over the active mining phase, whose expected length
is $1/\mu^*(\gamma)$.

Free entry means that hashpower keeps joining the network while mining earns a
positive expected profit and leaves while it earns a negative one, so in the long run
expected revenue equals expected cost. \citet{budish2025trust} and
\citet{ciaian2021security} use the same zero-profit condition to model the
permissionless entry of miners. \citet{prat2021equilibrium} show that free entry places
an upper bound on mining revenue and use it to predict the network's hashpower. The
free-entry condition is
\begin{equation}
\frac{R_0+\alpha\lambda B}{\gamma}
=
\bar{c}_{\mathrm{cap}}B+\frac{\bar c_{\mathrm{op}}}{\mu^*(\gamma)}.
\label{eq:freeentry}
\end{equation}

If the left side exceeds the right side, mining is still profitable in expectation and additional hashpower enters. If the
left side is below the right side, mining is unprofitable and hashpower exits.

When a gap exists, the mining threshold is positive, and \eqref{eq:sysB2} gives
$\bar c_{\mathrm{op}}\gamma/\mu^*(\gamma)=R_0+\alpha s_1(\gamma)$. Substituting this
into \eqref{eq:freeentry} removes the operating-cost term and gives the equivalent
condition
\begin{equation}
\frac{\alpha\bigl(\lambda B-s_1(\gamma)\bigr)}{\gamma}
=
\bar{c}_{\mathrm{cap}}B.
\label{eq:entrygap}
\end{equation}

This free-entry condition has a simple interpretation. The fee premium earned
above the activation threshold must cover the capital cost of hashpower.
The left side of \eqref{eq:entrygap} is strictly decreasing in $\gamma$.
In \cref{sec:hashpower}, it is shown that higher aggregate hashpower increases the mining threshold $s_1(\gamma)$, and the limit is $\lambda B$. Since the left side of \eqref{eq:entrygap} converges to zero as
$\gamma\to\infty$, free entry pins down a finite equilibrium hashpower.
Thus, capital cost prevents the unbounded growth identified in
\cref{sec:hashpower}.
\begin{proposition}
\label{prop:entry}
Suppose $\lambda B<s_b$. The expected profit per unit of hashpower, the left side of
\eqref{eq:freeentry} minus its right side, is continuous and strictly decreasing in
$\gamma$. It tends to infinity as $\gamma\to0$ and to $-\bar c_{\mathrm{cap}}B<0$ as
$\gamma\to\infty$. Hence there is a unique free-entry equilibrium hashpower
$\gamma^*\in(0,\infty)$, hashpower enters below $\gamma^*$ and exits above it, and in the
gap regime $\gamma^*$ satisfies \eqref{eq:entrygap}. If the capital cost were zero,
expected profit would be positive for every $\gamma$, and no finite equilibrium would
exist.
\end{proposition}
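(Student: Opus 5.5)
The plan is to study the profit function
\[
P(\gamma) := \frac{R_0+\alpha\lambda B}{\gamma} - \bar c_{\mathrm{cap}}B - \frac{\bar c_{\mathrm{op}}}{\mu^*(\gamma)}
\]
separately on the no-gap region $\gamma\le\gamma_0$ and the gap region $\gamma>\gamma_0$ of \cref{prop:hashpower}(i), and then glue the two pieces at $\gamma_0$.

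\emph{No-gap region, $\gamma\le\gamma_0$.} Here $w^*=1/(\gamma B)$, so $\mu^*=1/B$. This gives
\[
P(\gamma)=\frac{R_0+\alpha\lambda B}{\gamma}-\bar c_{\mathrm{cap}}B-\bar c_{\mathrm{op}}B .
\]
This expression is continuous, strictly decreasing, and tends to $+\infty$ as $\gamma\to0$.

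\emph{Gap region, $\gamma>\gamma_0$.} By \eqref{eq:sysB2}, $\bar c_{\mathrm{op}}\gamma/\mu^*=R_0+\alpha s_1(\gamma)$. Substituting, as in \eqref{eq:entrygap}, gives
\[
P(\gamma)=\frac{\alpha\bigl(\lambda B-s_1(\gamma)\bigr)}{\gamma}-\bar c_{\mathrm{cap}}B .
\]
By \cref{prop:hashpower}(ii), $s_1(\gamma)$ is continuous and strictly increasing. By \eqref{eq:sysA2}, $s_1<\lambda B$, so the numerator $\alpha(\lambda B-s_1)$ is positive and nonincreasing. Dividing by the strictly increasing $\gamma$ makes the quotient strictly decreasing. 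By \cref{prop:hashpower}(iii), $s_1\to\lambda B$, so the numerator stays bounded by $\alpha\lambda B$ while $\gamma\to\infty$. Hence the first term tends to $0$ and $P(\gamma)\to-\bar c_{\mathrm{cap}}B$.

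\emph{Gluing at $\gamma_0$.} Continuity at $\gamma_0$ is the step I expect to need care. As $\gamma\downarrow\gamma_0$, I must show $s_1(w^*(\gamma))\to0$, i.e.\ $\mu^*\to1/B$. I would argue this from continuity of the solution of the system \eqref{eq:sysA2}--\eqref{eq:sysB2} in $\gamma$. The system has a unique solution, and \cref{prop:invariant} shows $m$ is continuous and strictly monotone in $w$. An implicit-function or monotone-inverse argument then gives continuity of $w^*(\gamma)$, hence of $\mu^*=w^*\gamma$ and of $s_1$.

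At $\gamma_0$ itself, the formula for $s_1$ from \eqref{eq:sysB2} with $\mu^*=1/B$ gives $s_1=(\bar c_{\mathrm{op}}\gamma_0B-R_0)/\alpha=0$ by \eqref{eq:gamma0}. So the two expressions for $P$ agree there, since $\bar c_{\mathrm{op}}\gamma_0 B=R_0$ makes the no-gap value equal $\alpha\lambda B/\gamma_0-\bar c_{\mathrm{cap}}B$. A continuous function that is strictly decreasing on each of two adjacent intervals is strictly decreasing overall.

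\emph{Conclusion.} The intermediate value theorem and strict monotonicity give a unique root $\gamma^*\in(0,\infty)$, with $P>0$ below it (entry) and $P<0$ above it (exit). If the root lies in the gap region, it satisfies \eqref{eq:entrygap} by the reduction above.

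\emph{Zero capital cost.} If $\bar c_{\mathrm{cap}}=0$, the no-gap expression becomes
\[
\frac{R_0+\alpha\lambda B}{\gamma}-\frac{R_0}{\gamma_0}\ge\frac{\alpha\lambda B}{\gamma_0}>0
\quad\text{for }\gamma\le\gamma_0,
\]
using $\bar c_{\mathrm{op}}B=R_0/\gamma_0$. The gap expression becomes $\alpha(\lambda B-s_1)/\gamma>0$ since $s_1<\lambda B$. So $P>0$ for every $\gamma$ and no finite equilibrium exists.
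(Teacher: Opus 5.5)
Your proposal is correct and follows the paper's proof essentially step for step. It uses the same two-piece formula for $P$ split at $\gamma_0$, the same gluing value $\alpha\lambda B/\gamma_0-\bar c_{\mathrm{cap}}B$, the same limits, and the same zero-capital-cost check. The only difference is how you get continuity at $\gamma_0$: the paper uses $s_1^*\to0$ as $\gamma\downarrow\gamma_0$, which follows from $\mu^*=H^{-1}(\bar c_{\mathrm{op}}\gamma)$ in the proof of \cref{prop:hashpower} together with \cref{lem:daacurve}. Your monotone-inverse route also works once you note that $m$ is jointly continuous in $(w,\gamma)$, but the differentiable implicit function theorem does not apply there, because $w^*(\gamma_0)=\bar c_{\mathrm{op}}/R_0$ is exactly the kink of $s_1$.
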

\Cref{fig:equilibrium} illustrates the long-run determination of the equilibrium hashpower.
\begin{figure}[tb]
\centering
\begin{tikzpicture}[x=0.82cm, y=0.55cm, >=latex, every node/.style={font=\small}]
\draw[dashed,black!40] (0,1.5) -- (9.8,1.5);   %
\draw[dashed,black!40] (0,6)   -- (10,6);       %
\draw[very thick,black!50] (0.3,6) -- (3,6);
\draw[very thick,black!50,domain=3:9,samples=40,smooth]
plot (\x,{6 + 0.083*(\x-3)*(\x-3)});
\node[anchor=south,black!55] at (6.8,7.6) {$\mu(\gamma)$};
\draw[very thick] (0.8,3) -- (3,3);
\draw[very thick,domain=3:9.5,samples=40,smooth] plot (\x,{1.5 + 4.5/\x});
\node[anchor=south west] at (6.2,2.3) {$\mathcal{C}(\gamma)$};
\draw[very thick,domain=2.2:9.5,samples=45,smooth] plot (\x,{12/\x});
\node[anchor=west] at (3.7,4.8)
{$\mathcal{R}(\gamma)=\dfrac{R_0+\alpha\lambda B}{\gamma}$};
\draw[->] (3.9,4.45) -- (3.05,4.05);
\fill (5,2.4) circle (2.3pt);
\draw[dotted] (3,0) -- (3,6);      %
\draw[dotted] (5,0) -- (5,2.4);    %
\node[below] at (3,-0.15) {$\gamma_0$};
\node[below] at (5,-0.15) {$\gamma^{*}$};
\draw[->,thick] (0,-0.2) -- (0,9) ;
\node[anchor=south west] at (0,9) {Per-unit revenue / cost};
\draw (-0.12,1.5) -- (0.12,1.5);
\node[anchor=east] at (-0.18,1.5) {$\bar{c}_{\mathrm{cap}}B$};
\draw (-0.12,3)   -- (0.12,3);
\node[anchor=east] at (-0.18,3) {$(\bar{c}_{\mathrm{cap}}+\bar{c}_{\mathrm{op}})B$};
\node[anchor=east] at (-0.18,0) {$0$};
\draw[->,thick] (10,0) -- (10,9) node[above] {$\mu$};
\draw (9.88,6) -- (10.12,6);
\node[anchor=west] at (10.16,6) {$1/B$};
\draw[->,thick] (-0.2,0) -- (11,0) node[below right] {$\gamma$};
\end{tikzpicture}
\caption{Long-run determination of the equilibrium hashpower. The lower
curves use the left axis. The block revenue per unit, $\mathcal{R}(\gamma)$,
falls with aggregate hashpower $\gamma$ and approaches zero, while the cost per unit,
$\mathcal{C}(\gamma)=\bar{c}_{\mathrm{cap}}B+\bar{c}_{\mathrm{op}}/\mu$, is
constant before a gap appears and then falls toward
$\bar{c}_{\mathrm{cap}}B$. The two curves meet at the unique free-entry
equilibrium hashpower $\gamma^{*}$. The gray curve uses the right axis. The
aggregate active block-finding rate $\mu(\gamma)$ is $1/B$ when mining is
continuous and increases once the gap appears. Both changes occur at the
same threshold $\gamma_0$. Below $\gamma_0$, there is no gap and
$1/\mu=B$. Above $\gamma_0$, the gap grows, $\mu$ increases, and block revenue and cost
per unit both fall.}
\label{fig:equilibrium}
\end{figure}
\section{Preventing the Mining Gap}
\label{sec:prevent}
We next ask when the free-entry equilibrium has no mining gap. In the no-gap regime, mining starts immediately in every round. Hence
$\mu^*=1/B$, and the active phase lasts the whole target round length in
expectation. The free-entry condition \eqref{eq:freeentry} becomes
\begin{equation}
\frac{R_0+\alpha\lambda B}{\gamma}
=
(\bar{c}_{\mathrm{cap}}+\bar c_{\mathrm{op}})B.
\label{eq:nogapentry}
\end{equation}

A no-gap equilibrium exists if and only if the resulting equilibrium hashpower
$\gamma^*$ does not exceed the threshold $\gamma_0=R_0/(\bar c_{\mathrm{op}}B)$ from
\eqref{eq:gamma0}. Solving \eqref{eq:nogapentry} for $\gamma^*$ shows that this is
equivalent to
\begin{equation}
\frac{\bar c_{\mathrm{op}}}{\bar c_{\mathrm{cap}}}
\;\leq\;
\frac{R_0}{\alpha\,\lambda B}.
\label{eq:nogapcondition}
\end{equation}

Condition \eqref{eq:nogapcondition} is the no-gap condition. The left side
is the ratio of operating cost to capital cost. The right side is the ratio of baseline revenue to the expected fee premium flow over one target round length. A high baseline reward or a high capital cost keeps the network in the no-gap regime. A high fee
premium or a high operating cost pushes the system toward a gap. Since
$\alpha=r_1-r_0$, the parameter effects in \cref{tab:prevent,tab:reduce} hold $r_1$ fixed
when $r_0$ increases, so a higher $r_0$ also lowers $\alpha$, and they hold $r_0$ fixed
when $\alpha$ increases.
\begin{proposition}
\label{prop:nogap}
Suppose $\lambda B<s_b$. The free-entry equilibrium has no mining gap,
$\gamma^*\le\gamma_0$, if and only if \eqref{eq:nogapcondition} holds. In that case
$\gamma^*=(R_0+\alpha\lambda B)/\bigl((\bar c_{\mathrm{cap}}+\bar c_{\mathrm{op}})B\bigr)$,
and the effects of each parameter on \eqref{eq:nogapcondition} and on $\gamma^*$ are those
in \cref{tab:prevent}.
\end{proposition}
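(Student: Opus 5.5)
The plan rests on \cref{prop:entry}, which gives a unique equilibrium $\gamma^*\in(0,\infty)$ as the zero of the expected per-unit profit $P(\gamma):=\mathcal R(\gamma)-\mathcal C(\gamma)$. That function is continuous and strictly decreasing. Since $P$ is strictly decreasing with its unique zero at $\gamma^*$, we have
\[
\gamma^*\le\gamma_0 \iff P(\gamma_0)\le 0 .
\]
The statement therefore reduces to evaluating $P$ at the single point $\gamma_0=R_0/(\bar c_{\mathrm{op}}B)$.

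\emph{Step 1: compute $P(\gamma_0)$.} At $\gamma=\gamma_0$ there is no gap by \cref{prop:hashpower}(i), so $w^*=1/(\gamma_0 B)$ and $\mu^*=1/B$, and \eqref{eq:unitcost} gives $\mathcal C(\gamma_0)=(\bar c_{\mathrm{cap}}+\bar c_{\mathrm{op}})B$. Substituting $\gamma_0$ into \eqref{eq:unitrevenue} gives
\[
\mathcal R(\gamma_0)=\frac{(R_0+\alpha\lambda B)\,\bar c_{\mathrm{op}}}{R_0},
\qquad
P(\gamma_0)=\frac{\alpha\lambda B\,\bar c_{\mathrm{op}}}{R_0}-\bar c_{\mathrm{cap}}B .
\]
Hence $P(\gamma_0)\le0$ is exactly $\bar c_{\mathrm{op}}/\bar c_{\mathrm{cap}}\le R_0/(\alpha\lambda B)$, which is \eqref{eq:nogapcondition}. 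One small point needs care: $\mu^*$ must be continuous at $\gamma_0$, matching the boundary value $1/B$ from the gap side. I will get this from \cref{prop:hashpower}(ii) together with \eqref{eq:sysB2} at $s_1=0$.

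\emph{Step 2: closed form of $\gamma^*$.} When the condition holds, $\gamma^*\le\gamma_0$. On $(0,\gamma_0]$ the no-gap regime gives $\mu^*=1/B$, so the free-entry equation becomes \eqref{eq:nogapentry}, which is linear in $1/\gamma$. Solving it yields $\gamma^*=(R_0+\alpha\lambda B)/((\bar c_{\mathrm{cap}}+\bar c_{\mathrm{op}})B)$.

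\emph{Step 3: parameter effects.} Each entry of \cref{tab:prevent} follows from signing partial derivatives of two explicit quantities. The first is the slack of the condition, $R_0\bar c_{\mathrm{cap}}-\alpha\lambda B\,\bar c_{\mathrm{op}}$. The second is the closed form of $\gamma^*$. Both use $R_0=R_b+s_br_0$ and $\alpha=r_1-r_0$, under the convention stated in the text that a change in $r_0$ holds $r_1$ fixed and a change in $\alpha$ holds $r_0$ fixed. For example:
\begin{itemize}
\item Raising $R_b$ increases the slack and increases $\gamma^*$.
\item Raising $r_0$ increases $R_0$ by $s_b$ per unit and decreases $\alpha$ by one per unit. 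The slack rises, and $\gamma^*$ changes by $(s_b-\lambda B)$ per unit, which is positive because $\lambda B<s_b$. This is the only place the stability hypothesis enters beyond \cref{prop:entry}.
\item Raising $\bar c_{\mathrm{cap}}$ helps the condition but lowers $\gamma^*$.
\item Raising $\bar c_{\mathrm{op}}$ hurts the condition and lowers $\gamma^*$.
\item Raising $\lambda$ or $\alpha$ hurts the condition and raises $\gamma^*$.
\end{itemize}

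I expect the main obstacle to be Step 3. It is not technically hard, but I have not seen \cref{tab:prevent}, so the entries I derive must be matched against it. The $r_0$ entry needs the most attention, because both $R_0$ and $\alpha$ move and the sign of the effect on $\gamma^*$ depends on the stability condition $\lambda B<s_b$.
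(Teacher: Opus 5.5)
Your proposal is correct and follows the paper's proof: strict monotonicity of $P$ from \cref{prop:entry} reduces the equivalence to the sign of $P(\gamma_0)$, the closed form comes from \eqref{eq:nogapentry} with $\mu^*=1/B$, and the entries of \cref{tab:prevent} follow by inspection, with $\lambda B<s_b$ used only for the effect of $r_0$ on $\gamma^*$. Fix one slip in Step 1: since $1/\gamma_0=\bar c_{\mathrm{op}}B/R_0$, the correct value is $P(\gamma_0)=\alpha\lambda B^2\bar c_{\mathrm{op}}/R_0-\bar c_{\mathrm{cap}}B$, whereas your display drops a factor of $B$ and as written would give the bound $R_0/(\alpha\lambda)$ instead of $R_0/(\alpha\lambda B)$; also, your continuity check at $\gamma_0$ is unnecessary, because $\gamma_0$ is itself a no-gap point by \cref{prop:hashpower}(i), so $\mu^*(\gamma_0)=1/B$ directly.
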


\begin{table}[t]
\centering
\caption{Parameter effects on gap prevention and equilibrium hashpower.}
\label{tab:prevent}
\begin{tabular}{@{}lcc@{}}
\toprule
Parameter increase
& \begin{tabular}[c]{@{}c@{}}Prevents\\ mining gap\end{tabular}
& \begin{tabular}[c]{@{}c@{}}Increases equilibrium\\ hashpower $\gamma^*$\end{tabular} \\
\midrule
Block subsidy $R_b \uparrow$ & \checkmark & \checkmark \\
Low-tier fee $r_0 \uparrow$ & \checkmark & \checkmark \\
Fee premium $\alpha \uparrow$ & $\times$ & \checkmark \\
Capital cost $\bar{c}_{\mathrm{cap}} \uparrow$ & \checkmark & $\times$ \\
Operating cost $\bar c_{\mathrm{op}} \uparrow$ & $\times$ & $\times$ \\
\bottomrule
\end{tabular}
\end{table}

\Cref{tab:prevent} highlights the role of capital cost. Higher capital cost reduces
entry, and this can eliminate the mining gap. This protection comes at the
cost of lower equilibrium aggregate hashpower. By contrast, a higher block subsidy
both helps prevent the gap and supports a larger equilibrium aggregate hashpower.
\section{Reducing the Mining Gap}
\label{sec:reduce}
Suppose now that the free-entry equilibrium lies in the gap regime. The
expected gap is
\begin{equation}
\mathbb{E}[T_g]
=
B-\frac{1}{\mu^*}.
\label{eq:gap_freeentry}
\end{equation}

Thus, reducing the gap is equivalent to reducing the equilibrium aggregate
block-finding rate $\mu^*$. We also track the two security metrics introduced in \cref{sec:metrics}: the equilibrium aggregate hashpower $\gamma^*$ and the hashwork $\mathcal{W} = \gamma^*/\mu^*$ from \eqref{eq:hashwork_def}.
The parameter effects follow from the two equilibrium equations in the
gap regime: the invariant block-time condition from \cref{sec:daa} and the
free-entry condition \eqref{eq:entrygap}. Total differentiation of this
two-equation system gives the signs in \cref{tab:reduce}. The calculation
uses the fact in \cref{sec:hashpower} that higher aggregate
hashpower raises $\mu^*$ and lengthens the expected gap.
\begin{proposition}
\label{prop:statics}
Suppose $\lambda B<s_b$ and that the free-entry equilibrium has a mining gap. The effects
of each parameter on the expected gap, the hashwork $\mathcal{W}$ and the equilibrium
hashpower $\gamma^*$ are those in \cref{tab:reduce}. The effect of $r_0$ on $\gamma^*$ is
positive if $\Gamma<\Gamma^\star$ and negative if $\Gamma>\Gamma^\star$, where
$\Gamma:=\bar c_{\mathrm{op}}/\bar c_{\mathrm{cap}}$ and $\Gamma^\star$ is given in
\cref{app:statics}.
\end{proposition}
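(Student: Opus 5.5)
Everything reduces to one scalar equation in the mining threshold $s_1$. The first step is to observe that the invariant block-time condition \eqref{eq:sysA2}, with $K=K(\mu/\lambda)$ from \eqref{eq:char}, involves only $(\lambda,s_b,B)$. It contains none of the economic parameters $R_b,r_0,\alpha,\bar c_{\mathrm{op}},\bar c_{\mathrm{cap}}$. By \cref{prop:hashpower}(ii), $\mu^*$ and $s_1(w^*)$ move together, and both are strictly increasing. I would therefore read \eqref{eq:sysA2} as defining a $C^1$, strictly increasing map $s_1\mapsto M(s_1)$ with $\mu^*=M(s_1)$ on $(0,\lambda B)$. I would verify this directly by implicit differentiation of \eqref{eq:sysA2} and \eqref{eq:char}, as in \cref{app:stability}.

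Since $\mathbb{E}[T_g]=B-1/\mu^*$, the gap moves in the same direction as $\mu^*$, and hence as $s_1$. The remaining two equilibrium equations are the threshold relation \eqref{eq:sysB2} and the free-entry condition \eqref{eq:entrygap}. Together they give
\[
\gamma^*=\frac{\alpha(\lambda B-s_1)}{\bar c_{\mathrm{cap}}B},\qquad
\mathcal W=\frac{\gamma^*}{\mu^*}=\frac{R_0+\alpha s_1}{\bar c_{\mathrm{op}}}.
\]
Eliminating $\gamma^*$ between them leaves the single equation
\[
G(s_1):=\frac{\Gamma(\lambda B-s_1)}{B}-M(s_1)\Bigl(\frac{R_0}{\alpha}+s_1\Bigr)=0,\qquad \Gamma=\bar c_{\mathrm{op}}/\bar c_{\mathrm{cap}}.
\]
Here $G$ is strictly decreasing in $s_1$. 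This recovers the uniqueness in \cref{prop:entry}, and the implicit function theorem then gives $\operatorname{sign}(ds_1/dp)=\operatorname{sign}(\partial G/\partial p)$ for each parameter $p$.

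The signs then follow parameter by parameter. Each time I get the sign of $ds_1$ from $G$, and read off $\gamma^*$ and $\mathcal W$ from whichever of the expressions $\gamma^*=\alpha(\lambda B-s_1)/(\bar c_{\mathrm{cap}}B)$, $\gamma^*=\mu^*\mathcal W$, or $\mathcal W=(R_0+\alpha s_1)/\bar c_{\mathrm{op}}$ is unambiguous.
\begin{itemize}
\item $R_b\uparrow$. $\partial G/\partial R_b<0$, so $s_1$, $\mu^*$ and the gap fall. Then $\gamma^*$ rises by the first formula, and $\mathcal W=\gamma^*/\mu^*$ rises.
\item $\bar c_{\mathrm{cap}}\uparrow$. $\Gamma$ falls, so $s_1$ and $\mu^*$ fall and the gap shrinks. $\mathcal W$ falls by its formula, so $\gamma^*=\mu^*\mathcal W$ falls.
\item $\bar c_{\mathrm{op}}\uparrow$. $\Gamma$ rises, so $s_1$ and the gap rise. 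The first formula gives $\gamma^*\downarrow$, and $\mathcal W=\gamma^*/\mu^*$ falls.
\item $\alpha\uparrow$ with $r_0$ fixed. $R_0/\alpha$ falls, so $\partial G/\partial\alpha>0$ and $s_1$ and the gap rise. $\mathcal W=(R_0+\alpha s_1)/\bar c_{\mathrm{op}}$ rises, and so does $\gamma^*=\mu^*\mathcal W$.
\end{itemize}

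The main obstacle is $r_0$, because with $r_1$ fixed it raises $R_0$ by $s_b$ and lowers $\alpha$ by one. The gap is still clean: $R_0/\alpha$ strictly increases, so $s_1$, $\mu^*$ and the gap fall. The other two metrics need quantitative work.

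For $\mathcal W$, I compute
\[
d(R_0+\alpha s_1)/dr_0=(s_b-s_1)+\alpha\,ds_1/dr_0 .
\]
I would bound $|\alpha\,ds_1/dr_0|$ using the explicit $\partial G/\partial r_0$ and $\partial G/\partial s_1$, and I expect $s_b>\lambda B>s_1$ to make this term dominate. If it does not, $\mathcal W$ needs its own threshold.

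For $\gamma^*$, the two terms in
\[
d\gamma^*/dr_0\propto-(\lambda B-s_1)-\alpha\,ds_1/dr_0
\]
genuinely compete. Substituting $ds_1/dr_0=-(\partial G/\partial r_0)/(\partial G/\partial s_1)$ and using $G=0$ to eliminate $R_0$ leaves a condition that is linear in $\Gamma$. Solving it for $\Gamma$ defines $\Gamma^\star$, evaluated at the equilibrium $(s_1,\mu^*,K^*)$ and recorded in \cref{app:statics}. The effect is positive for $\Gamma<\Gamma^\star$ and negative for $\Gamma>\Gamma^\star$.
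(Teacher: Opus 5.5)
Your reduction is sound and is the paper's own argument in a different coordinate. With $M=\hat s_1^{-1}$ from \cref{lem:daacurve}, which also gives the $C^1$ property with $M'=1/\zeta>0$, your $G$ is exactly $\mu\Psi$ along $\mu=M(s_1)$. For the $R_b$, $\alpha$, $\bar c_{\mathrm{cap}}$ and $\bar c_{\mathrm{op}}$ rows, you pick whichever formula for $\gamma^*$ or $\mathcal W$ is unambiguous. That settles these rows with less algebra than the paper's explicit derivatives.

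The genuine gap is the $r_0$ entry for $\mathcal W$. You only say you expect $s_b-s_1$ to dominate $|\alpha\,ds_1/dr_0|$, and you allow that $\mathcal W$ might need its own threshold. \cref{tab:reduce} asserts an unconditional improvement, so this row is not proved.

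Bounding with $s_b>\lambda B>s_1$ and the partials alone cannot prove it either. Write $N:=-\partial G/\partial s_1=\Gamma/B+M'(V+s_1)+M>0$. Then $\alpha\,ds_1/dr_0=-(s_b+V)M/N$, and for fixed $s_1$ and $\Gamma$ this tends to $-M/M'$ as $V\to\infty$. As $s_1\downarrow0$, $M/M'\to1/K$, which exceeds $s_b$ whenever $\lambda B/s_b>1-e^{-1}$. So a bound that treats $s_1$, $V$ and $\Gamma$ as free fails. The sign is fixed only after you tie $V$ to $\Gamma$ through the equilibrium.

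The missing step is the substitution you already plan for $\gamma^*$: use $G=0$, that is, $(V+s_1)M=\Gamma(\lambda B-s_1)/B$. Then
\[
N\,\frac{d(R_0+\alpha s_1)}{dr_0}=(s_b-s_1)N-(s_b+V)M=\frac{\Gamma(s_b-\lambda B)}{B}+(s_b-s_1)M'(s_1)(V+s_1)>0,
\]
so $\mathcal W$ rises unconditionally because $D=s_b-\lambda B>0$. Multiplying through by $B\zeta$, and using $B\zeta N=\Omega$ at the equilibrium, this is the paper's $\Gamma(\zeta D+\phi^*Q^*/\mu^*)/\Omega$.

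The same substitution completes your $\gamma^*$ sketch: $-(\lambda B-s_1)N+(s_b+V)M=MD-\Gamma(\lambda B-s_1)^2M'/(BM)$. This gives $\Gamma^\star=BM^2D/\bigl((\lambda B-s_1)^2M'\bigr)=B\mu^{*2}\zeta D/Q^{*2}$, matching \cref{app:statics}.
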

\begin{table}[t]
\centering
\caption{Parameter effects in the gap regime. A check mark indicates an
improvement in the listed metric. $\pm$ indicates an ambiguous effect.}
\label{tab:reduce}
\begin{tabular}{@{}lccc@{}}
\toprule
Parameter increase
& \begin{tabular}[c]{@{}c@{}}Reduces\\ mining gap\end{tabular}
& \begin{tabular}[c]{@{}c@{}}Increases\\ hashwork $\mathcal{W}$\end{tabular}
& \begin{tabular}[c]{@{}c@{}}Increases equilibrium\\ hashpower $\gamma^*$\end{tabular} \\
\midrule
Block subsidy $R_b \uparrow$ & \checkmark & \checkmark & \checkmark \\
Low-tier fee $r_0 \uparrow$ & \checkmark & \checkmark & $\pm$ \\
Fee premium $\alpha \uparrow$ & $\times$ & \checkmark & \checkmark \\
Capital cost $\bar{c}_{\mathrm{cap}} \uparrow$ & \checkmark & $\times$ & $\times$ \\
Operating cost $\bar c_{\mathrm{op}} \uparrow$ & $\times$ & $\times$ & $\times$ \\
\bottomrule
\end{tabular}
\end{table}

The block subsidy has the most favorable effect. Increasing $R_b$ lowers
the threshold required for honest miners to activate. The gap shrinks,
honest hashwork per block rises, and the equilibrium aggregate hashpower rises.

Increasing the low-tier fee $r_0$ decreases the gap and increases hashwork, but has an ambiguous effect on the equilibrium aggregate hashpower. The direction depends on the cost structure of the network through a threshold $\Gamma^\star$ derived in \cref{app:statics}.

A higher fee premium $\alpha$ lengthens the gap but raises both the hashwork and the
equilibrium aggregate hashpower. More fee revenue draws more hashpower into the network,
and, as in \cref{sec:hashpower}, the difficulty-adjustment mechanism then shortens the
active phase.

Capital cost has some tradeoffs. Higher $\bar{c}_{\mathrm{cap}}$ reduces
hashpower entry. This lowers the equilibrium aggregate block-finding rate and shortens the gap, but
it also reduces both equilibrium aggregate hashpower and honest hashwork per block.

A higher operating cost, for example from higher electricity prices, works against all
three metrics. It lengthens the gap and lowers both the hashwork and the equilibrium
aggregate hashpower.
\section{A Numerical Illustration for Bitcoin}
\label{sec:numerical}
This section uses Bitcoin data to show the size and the timing of the effects derived in
the previous sections. We compare two settings. In the first, the aggregate hashpower
stays at its 2026 level, so that no miner enters or leaves the network after 2026. This
is the setting of \cref{sec:daa,sec:hashpower}, in which the aggregate hashpower is given.
In the second, hashpower enters and exits freely, so that the network is at the
free-entry equilibrium of \cref{sec:entry} in every year. For each setting we follow the
block subsidy along Bitcoin's halving schedule from 2026 to 2070 under a few scenarios for
the bitcoin price and transaction fees. These scenarios illustrate the model. They are not
forecasts of prices, fees or mining technology.

\subsection{Inputs Without Units}
\label{sec:numerical-inputs}
The results below depend on the parameters only through a few ratios without units.
Measuring transactions in multiples of the block capacity $s_b$, time in multiples of the
target round length $B$, and amounts of money in multiples of the fee premium per target
round $(r_1-r_0)\lambda B$ leaves the equations of \cref{sec:backlog,sec:daa,sec:hashpower,sec:entry}
unchanged and reduces their inputs to
\begin{equation}
\nu \;:=\; \frac{R_b+s_b r_0}{(r_1-r_0)\lambda B},
\qquad
\xi \;:=\; \frac{\lambda B}{s_b},
\label{eq:unitfree}
\end{equation}
together with one more ratio that depends on the setting. The ratio $\nu$ compares the
baseline block revenue with the fee premium that Type-1 transactions add over a target
round length. It is the right side of the no-gap condition \eqref{eq:nogapcondition}, and
$-1-\nu$ is the limit of the elasticity in \cref{prop:stability}. The ratio $\xi$ is the
share of the block capacity that Type-1 arrivals fill over a target round length, and
$\xi<1$ because $\lambda B<s_b$.

With fixed hashpower, the third input is $\gamma/\gamma_0$, the aggregate hashpower divided
by $\gamma_0$ of \eqref{eq:gamma0}, which is the largest aggregate hashpower at which no gap
forms. A mining gap forms if and only if $\gamma/\gamma_0>1$ (\cref{prop:hashpower}). With
free entry, the third input is the cost ratio $\Gamma=\bar c_{\mathrm{op}}/\bar c_{\mathrm{cap}}$
of \cref{prop:statics}, and a mining gap forms if and only if $\Gamma>\nu$
(\cref{prop:nogap}).

In the free-entry setting, the scenarios hold only the ratio $\Gamma$ fixed. Both costs may change over time, for example through cheaper
electricity or more efficient machines, as long as they change in the same proportion.
Such a change alters how much hashpower the network attracts, but it leaves
$\gamma^*/\gamma_0$, the mining gap and its stability unchanged, because the revenue per
unit of hashpower adjusts together with the costs. In the fixed-hashpower setting, the
hashpower and its operating cost in dollars stay at their 2026 levels.

We report two security measures. The first is the expected mining gap
$\mathbb{E}[T_g]=B-1/\mu^*$ with $B=10$ minutes. The second is the share of the total
hashpower above which an adversary who mines without interruption finds blocks faster
than the honest network on average, so that its chain eventually overtakes the honest
chain with certainty. A smaller adversary can still succeed with positive probability
\citep{nakamoto2008}. By the requirement $\gamma_{\mathrm{adv}}>\mathcal W/B$ of
\cref{sec:hashpower} and $\mathcal W=\gamma/\mu^*$, this share equals $1/(1+\mu^*B)$. It is
one half without a gap and falls as the active phase $1/\mu^*$ shrinks.

\subsection{Estimating the Inputs}
\label{sec:numerical-estimation}
We estimate the inputs from Bitcoin block data of the six difficulty-adjustment periods
between block heights 957{,}600 and 969{,}695, which ran from 11 July to 3 October 2026
\citep{mempoolspace2026}. We write BTC for the unit of bitcoin and USD for United States
dollars. The block subsidy was $R_b=3.125$~BTC throughout, transaction fees averaged
0.0211~BTC per block, and blocks were 98\% full on average. Bitcoin measures the size of a
transaction in virtual bytes (vB), and a block holds at most $s_b=10^6$~vB
\citep{bitcoinDevBlockChain}. The two-tier mempool of the model is an approximation of the
many feerates of a real mempool, so the values below are estimates read through the model
and are prone to error. We report the range of each estimate and use one value in the
figures. \Cref{ec:fees} describes the estimation in detail.

From the block fees we estimate $s_b r_0$ and the fee premium $(r_1-r_0)\lambda B$. By
\eqref{eq:reward}, a block found $\tau$ minutes after its predecessor, in a round that
starts without Type-1 transactions waiting, earns the fees $s_b r_0+(r_1-r_0)\lambda\tau$.
We therefore use the 737 blocks whose predecessor was found within 2 minutes, which most
likely included every Type-1 transaction then waiting, and whose own $\tau$ is at most
5 minutes, before a block can fill with Type-1 transactions. A robust line through their
fees, the median of the slopes between pairs of blocks, gives the intercept
$s_b r_0=0.0045$~BTC and the premium $(r_1-r_0)\lambda B=0.0154$~BTC. Hence $\nu=203$ in
2026, and $\nu$ would be $0.29$ without the subsidy. When the predecessor must be found
within 1 or 3 minutes instead of 2, the value without the subsidy ranges from 0.23 to 0.32.

The block fees reveal only the product $(r_1-r_0)\lambda$ and not its two factors
separately, so we estimate $\lambda$ from the size of the transactions. We count a transaction as Type-1 if its feerate exceeds $r_0$ by
more than 20\%, and $\lambda$ is the growth of the Type-1 size of a block with $\tau$ in
the same 737 blocks. This gives $\lambda=73{,}000$~vB per minute and $\xi=0.73$, which
satisfies $\lambda B<s_b$. The estimate stays between 0.70 and 0.75 for every margin above
$r_0$ from 5\% to 100\%. A margin above about 120\% would put the barrier above 1 satoshi per vB
(one satoshi is $10^{-8}$~BTC), the feerate that many transactions pay, and these
transactions accumulate in the same way as Type-1 transactions. Two checks support the
two-tier reading. The fees paid below the barrier fall with $\tau$ at the rate
$r_0\lambda$, as they should when each arriving Type-1 transaction displaces a Type-0 one.
A regression of the block fees on the Type-1 size, which removes the effect of any Type-1
transactions left over from earlier rounds, gives $\nu=0.29$ to 0.30 without the subsidy.

The cost ratio $\Gamma$ cannot be read from the blockchain, so we combine three sources.
First, \citet{budish2025trust} uses a capital share of 0.4 in the cost of mining, which
gives $\Gamma=0.6/0.4=1.5$. Second, without a gap the free-entry condition
\eqref{eq:nogapentry} states that the revenue per unit of hashpower equals its operating
cost plus its capital cost. Over the data window, the block revenue was 35.8~USD per day
for each PH/s of hashpower, where one PH/s is $10^{15}$ hash attempts per second and the
network hashpower is implied by the difficulty. We approximate the operating cost by the
cost of the electricity that this hashpower draws, between 17.4 and 18.3 kilowatts
including cooling and power conversion. The lower value adds the 10\% overhead of \citet{cbeci2026} to the 15.8
joules per $10^{12}$ hash attempts of a large listed miner's machines
\citep{coinshares2026q2}. The upper value divides the network's power demand estimated by
\citet{cbeci2026} by its hashpower. At 0.05~USD per kilowatt-hour, the value used by
\citet{capponi2023mining}, the operating cost is 21 to 22~USD per day per PH/s. The capital
cost is the remaining 14 to 15~USD, so $\Gamma$ lies between 1.4 and 1.6. Third, an
Antminer S21 XP uses 13.5 joules per $10^{12}$ hash attempts and was priced at 25.70~USD
per TH/s, where one TH/s is $10^{12}$ hash attempts per second \citep{hashrateindex2026}.
With the same 10\% overhead and its capital rented at 50\% per year, as in
\citet{budish2025trust}, this machine has $\Gamma=0.5$. We use the cost ratio $\Gamma=1.5$
as the estimate, and $\Gamma=0.5$, $\Gamma=3$ and $\Gamma=10$ as a low, a high and a very
high cost ratio. Because $\Gamma=\bar c_{\mathrm{op}}/\bar c_{\mathrm{cap}}$, a higher cost
ratio can come from a higher operating cost or from a lower capital cost. Our focus is the
second case, in which mining hardware becomes cheap relative to the electricity it uses.
By cheap hardware we mean a capital cost that is small relative to the operating cost,
which makes $\Gamma$ large.

For the fixed-hashpower setting we need $\gamma/\gamma_0$ in 2026, the aggregate hashpower
divided by the largest hashpower at which no gap forms. By \eqref{eq:gamma0}, this ratio
equals the operating cost of the network over a target round divided by the baseline
block revenue $R_b+s_b r_0$ in USD. As above, we approximate the operating cost by the
cost of the electricity that the network uses. The difficulty implies an aggregate
hashpower of 915~EH/s, that is, $9.15\times10^{20}$ hash attempts per second. At the power
use above, it costs 133 to 140 thousand USD of electricity per target round, while
$R_b+s_b r_0$ was worth 227 thousand USD at the average price of 72{,}424~USD per BTC. The
ratio lies between 0.58 and 0.62, and we use 0.60. The free-entry equilibrium with
$\Gamma=1.5$ gives the same value. Both lie below one, so the model predicts no mining gap
in 2026, in line with \citet{kimSeo2026mind}, who find no deliberate idle periods in
Bitcoin's hashpower data for 2022 and 2023. \Cref{tab:inputs} summarizes the estimates.

\begin{table}[t]
\centering
\caption{Estimated inputs for 2026.}
\label{tab:inputs}
\begin{tabular}{@{}lrr@{}}
\toprule
Quantity & Value used & Range \\
\midrule
Baseline fee $s_b r_0$ per block & 0.0045 BTC & \\
Fee premium $(r_1-r_0)\lambda B$ per target round & 0.0154 BTC & \\
Ratio $\nu$ in 2026 & 203 & \\
Ratio $\nu$ without the subsidy & 0.29 & 0.23 to 0.32 \\
Type-1 share of block capacity $\xi$ & 0.73 & 0.70 to 0.75 \\
Cost ratio $\Gamma$ & 1.5 & 0.5 to 1.6 \\
Hashpower over the largest gap-free hashpower, $\gamma/\gamma_0$ & 0.60 & 0.58 to 0.62 \\
\bottomrule
\end{tabular}
\end{table}

\paragraph{Scenarios.} The block subsidy halves every 210{,}000 blocks
\citep{bitcoinDevBlockChain}, which we date to 2028, 2032 and every four years after. For
the price and the fees, we let one quantity change at a time from 2026. Over the past ten
years, the fees in a block, measured in USD, have tended to rise when the bitcoin price
rose and to fall when it fell. Over the whole decade, however, the fees in a block,
measured in BTC, fell from about 0.28~BTC to about 0.02~BTC, while the price of one
bitcoin rose from about 400~USD to about 72{,}000~USD \citep{mempoolspace2026}. The data
therefore do not tell whether fees will follow the price, so our scenarios include both
possibilities.
\begin{itemize}
\item S0, the baseline: the price, the fees and the costs stay at their 2026 values, and
only the subsidy halves.
\item S1: the price grows by 10\% per year, and fees stay fixed in USD, so fees in BTC fall.
\item S2: the price grows by 10\% per year, and fees follow the price, so fees in BTC stay
fixed.
\item S3: the price stays fixed, and fees grow by 10\% per year in USD and in BTC.
\end{itemize}
Under S2, the price and the fees in USD grow at the same rate, so the fees in BTC, and with
them the ratio $\nu$, follow the same path as under S0. The free-entry equilibrium depends on
the scenario only through $\nu$, $\xi$ and $\Gamma$, so its results under S2 are the same as
under S0. The share $\xi$ stays at 0.73 in all scenarios.

\subsection{Free Entry Postpones the Gap}
\label{sec:numerical-entry}
\Cref{fig:num-hashpower} shows $\gamma/\gamma_0$, the aggregate hashpower divided by the
largest hashpower at which no gap forms, in both settings. With fixed hashpower, $\gamma_0$
falls at every halving unless the bitcoin price rises enough to offset it. In the baseline
S0, the 2028 halving raises $\gamma/\gamma_0$ from 0.60 to 1.20, so a gap forms in 2028,
and $\gamma/\gamma_0$ reaches about 300 by 2070. A price that grows by 10\% per year (S1 and
S2) lowers the ratio between halvings, but each halving raises it again, and the gap forms
after the 2032 halving. Growing fees (S3) lower the ratio only after about 2050, once fees
make up a large part of the block revenue. With free entry and $\Gamma=1.5$, the ratio
$\gamma^*/\gamma_0$ rises only from 0.60 in 2026 to 2.5 by 2070 under S0 and S2, and it
crosses one in 2056. It crosses one in 2044 when fees grow (S3), and it stays below one when
fees fall in BTC (S1). The free-entry hashpower $\gamma^*$ and the threshold $\gamma_0$ both depend on the block
revenue relative to the costs of mining, so each of them can rise or fall over time as the
costs change. A halving moves them together, and their ratio $\gamma^*/\gamma_0$ changes
only through $\nu$, which is why it rises slowly.

\begin{figure}[t]
\centering
\includegraphics[width=0.92\textwidth]{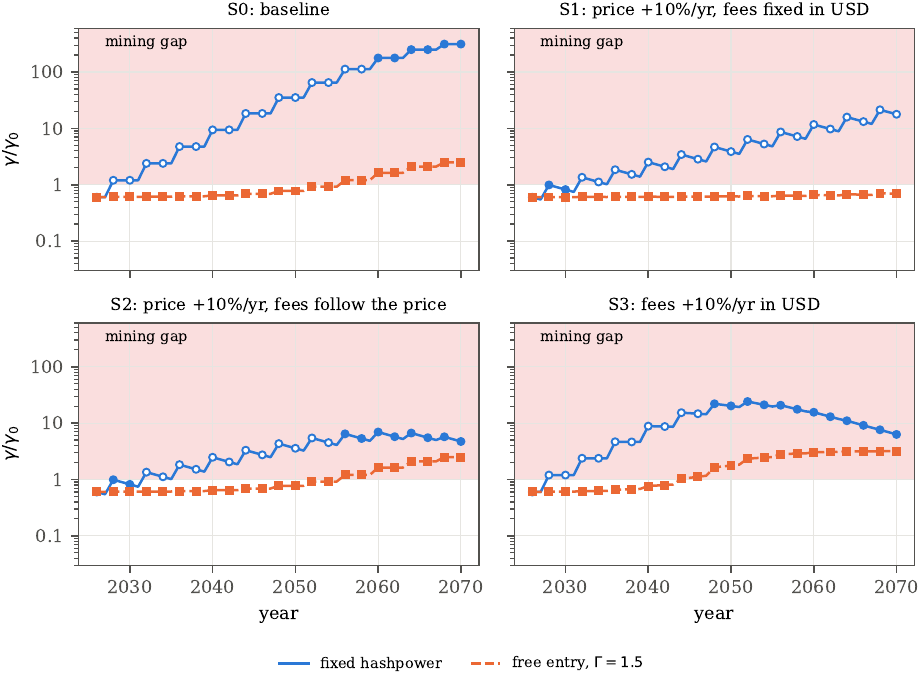}
\caption{Aggregate hashpower divided by the largest aggregate hashpower at which no gap
forms ($\gamma/\gamma_0$), from 2026 to 2070, for the four scenarios. A mining gap forms in
the shaded region where this ratio exceeds one. Solid lines keep the aggregate hashpower
and its operating cost at their 2026 levels (fixed hashpower). Dashed lines show the
free-entry equilibrium with $\Gamma=1.5$. Open markers show years in which the invariant
winning probability is unstable.}
\label{fig:num-hashpower}
\end{figure}

\subsection{Capital Costs Protect the Network}
\label{sec:numerical-capital}
\Cref{fig:num-capital} repeats the free-entry equilibrium for four cost ratios. A higher
$\Gamma$ means a lower capital cost relative to the operating cost. With fees constant in
BTC (S0 and S2), a gap forms in 2064 with $\Gamma=0.5$, in 2056 with $\Gamma=1.5$, in 2052
with $\Gamma=3$ and in 2044 with $\Gamma=10$. Growing fees (S3) bring each date forward, to
2052, 2044, 2043 and 2037. Fees that fall in BTC (S1) keep the network free of a gap through
2070 for every $\Gamma$ except 10, for which the gap forms in 2064. Without the subsidy,
$\nu$ equals $s_b r_0/((r_1-r_0)\lambda B)=0.29$ in every scenario, because the scenarios
change both fee tiers in the same proportion. As the subsidy shrinks toward zero, a gap
therefore forms in every scenario whenever $\Gamma$ exceeds 0.29, and none of the scenarios
prevents it. A low cost ratio delays the gap and keeps it short, as
\cref{sec:entry,sec:prevent} predict.

\begin{figure}[t]
\centering
\includegraphics[width=\textwidth]{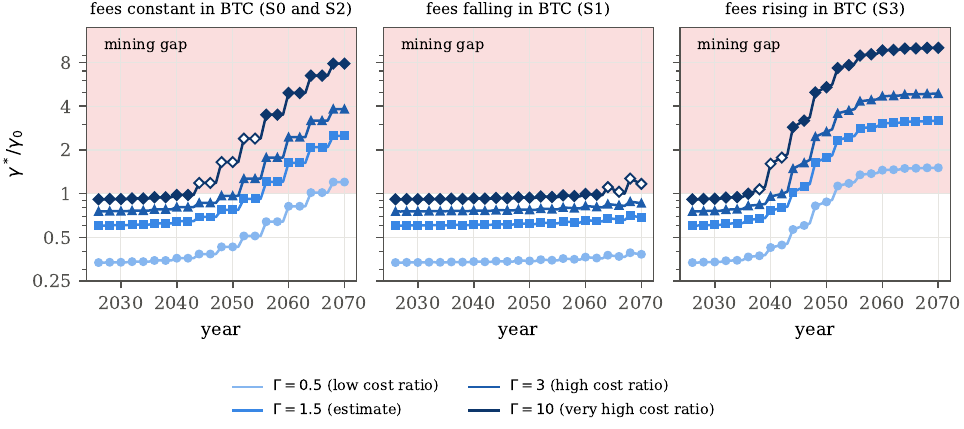}
\caption{Free-entry hashpower divided by the largest aggregate hashpower at which no gap
forms ($\gamma^*/\gamma_0$), for four cost ratios
$\Gamma=\bar c_{\mathrm{op}}/\bar c_{\mathrm{cap}}$. A mining gap forms in the shaded region
where this ratio exceeds one. The panels differ in the path of fees in BTC. Open markers
show years in which the invariant winning probability is unstable.}
\label{fig:num-capital}
\end{figure}

\subsection{The Length of the Mining Gap}
\label{sec:numerical-gap}
\Cref{fig:num-gap} shows the expected mining gap and the share of the total hashpower that
an adversary needs, as defined in \cref{sec:numerical-inputs}. With fixed hashpower in the
baseline S0, the gap lasts 1.6 minutes after the 2028 halving, 5.8 minutes after the 2032
halving and 8.9 minutes by 2040. By 2070 it takes 9.9 minutes of the 10-minute round, and
the adversary needs only 1.1\% of the total hashpower. When the price and the fees grow together (S2), the gap forms in 2032
and peaks at 7.4 minutes in 2052, when the adversary needs 21\% of the total hashpower. With
free entry, the gap is much shorter. With $\Gamma=1.5$, it forms in 2056 and reaches 2.4
minutes by 2070, when the adversary needs 43\% of the total hashpower. With $\Gamma=3$, it
forms in 2052 and reaches 4.0 minutes by 2070, when the adversary needs 37\%. In the years
drawn with open markers, the invariant winning probability is unstable. In those years the
reported values describe the invariant point, while the winning probability would
oscillate around it, so the realized gap can differ.

\begin{figure}[t]
\centering
\includegraphics[width=\textwidth]{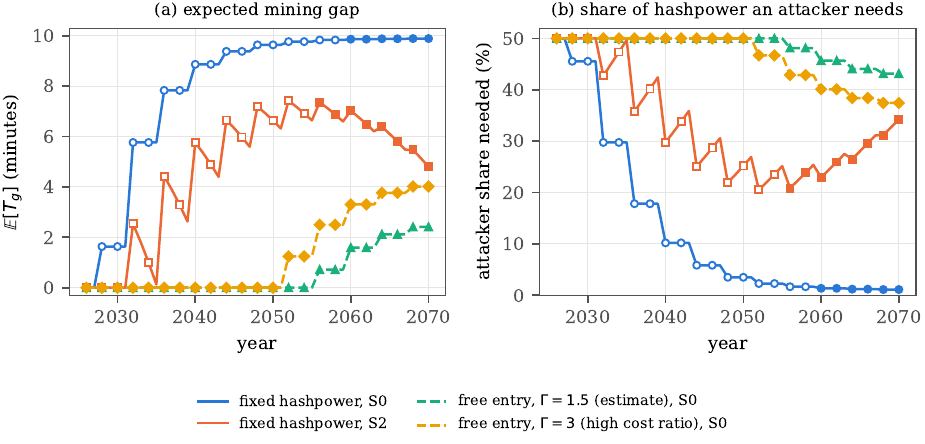}
\caption{(a) Expected mining gap $\mathbb{E}[T_g]$ in minutes. (b) Share of the total
hashpower above which an adversary who mines without interruption outpaces the honest
chain on average. Solid lines keep the aggregate hashpower at its 2026 level under S0 and
S2. Dashed lines show the free-entry equilibrium under S0 for the estimated cost ratio
$\Gamma=1.5$ and for a high cost ratio, $\Gamma=3$. Open markers show years in which the
invariant winning probability is unstable, and there the values describe the invariant
point.}
\label{fig:num-gap}
\end{figure}

\subsection{Stability Is Not Guaranteed}
\label{sec:numerical-stability}
\Cref{fig:num-stability} follows the stability of the invariant winning probability in the
free-entry equilibrium. The elasticity $\varepsilon(w^*)$ of \cref{prop:stability} depends
only on where the invariant point lies, which in the inputs of \cref{sec:numerical-inputs}
is given by $\nu$, $\gamma^*/\gamma_0$ and $\xi$. Panel (a) divides the plane of $\nu$ and
$\gamma^*/\gamma_0$ at $\xi=0.73$ into the region without a gap, the region with a gap and a
stable invariant point, and the region with a gap and an unstable one. The boundary between
the last two lies at $\nu$ between about 1.0 and 2.1. For a given $\Gamma$, every scenario moves
the free-entry equilibrium along the same curve, because the equilibrium depends on the
fees only through $\nu$. The scenario sets only the speed. As $\nu$ falls with the halvings,
each curve crosses into the gap region at $\nu=\Gamma$. With $\Gamma=1.5$, this crossing lies
in the stable region, and the invariant point stays stable in every year with a gap. With
$\Gamma=10$, the crossing lies in the unstable region, and the invariant point is unstable
from 2044 to 2055 under S0 and from 2037 to 2043 under S3, as panel (b) shows.

The carryover of Type-1 transactions matters here. With a mempool that resets after
every block, the elasticity would equal $-1-\nu$ at every aggregate hashpower (\cref{lem:reset}),
and the invariant point
with $\Gamma=1.5$ would be unstable for the first four years with a gap. With fixed
hashpower, the instability is far stronger. The gap then forms right after a halving, while
the subsidy still dominates the block revenue, and $\varepsilon(w^*)=-61$ when the gap forms
in 2028 under S0. The invariant point stays unstable until 2059. Our analysis is local, so
in unstable years we report the invariant point and do not model the dynamics around it.

\begin{figure}[t]
\centering
\includegraphics[width=\textwidth]{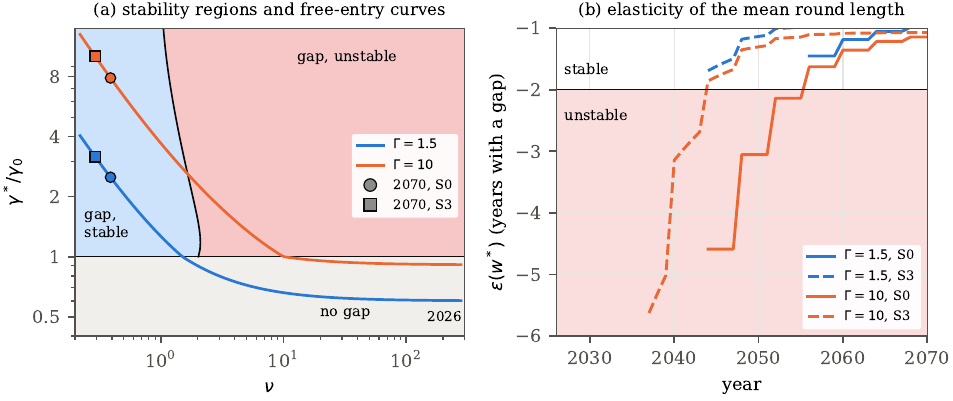}
\caption{Stability of the invariant winning probability in the free-entry equilibrium.
(a) Regions without a gap, with a gap and a stable invariant point, and with a gap and an
unstable invariant point, at $\xi=0.73$. The vertical axis is the free-entry hashpower
divided by the largest aggregate hashpower at which no gap forms. The curves show the
free-entry equilibrium as $\nu$ varies, for $\Gamma=1.5$ and $\Gamma=10$. The network
starts at the right end of each curve in 2026 and moves left as the subsidy falls. The circle
and the square on each curve show its position in 2070 under S0 and under S3. (b) Elasticity
$\varepsilon(w^*)$ of the mean round length in the years with a gap, for $\Gamma=1.5$ and
$\Gamma=10$, with solid lines for S0 and dashed lines for S3. The invariant point is
unstable below $-2$.}
\label{fig:num-stability}
\end{figure}

\subsection{Limits of the Illustration}
\label{sec:numerical-limits}
Three features of the illustration could change its results. First, we fit a two-tier fee
structure to a mempool in which transactions pay many different fees, so every estimate
carries the error of this simplification. Second, the fee regime can shift. In the
congested early months of 2024, the value of $\nu$ without the subsidy was about 1.0 instead
of 0.29, and a gap would then form only for $\Gamma$ above 1. Third, free entry describes
the long run. The capital cost of machines already bought is sunk, so the network adjusts
to a halving only gradually and can spend some time close to the fixed-hashpower setting.
\section{Discussion}
\label{sec:discussion}
Several modeling choices delimit the scope of the results.
First, the aggregate analysis uses a common operating cost and a common
capital cost up to null sets. Cost heterogeneity would spread activation
times across participants, and low-cost participants could mine earlier
than high-cost participants. The common-cost case makes the aggregate
activation rule a single threshold, which allows a clean stochastic
backlog analysis. Extending the analysis to heterogeneous costs is left for
future research. Second, the within-epoch backlog distribution is treated as stationary.
Each epoch has many rounds, and the recursion is ergodic under the
stability condition discussed earlier. The first rounds of an epoch still
inherit a transient state from the previous epoch. The main text abstracts
from this transient component, and measuring its effect on the mining gap is
left for future research. Third, the difficulty adjustment analysis is local. The map identifies the
invariant winning probability and the condition under which nearby perturbations
return to it. When the invariant point is unstable, the protocol can
oscillate across epochs. In the illustration of \cref{sec:numerical-stability}, this regime
appears when the cost ratio $\Gamma$ is very high, for up to 12 years after a gap first forms, and for
many of the years with a gap when hashpower cannot leave the network. That regime requires a separate analysis of the
joint evolution of backlog, entry, and difficulty, which we leave for future
research. Fourth, the fee process is intentionally simple. The two-tier fee market
separates baseline revenue from the premium attached to urgent
transactions. Real mempools
have many fee levels, strategic fee bidding, and endogenous demand.
\citet{seoKim2026priority}, for example, find that users bid lower fees when mining gaps
are longer, so the gap can feed back into the fee levels that we take as given. Those
features can be added, but they would obscure the main subject studied
in the paper. Incorporating them is a direction for future research. Fifth, the activation
decision compares the reward rate with the operating cost alone, so the only alternative
to mining is to stay idle. Hardware built for one PoW puzzle can also mine other coins
that use the same puzzle, such as Bitcoin Cash for Bitcoin's puzzle, and miners move
their hashpower between such coins as their relative profitability changes
\citep{kwon2019bitcoin}. With such an alternative, the cost of mining a coin also
includes the opportunity cost of the reward forgone on the other coin. Hashpower that
leaves Bitcoin during the gap would then secure another coin, so Bitcoin would still lose
it. Modeling this choice requires the joint dynamics of several coins and their
difficulty-adjustment mechanisms, which we leave for future research. Finally, the security interpretation is qualitative. During the gap, an adversary who mines continuously faces no honest
competition. This paper identifies the expected gap length as a primitive input for double-spending and selfish mining analysis.
Computing attack success probabilities from this input is left for future research.

Taken together, the results suggest that the three security metrics considered here, namely
the mining gap, the hashwork, and the equilibrium aggregate hashpower, do not move in concert under the
parameter changes we study (\cref{tab:prevent,tab:reduce}). Evaluating any protocol change or policy proposal therefore requires
specifying which metric is being optimized and at what cost to the others.
\section{Conclusion}
\label{sec:conclusion}
Bitcoin is moving toward the conditions under which a mining gap forms, and it moves
closer on a fixed schedule. The halving schedule cuts the block subsidy roughly every
four years until it reaches zero, and transaction fees must then pay for the security
of the network. Our results show that a smaller subsidy lowers the revenue that
miners can earn from the very start of a round, and this makes it harder to
keep the network free of a mining gap. Each halving therefore moves the network closer to a regime in which honest
miners sit idle at the start of every round while an attacker who keeps
mining faces no honest competition.

The central result is that aggregate hashpower can worsen the mining gap
problem. Once a gap forms, more hashpower means that blocks are found faster once
mining starts. The difficulty-adjustment mechanism keeps the average round length fixed, so
the time spent mining shrinks and the gap expands. In the limit of unbounded
hashpower, the gap approaches the full length of a round.

Capital cost prevents this limiting failure. Miners keep joining the network only
while expected revenue exceeds their capital and operating costs, and this pins
aggregate hashpower at a finite level. In our illustration with Bitcoin data, a capital
cost that is high relative to the operating cost delays the mining gap until 2064 and keeps
it below half a minute through 2070, while cheap hardware brings the gap forward to 2044 and lengthens it to six and a
half minutes by 2070, when the equilibrium hashpower is about eight times the largest
hashpower at which no gap forms (\cref{sec:numerical-capital}).
This result is a warning against efforts to make mining hardware cheap. Several
cryptocurrencies regard mining on ordinary, widely available hardware as a benefit.
Monero, for example, uses RandomX, a PoW puzzle designed to run efficiently on general-purpose
processors, so that specialized hardware offers little advantage and mining becomes
more decentralized \citep{tevadorRandomX}. Memory-hard puzzles have been proposed
with the same egalitarian aim \citep{biryukov2016egalitarian,biryukov2017equihash}.
In our model, lowering the capital cost of mining weakens the force that limits the
growth of hashpower, and with it the growth of the mining gap. Monero also uses a
different difficulty-adjustment algorithm and an adaptive block size
\citep[Ch.~7]{koe2020zeroToMonero}, whereas our
model assumes that every block has the same fixed capacity. Whether our results carry
over to Monero directly therefore requires a separate analysis. The balance between operating and capital costs that drives our results
is, however, present in any PoW network whose difficulty adjustment targets a fixed
average block time. Designers of such networks should weigh this balance before
pursuing cheaper mining hardware.

The block subsidy calls for the same care. Among the parameters we study, it is the
only one that shortens the gap while raising both the equilibrium hashpower and the
computational work behind each block. Cutting the subsidy therefore weakens all three
of the security measures we analyze at once. Our results should be read as an
optimistic limit. The model excludes forks and assumes that the difficulty-adjustment
mechanism settles at a steady puzzle difficulty, and both choices favor the network.
We therefore expect real networks to fare no better than our results suggest when the
subsidy shrinks. This caution applies to Bitcoin's halvings and to any other PoW
cryptocurrency that plans to reduce or end its subsidy.

A higher fee on every transaction offers a partial remedy. Raising the fee paid by the
cheapest transactions shortens the gap and increases the computational work behind
each block. When no gap exists, it helps prevent one in the same way as the block
subsidy. A network could adopt policies that raise this fee floor. The remedy has a
cost, since users then pay more for every transaction, and its effect on the
equilibrium hashpower depends on the cost structure of the network. It is a trade-off
between the cost of using the network and its security.

The analysis suggests that the composition of miner revenue matters as
much as its level. Revenue that is available immediately keeps miners
active. Revenue that accumulates only after delay can make waiting optimal, even though the resulting idle phase weakens the
network.

\ACKNOWLEDGMENT{The authors thank Damra Taylan Benli and Muhammed Serhan Hosta for their
insights and for many productive conversations that helped develop the ideas in this
paper.}
\bibliographystyle{plainnat}
\bibliography{references}
\ECSwitch
\ECHead{Proofs and Estimation Details}

\section{Notation}
\label{ec:notation}
In addition to the notation of the main text, the proofs use the following shorthand.
\begin{itemize}
\item $\alpha:=r_1-r_0$ is the fee premium and $R_0:=R_b+s_b r_0$ is the baseline block
revenue, as in \cref{sec:entry}. We also write $V:=R_0/\alpha$.
\item $D:=s_b-\lambda B$ is the block capacity that remains after the expected Type-1
arrivals of one target round length. It is positive whenever $\lambda B<s_b$.
\item $Q:=\lambda B-s_1$ and $\phi:=s_b-s_1=D+Q$.
\item $\theta:=\mu/\lambda$, $K$ is the positive root of \eqref{eq:char}, and
$\beta:=1-\theta s_b e^{-Ks_b}$.
\item $g(K,\phi):=(1-e^{-K\phi})/K$, with partial derivatives
\[
\begin{aligned}
g_K&=-\frac{1-e^{-K\phi}-K\phi\, e^{-K\phi}}{K^2}<0,\\
g_\phi&=e^{-K\phi}\in(0,1).
\end{aligned}
\]
The sign of $g_K$ follows from $e^{x}>1+x$ for $x>0$.
\item $\Gamma:=\bar c_{\mathrm{op}}/\bar c_{\mathrm{cap}}$ is the ratio of operating cost
to capital cost.
\end{itemize}
An asterisk marks a quantity evaluated at the invariant winning probability $w^*$, for
example $\mu^*=w^*\gamma$, $s_1^*=s_1(w^*)$ and $Q^*=\lambda B-s_1^*$.

\section{The Myopic Activation Rule}
\label{ec:objective}
\proof{Proof of \cref{prop:myopic}.}
Fix a profile $\mathbf{a}_{-i}$ under which rounds have a finite expected length, and let
$a_i$ be any strategy of participant $i$. Because $\eta(\{i\})=0$, changing $a_i$ changes
the integrand of \eqref{eq:active_hashpower} only on a null set, so $\gamma_{\mathrm{act}}$
is the same function of the history for every choice of $a_i$. The survival function
$\bar F$, the round lengths $T^{(j,n)}$, the backlogs $X^{(j,n)}$ and the winning
probabilities $w^{(j)}$ are determined by $\gamma_{\mathrm{act}}$, the transaction
arrivals and the mining outcomes. Their joint distribution, and with it the denominator
of $\rho_i$, therefore does not depend on $a_i$.

Consider a round with backlog $X$ and winning probability $w^{(j)}$, and let $a_i^*$
denote the rule \eqref{eq:myopic_rule}. Since $a_i(t)\in[0,1]$, every history and every
$t$ satisfy
\[
a_i(t)\,\pi_i(t)\;\le\;\max\{\pi_i(t),0\}\;=\;a_i^*(t)\,\pi_i(t).
\]
Multiplying by $\bar F(t\mid X,\mathbf{a}_{-i})>0$ and integrating over $t$ gives
$\Pi_i^{(j,n)}(a_i;\mathbf{a}_{-i})\le\Pi_i^{(j,n)}(a_i^*;\mathbf{a}_{-i})$. The right
side is finite because $\pi_i(t)\le w^{(j)}R_{\max}$ and $\int_0^\infty\bar F\,dt$ is the
expected round length. Taking expectations and summing over rounds and epochs, the
numerator of $\rho_i$ under $a_i$ is at most the numerator under $a_i^*$ for every $J$,
while the denominators coincide. Hence
$\rho_i(a_i;\mathbf{a}_{-i})\le\rho_i(a_i^*;\mathbf{a}_{-i})$. The profile
$\mathbf{a}_{-i}$ was arbitrary, so $a_i^*$ is weakly dominant. The dominance is weak
because any $a_i$ that differs from $a_i^*$ only on a set of times of zero length earns
the same profit rate.

Finally, $R(X,t)$ is non-decreasing in $t$, so the set of times at which
$w^{(j)}R(X,t)\ge c_{\mathrm{op}}(i)$ is either empty or an interval $[t^*,\infty)$.
Under $a_i^*$, participant $i$ is therefore active from $t^*$ until the round ends.
\Halmos
\endproof

\section{The Backlog and the Mining Gap}
\label{ec:backlog}
\begin{lemma}
\label{lem:root}
Suppose $\theta s_b>1$. Then \eqref{eq:char} has a unique positive root $K\in(0,\theta)$,
given by $K=\theta+W_0\bigl(-\theta s_b e^{-\theta s_b}\bigr)/s_b$. At this root
$\beta\in(0,1)$, and
\begin{equation}
\frac{dK}{d\theta}=\frac{K}{\theta\beta}>0.
\label{eq:dKdtheta}
\end{equation}
Moreover, $K\to0$ as $\theta s_b\downarrow1$, and $K\to\infty$ with $K/\theta\to1$ as
$\theta\to\infty$.
\end{lemma}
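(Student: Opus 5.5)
The plan is to study the function $h(K):=\theta(1-e^{-Ks_b})-K$ on $[0,\infty)$, whose positive zeros are exactly the positive roots of \eqref{eq:char}.

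\emph{Existence and uniqueness.} We have $h(0)=0$, $h'(K)=\theta s_b e^{-Ks_b}-1$ and $h''(K)=-\theta s_b^2e^{-Ks_b}<0$, so $h$ is strictly concave. Since $\theta s_b>1$, $h'(0)>0$, so $h>0$ just to the right of $0$. Also $h(\theta)=-\theta e^{-\theta s_b}<0$. By continuity $h$ has a zero in $(0,\theta)$. Strict concavity together with $h(0)=0$ allows at most one further zero, so this zero $K$ is the unique positive root, and it lies in $(0,\theta)$.

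\emph{Lambert form.} Set $y:=(K-\theta)s_b$. Then \eqref{eq:char} becomes $y e^{y}=-\theta s_b e^{-\theta s_b}$. The right side lies in $(-1/e,0)$, so it has two real preimages under $z\mapsto ze^z$: the trivial one $y=-\theta s_b<-1$, which corresponds to $K=0$, and the principal one $y=W_0(\cdot)\in(-1,0)$. The nontrivial root is therefore $K=\theta+W_0(-\theta s_b e^{-\theta s_b})/s_b$.

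\emph{The sign of $\beta$ and the derivative.} Using $e^{-Ks_b}=1-K/\theta$, we get $\beta=1-\theta s_b e^{-Ks_b}=-h'(K)$. Concavity gives $h'(K)<0$, since $h$ rises from $0$ and returns to $0$ at $K$; hence $\beta>0$. The bound $\beta<1$ is immediate because $\theta s_b e^{-Ks_b}>0$. For the derivative, differentiate $K=\theta(1-e^{-Ks_b})$ implicitly in $\theta$:
\[
K'=(1-e^{-Ks_b})+\theta s_b e^{-Ks_b}K' .
\]
This gives $K'\beta=1-e^{-Ks_b}=K/\theta$, which is \eqref{eq:dKdtheta}. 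The implicit function theorem applies because $\beta\neq0$.

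\emph{Limits.} As $\theta\to\infty$, $K<\theta$ and $1-K/\theta=e^{-Ks_b}$. Since $K$ is increasing in $\theta$, it has a limit. If that limit were finite, the left side $1-K/\theta$ would tend to $1$ while the right side $e^{-Ks_b}$ stays bounded below $1$, a contradiction; so $K\to\infty$, hence $e^{-Ks_b}\to0$ and $K/\theta\to1$. Alternatively, $W_0(z)\to0$ as $z\to0^-$ gives the same conclusion directly. As $\theta s_b\downarrow1$, the argument $-\theta s_b e^{-\theta s_b}\to-1/e$, so $W_0\to-1$ and $K\to(\theta s_b-1)/s_b\to0$ by continuity of $W_0$ at $-1/e$.

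The only delicate step is choosing the correct branch of the Lambert $W$ function and making sure the trivial root is excluded. The comparison $-\theta s_b<-1$ settles this.
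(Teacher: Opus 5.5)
Your proof is correct and follows essentially the same route as the paper: strict concavity of $\theta(1-e^{-Ks_b})$ gives existence, uniqueness and $\beta\in(0,1)$; the same Lambert substitution puts the trivial root $K=0$ on the lower branch; implicit differentiation gives \eqref{eq:dKdtheta}; and monotonicity of $K$ handles $\theta\to\infty$. The only real difference is the limit $\theta s_b\downarrow1$, where you use continuity of $W_0$ at $-1/e$ (stated more cleanly, $K=(\theta s_b+W_0(\cdot))/s_b\to(1-1)/s_b=0$) instead of the paper's elementary bound $(1-e^{-x})/x<1$; both arguments are valid.
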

\proof{Proof.}
Let $h(K):=\theta(1-e^{-Ks_b})$. The function $h$ is strictly increasing and strictly
concave, with $h(0)=0$, $h'(0)=\theta s_b>1$ and $h<\theta$. Hence $h(K)>K$ for small
$K>0$ and $h(K)<K$ for $K\ge\theta$, and by strict concavity the equation $h(K)=K$ has
exactly one positive root, which lies in $(0,\theta)$. At this root the concave function
$h$ crosses the identity from above, so $0<h'(K)=\theta s_b e^{-Ks_b}<1$, which gives
$\beta\in(0,1)$.

For the closed form, set $a:=\theta s_b>1$ and $u:=Ks_b$. Then \eqref{eq:char} reads
$u=a(1-e^{-u})$, which is equivalent to $(u-a)e^{u-a}=-ae^{-a}$. The right side lies in
$(-1/e,0)$, where the Lambert W function has two real branches. The lower branch returns
$u-a=-a$, which is the trivial root $u=0$. The principal branch returns
$u-a=W_0(-ae^{-a})>-1>-a$, so $u>0$, and this is the positive root.

Differentiating $K=h(K)$ with respect to $\theta$ gives $K'=(1-e^{-Ks_b})+h'(K)K'$.
Since $1-e^{-Ks_b}=K/\theta$, this yields $K'=K/\bigl(\theta(1-h'(K))\bigr)=K/(\theta\beta)$.

For the limits, fix $k>0$. Because $(1-e^{-x})/x<1$ for $x>0$, we have $h(k)<k$ once
$\theta s_b$ is close enough to one, so the positive root lies below $k$. Since $k$ is
arbitrary, $K\to0$. As $\theta\to\infty$, fix $\theta_0$ with $\theta_0s_b>1$
and let $K_0$ be the corresponding root. For $\theta\ge\theta_0$, $K\ge K_0$ because $K$
increases in $\theta$, so $K=h(K)\ge\theta(1-e^{-K_0s_b})$, which grows without bound.
Then $K/\theta=1-e^{-Ks_b}\to1$.
\Halmos
\endproof

The stationary law in \cref{prop:backlog} follows from the round-by-round distribution
of the backlog. Write $F_X^{(n)}(x):=\Pr(X^{(n)}\le x)$. There are two cases. Suppose
$X^{(n-1)}\le s_1$. Then $\max(X^{(n-1)},s_1)=s_1$, and the event $\{X^{(n)}\le x\}$
reduces to $\{\lambda T_\mu^{(n-1)}\le x+s_b-s_1\}$, which has probability
$1-e^{-\theta(x+s_b-s_1)}$. Suppose instead that $X^{(n-1)}=y$ for some $y>s_1$. Then
the event $\{X^{(n)}\le x\}$ becomes $\{\lambda T_\mu^{(n-1)}\le x+s_b-y\}$, with
probability $1-e^{-\theta(x+s_b-y)}$. The law of total probability gives
\begin{equation}
\begin{aligned}
F_X^{(n)}(x) =\; & F_X^{(n-1)}\bigl(s_1\bigr)\Bigl[1 -
e^{-\theta(x+s_b-s_1)}\Bigr]\\
&+ \int_{s_1}^{x+s_b}\Bigl[1 - e^{-\theta(x+s_b-y)}\Bigr]\,
dF_X^{(n-1)}(y)
\label{eq:transientcdf}
\end{aligned}
\end{equation}
The first term collects all rounds where the previous backlog was at or below $s_1$. In
those rounds, mining starts from exactly $s_1$, and the next-round backlog depends only
on the mining duration. The second term integrates over all previous backlogs above
$s_1$.

We pass \eqref{eq:transientcdf} to its stationary limit, in which
$F_X^{(n)}=F_X^{(n-1)}=F_X$. Since $\lambda T_\mu^{(n-1)}$ has a density, $X$ has no atom
on $(0,\infty)$, and the integral in \eqref{eq:transientcdf}, which runs over
$y\in(s_1,x+s_b]$, can be written with a density. Differentiating
\eqref{eq:transientcdf} once with respect to $x$ yields the steady-state integral
equation
\begin{equation}
\begin{aligned}
f_X(x)={}&\theta e^{-\theta(x+s_b)}\Bigl[F_X(s_1)\,e^{\theta s_1}\\
&+\int_{s_1}^{x+s_b}e^{\theta y}f_X(y)\,dy\Bigr],\qquad x>0.
\end{aligned}
\label{eq:integraleq}
\end{equation}
Differentiating once more eliminates the integral and produces the delay differential
equation
\begin{equation}
f_X'(x) = \theta\bigl[f_X(x+s_b) - f_X(x)\bigr], \qquad x > 0.
\label{eq:dde}
\end{equation}
Substituting the exponential ansatz $f_X(x)=Ae^{-Kx}$ into \eqref{eq:dde} gives
$-K=\theta\bigl(e^{-Ks_b}-1\bigr)$, which is the characteristic equation
\eqref{eq:char}. We take its positive root $K$ from \cref{lem:root}, so that the
density is integrable.

It remains to find the constant $A$. Equation \eqref{eq:dde} cannot determine it,
since any constant multiple of a solution of \eqref{eq:dde} is again a solution. The
information about $A$ was lost when we differentiated \eqref{eq:integraleq}, so we
return to \eqref{eq:integraleq}. Normalization gives $p_0=1-A/K$, and hence
$F_X(s_1)=p_0+\int_0^{s_1}Ae^{-Ky}\,dy=1-(A/K)e^{-Ks_1}$. Under the ansatz, the
integral in \eqref{eq:integraleq} equals
\[
\int_{s_1}^{x+s_b}e^{\theta y}Ae^{-Ky}\,dy
=\frac{A\bigl(e^{(\theta-K)(x+s_b)}-e^{(\theta-K)s_1}\bigr)}{\theta-K},
\]
so the right side of \eqref{eq:integraleq} becomes
\[
\begin{aligned}
&\frac{\theta A\,e^{-K(x+s_b)}}{\theta-K}\\
&+\theta e^{-\theta(x+s_b)}\Bigl[F_X(s_1)\,e^{\theta s_1}
-\frac{A\,e^{(\theta-K)s_1}}{\theta-K}\Bigr].
\end{aligned}
\]
By \eqref{eq:char}, $\theta-K=\theta e^{-Ks_b}$, so the first term equals $Ae^{-Kx}$,
which is the left side $f_X(x)$ of \eqref{eq:integraleq}. The second term is a
constant multiple of $e^{-\theta x}$. Equation \eqref{eq:integraleq} therefore holds
for every $x>0$ if and only if the constant in brackets is zero, that is,
\[
F_X(s_1)\,e^{\theta s_1}=\frac{A\,e^{(\theta-K)s_1}}{\theta-K}.
\]
Substituting $F_X(s_1)=1-(A/K)e^{-Ks_1}$ and $\theta-K=\theta e^{-Ks_b}$ and solving
for $A$ gives $A=Ke^{-K(s_b-s_1)}$. This is the law in \eqref{eq:stationarybacklog}. The proof below shows that it is the only stationary law
and that the backlog converges to it from every initial condition.

\proof{Proof of \cref{prop:backlog}.}
Write $Z_n:=\lambda T_\mu^{(n)}$. These are independent and exponentially distributed with
rate $\theta$. Let $Y^{(n)}:=\max(X^{(n)},s_1)-s_1\ge0$ be the backlog in excess of the
threshold. Since $s_1\ge0$, we have $\max(x^+,s_1)=\max(x,s_1)$ for every real $x$, and
\eqref{eq:recursion} gives
\begin{equation}
\begin{aligned}
Y^{(n+1)}&=\bigl(Y^{(n)}+Z_n-s_b\bigr)^+,\\
X^{(n+1)}&=\bigl(s_1+Y^{(n)}+Z_n-s_b\bigr)^+.
\end{aligned}
\label{eq:lindleyY}
\end{equation}
The first equation is the Lindley recursion for the waiting time in a single-server queue
with deterministic interarrival times $s_b$ and exponential service times $Z_n$. Its
increments $Z_n-s_b$ are independent with mean $1/\theta-s_b<0$. The recursion therefore
has a unique stationary distribution, and $Y^{(n)}$ converges to it in distribution from
every initial condition \citep{asmussen2003applied}. Since $Z_n$ is independent of
$Y^{(n)}$ and $X^{(n+1)}$ is a continuous function of $(Y^{(n)},Z_n)$, the backlog $X^{(n)}$ converges in distribution as well.

It remains to show that the law derived above is this stationary law. Let $K$ be the root in \cref{lem:root} and
$\sigma:=e^{-Ks_b}\in(0,1)$. For $y>0$, the law derived above gives
$\Pr(Y>y)=\Pr(X>s_1+y)=e^{-K(s_b+y)}=\sigma e^{-Ky}$. Suppose that $Y$ has this law, so
that $Y$ has the atom $1-\sigma$ at zero and the density $\sigma Ke^{-Ky}$ on $(0,\infty)$, and let
$Z$ be independent of $Y$ and exponential with rate $\theta$. For $u>0$,
\[
\begin{aligned}
\Pr(Y+Z>u)={}&(1-\sigma)e^{-\theta u}+\sigma e^{-Ku}\\
&+\int_0^u e^{-\theta(u-y)}\,\sigma Ke^{-Ky}\,dy.
\end{aligned}
\]
By \eqref{eq:char}, $K=\theta(1-\sigma)$ and $\theta-K=\theta\sigma$, so the integral
equals $\sigma K(e^{-Ku}-e^{-\theta u})/(\theta-K)=(1-\sigma)(e^{-Ku}-e^{-\theta u})$.
Hence
\begin{equation}
\Pr(Y+Z>u)=e^{-Ku},\qquad u>0.
\label{eq:YZtail}
\end{equation}
Taking $u=y+s_b$ in the first equation of \eqref{eq:lindleyY} gives
$\Pr(Y^{(n+1)}>y)=e^{-K(y+s_b)}=\sigma e^{-Ky}$ for $y>0$. The assumed law is therefore
invariant, and by uniqueness it is the stationary law of $Y$. Taking $u=x+s_b-s_1$, which
is positive because $s_1<s_b$, in the second equation gives, under stationarity,
\[
\Pr(X>x)=e^{-K(s_b-s_1+x)},\qquad x\ge0.
\]
Its value at $x=0$ gives $p_0=1-e^{-K(s_b-s_1)}$, and differentiating in $x$ gives the
density in \eqref{eq:stationarybacklog}.
\Halmos
\endproof

\proof{Proof of \cref{cor:gap}.}
By \eqref{eq:gapdef}, $T_g=0$ if $X\ge s_1$, $T_g=s_1/\lambda$ if $X=0$, and
$T_g=(s_1-X)/\lambda$ otherwise. For $0\le t<s_1/\lambda$, the event $\{T_g\le t\}$
equals $\{X\ge s_1-\lambda t\}$. Since $s_1-\lambda t>0$ and $X$ has no atom on
$(0,\infty)$, \cref{prop:backlog} gives $\Pr(T_g\le t)=e^{-K(s_b-\lambda t)}$, which is
\eqref{eq:gapcdf}. The atom at zero is $\Pr(X\ge s_1)=e^{-Ks_b}$, and the atom at
$s_1/\lambda$ is $p_0$. Integrating the survival function gives
\[
\begin{aligned}
\mathbb{E}[T_g]&=\int_0^{s_1/\lambda}\bigl(1-e^{-K(s_b-\lambda t)}\bigr)\,dt\\
&=\frac{s_1}{\lambda}-\frac{e^{-Ks_b}}{K\lambda}\bigl(e^{Ks_1}-1\bigr),
\end{aligned}
\]
which is \eqref{eq:expectedgap}.
\Halmos
\endproof

\section{The Difficulty Adjustment Map and Stability}
\label{app:stability}
Throughout this section $\gamma>0$ is fixed, and the winning probability ranges over
$\mathcal{D}_\gamma:=(\underline w_\gamma,\infty)$, where
$\underline w_\gamma:=\max\bigl(\underline w,\lambda/(\gamma s_b)\bigr)$ and
$\underline w=\bar c_{\mathrm{op}}/R_{\max}$. On $\mathcal D_\gamma$ the backlog is
stable, $\theta s_b>1$, and some mining is profitable, $s_1(w)<s_b$, so
\cref{prop:backlog} applies.

\proof{Proof of \cref{prop:invariant}.}
By \eqref{eq:mdef} and \cref{cor:gap}, $m(w)=\mathbb{E}[T_g]+1/\mu$. The characteristic
equation gives $e^{-Ks_b}=1-K/\theta=1-K\lambda/\mu$, so
$e^{-Ks_b}/(K\lambda)=1/(K\lambda)-1/\mu$. Substituting \eqref{eq:expectedgap} into
$m(w)=\mathbb{E}[T_g]+1/\mu$ and using this identity for the term $e^{-Ks_b}/(K\lambda)$
gives \eqref{eq:mw}, which in the notation of \cref{ec:notation} reads
\begin{equation}
\lambda\,m(w)=s_1(w)+g\bigl(K(w),\phi(w)\bigr),
\label{eq:mg}
\end{equation}
with $\phi(w)=s_b-s_1(w)$. The threshold $s_1(w)$ is continuous. It equals $(\bar c_{\mathrm{op}}/w-R_0)/\alpha$ for
$w<\bar c_{\mathrm{op}}/R_0$ and zero for $w\ge\bar c_{\mathrm{op}}/R_0$. Since
$\theta=w\gamma/\lambda$, \cref{lem:root} shows that $K(w)$ is continuous and strictly
increasing. Hence $m$ is continuous, and on each side of $\bar c_{\mathrm{op}}/R_0$,
\begin{equation}
\lambda\,m'(w)=s_1'(w)\,(1-g_\phi)+g_K\,K'(w),
\label{eq:mprime}
\end{equation}
where $K'(w)=K/(w\beta)$ by \eqref{eq:dKdtheta} and $d\theta/dw=\theta/w$. The first term
is non-positive because $s_1'\le0$ and $g_\phi<1$. The second term is negative because
$g_K<0$ and $K'>0$. Hence $m$ is strictly decreasing on $\mathcal D_\gamma$.

For $w\ge\bar c_{\mathrm{op}}/R_0$, we have $s_1=0$ and
$g(K,s_b)=(1-e^{-Ks_b})/K=1/\theta$ by \eqref{eq:char}, so $m(w)=1/(w\gamma)$, which
tends to zero as $w\to\infty$. At the lower end of $\mathcal D_\gamma$ there are two
cases. If $\underline w_\gamma=\underline w$, then $\phi\to0$, and $g(K,\phi)\le\phi$
tends to zero. Otherwise $\theta s_b\downarrow1$, so $K\to0$ by \cref{lem:root} and
$g(K,\phi)\to\phi$. In both cases $\lambda m(w)\to s_1+\phi=s_b$. Since
$0<B<s_b/\lambda$, continuity and strict monotonicity give a unique
$w^*\in\mathcal D_\gamma$ with $m(w^*)=B$.
\Halmos
\endproof

\proof{Proof of \cref{prop:stability}.}
Away from $\bar c_{\mathrm{op}}/R_0$, the function $m$ is continuously differentiable,
and so is the map $\Phi(w)=w\,m(w)/B$ of \eqref{eq:phimap}. At the fixed point,
$m(w^*)=B$ gives $\Phi'(w^*)=1+w^*m'(w^*)/B=1+\varepsilon(w^*)$. A fixed point of a
continuously differentiable map is locally asymptotically stable if
$|\Phi'(w^*)|<1$ and unstable if $|\Phi'(w^*)|>1$. Since $\varepsilon(w^*)<0$ by
\cref{prop:invariant}, the first case is $-2<\varepsilon(w^*)<0$ and the second is
$\varepsilon(w^*)<-2$.

Without a gap, $m(w)=1/(w\gamma)$ near $w^*$, so $\varepsilon(w^*)=-1$.

With a gap, $s_1^*>0$ and $s_1'(w^*)=-\bar c_{\mathrm{op}}/(\alpha w^{*2})$. The
threshold \eqref{eq:s1common} gives $\bar c_{\mathrm{op}}/(\alpha w^*)=s_1^*+V$, so
$w^*s_1'(w^*)=-(s_1^*+V)$. The condition $m(w^*)=B$ and \eqref{eq:mg} give
$g(K^*,\phi^*)=Q^*$, that is, $1-e^{-K^*\phi^*}=K^*Q^*$. Substituting into
\eqref{eq:mprime} yields
\begin{equation}
\begin{aligned}
\varepsilon(w^*)={}&-\frac{K^*(s_1^*+V)\,Q^*}{\lambda B}\\
&-\frac{K^*|g_K^*|}{\lambda B\,\beta^*},
\end{aligned}
\label{eq:epsgap}
\end{equation}
where $|g_K^*|=(Q^*-\phi^*e^{-K^*\phi^*})/K^*$. The invariant winning probability is
therefore stable if the right side of \eqref{eq:epsgap} exceeds $-2$ and unstable if it
falls below $-2$. The first term is a threshold effect, through which $w$ moves the gap by moving $s_1$. The second
is a backlog effect, through which $w$ reshapes the stationary backlog distribution.

For the limit, let $\gamma\to\infty$. By \cref{prop:hashpower}, the invariant point has a
gap for $\gamma>\gamma_0$, and $\mu^*\to\infty$ and $s_1^*\to\lambda B$. Hence
$Q^*\to0$ and $\phi^*\to D>0$. By \cref{lem:root}, $K^*\to\infty$ and $K^*\ge\theta/2$
for large $\theta$, so $\theta s_be^{-K^*s_b}\to0$ and $\beta^*\to1$. The identity
$K^*Q^*=1-e^{-K^*\phi^*}$ gives $K^*Q^*\to1$, and $0<K^*|g_K^*|\le Q^*\to0$.
Substituting into \eqref{eq:epsgap},
\begin{equation}
\lim_{\gamma\to\infty}\varepsilon(w^*)=-\frac{\lambda B+V}{\lambda B}
=-1-\frac{R_b+s_br_0}{(r_1-r_0)\lambda B}.
\label{eq:epslimit}
\end{equation}
This limit exceeds $-2$ if and only if $R_b+s_br_0<(r_1-r_0)\lambda B$. It equals the
elasticity that \citet[Corollary~3]{kimSeo2026mind} obtain for identical miners with an
affine reward and a mempool that resets after every block. Our backlog approaches such a
mempool in the limit, since $p_0=1-e^{-K^*\phi^*}\to1$.
\Halmos
\endproof

The next lemma gives the elasticity when the Type-1 backlog is empty at the start of every
round, as in a mempool that resets after every block. It holds at every aggregate
hashpower, not only in the limit of \cref{prop:stability}, and it matches the elasticity
that \citet[Corollary~3]{kimSeo2026mind} compute for finitely many identical miners.
\begin{lemma}
\label{lem:reset}
Suppose that every round starts with no Type-1 transactions waiting and that the invariant
winning probability $w^*$ has a gap, $s_1(w^*)>0$. Then
$\varepsilon(w^*)=-1-\nu$, where $\nu=(R_b+s_br_0)/((r_1-r_0)\lambda B)$.
\end{lemma}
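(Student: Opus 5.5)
With a mempool that resets after every block, every round starts with $X=0$. I will therefore write the mean round length $m(w)$ in closed form, differentiate it directly, and evaluate the elasticity at the invariant point using $m(w^*)=B$.

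\emph{Step 1: the mean round length.} With $X=0$ in every round, the gap is deterministic and equals $s_1(w)/\lambda$. After the gap, all hashpower is active by \eqref{eq:aggregatestep}, and the active phase is exponential with rate $\mu=w\gamma$. On a neighbourhood of $w^*$ in the gap regime this gives
\[
m(w)=\frac{s_1(w)}{\lambda}+\frac{1}{w\gamma}.
\]
This neighbourhood exists because $s_1(w^*)>0$ and $s_1$ is continuous. Since $s_1>0$ there, the max in \eqref{eq:s1common} is inactive, so $s_1(w)=(\bar c_{\mathrm{op}}/w-R_0)/\alpha$. Hence $m$ is differentiable at $w^*$ with
\[
w\,m'(w)=\frac{w\,s_1'(w)}{\lambda}-\frac{1}{w\gamma},
\qquad
w\,s_1'(w)=-\frac{\bar c_{\mathrm{op}}}{\alpha w}=-(s_1(w)+V),
\]
where $V=R_0/\alpha$ as in the notation section.

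\emph{Step 2: eliminate $\mu^*$.} Substituting into the elasticity gives
\[
\varepsilon(w^*)=\frac{w^*m'(w^*)}{B}=-\frac{s_1^*+V}{\lambda B}-\frac{1}{\mu^*B}.
\]
The invariant condition $m(w^*)=B$ reads $s_1^*/\lambda+1/\mu^*=B$, so $1/(\mu^*B)=1-s_1^*/(\lambda B)$. Substituting this, the $s_1^*$ terms cancel:
\[
\varepsilon(w^*)=-\frac{s_1^*}{\lambda B}-\frac{V}{\lambda B}-1+\frac{s_1^*}{\lambda B}=-1-\frac{V}{\lambda B}=-1-\nu,
\]
since $V/(\lambda B)=R_0/(\alpha\lambda B)=\nu$.

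\emph{Main obstacle.} The algebra is routine; the only real care is justifying the derivative. I need $w^*$ to lie strictly inside the gap region, and this is ensured by the hypothesis $s_1(w^*)>0$, so the kink of $s_1$ at $\bar c_{\mathrm{op}}/R_0$ is avoided. A second point is to check that $w^*$ exists and is unique in the reset model, so that the statement is not vacuous. For this, $m$ is still continuous and strictly decreasing, since both terms decrease in $w$. The lemma, however, only presupposes $w^*$, so the argument above does not depend on that check.
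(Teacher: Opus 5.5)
Your proposal is correct and follows the paper's proof essentially step for step: the same closed form $m(w)=s_1(w)/\lambda+1/(w\gamma)$, the same identity $w\,s_1'(w)=-(s_1(w)+V)$, and the same use of $m(w^*)=B$. The only cosmetic difference is that the paper first rewrites $w\,m'(w)=-m(w)-V/\lambda$ for every $w$ near $w^*$ and then evaluates at $w^*$, whereas you evaluate at $w^*$ first and eliminate $1/(\mu^*B)$ through the invariant condition.
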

\proof{Proof.}
With an empty backlog at the start of every round, the gap lasts $s_1(w)/\lambda$, and the
active phase that follows lasts $1/(w\gamma)$ on average, so
\[
m(w)=\frac{s_1(w)}{\lambda}+\frac{1}{w\gamma}.
\]
Near $w^*$, the threshold is positive and equals $(\bar c_{\mathrm{op}}/w-R_0)/\alpha$ by
\eqref{eq:s1common}, so $w\,s_1'(w)=-\bar c_{\mathrm{op}}/(\alpha w)=-(s_1(w)+V)$ with
$V=R_0/\alpha$. Hence
\[
w\,m'(w)=-\frac{s_1(w)+V}{\lambda}-\frac{1}{w\gamma}=-m(w)-\frac{V}{\lambda}.
\]
At $w^*$ we have $m(w^*)=B$, so
$\varepsilon(w^*)=w^*m'(w^*)/m(w^*)=-1-V/(\lambda B)=-1-\nu$.
\Halmos
\endproof

\section{Aggregate Hashpower}
\label{ec:hashpower}
\begin{lemma}
\label{lem:daacurve}
Suppose $\lambda B<s_b$. For every $\mu>1/B$, equation \eqref{eq:sysA2}, with $K$ the root
of \eqref{eq:char} at $\theta=\mu/\lambda$, has a unique solution
$s_1=\hat s_1(\mu)\in(0,\lambda B)$. The function $\hat s_1$ is continuously
differentiable and strictly increasing, with
\begin{equation}
\zeta(\mu):=\hat s_1'(\mu)=-\frac{g_K\,K}{\mu\beta\,\bigl(1-e^{-K\phi}\bigr)}>0.
\label{eq:zeta}
\end{equation}
Moreover, $\hat s_1(\mu)\to0$ as $\mu\downarrow1/B$ and $\hat s_1(\mu)\to\lambda B$ as
$\mu\to\infty$.
\end{lemma}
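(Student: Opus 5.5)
Fix $\mu>1/B$ and put $\theta=\mu/\lambda$. The plan is to view \eqref{eq:sysA2}, multiplied by $\lambda$, as a scalar equation in $s_1$ alone and to apply the implicit function theorem. First check that \cref{lem:root} applies: $\theta s_b=\mu s_b/\lambda>s_b/(\lambda B)>1$, because $\mu>1/B$ and $\lambda B<s_b$. So $K=K(\mu)$ is well defined. It is continuously differentiable, with $dK/d\mu=(dK/d\theta)(1/\lambda)=K/(\mu\beta)>0$ by \eqref{eq:dKdtheta}. Then define, in the notation of \cref{ec:notation},
\[
G(s_1,\mu):=s_1+g\bigl(K(\mu),\,s_b-s_1\bigr)-\lambda B,\qquad s_1\in[0,\lambda B],
\]
so that \eqref{eq:sysA2} reads $G(s_1,\mu)=0$.

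\emph{Existence and uniqueness.} The partial derivative is $\partial G/\partial s_1=1-g_\phi=1-e^{-K\phi}>0$, so $G$ is strictly increasing in $s_1$ and there is at most one root. We then check the signs at the endpoints. At $s_1=0$, the characteristic equation \eqref{eq:char} gives $g(K,s_b)=(1-e^{-Ks_b})/K=1/\theta=\lambda/\mu$. Hence $G(0,\mu)=\lambda(1/\mu-B)<0$. At $s_1=\lambda B$ we have $\phi=D>0$, so $G(\lambda B,\mu)=g(K,D)>0$. The intermediate value theorem then gives a unique root $\hat s_1(\mu)\in(0,\lambda B)$.

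\emph{Differentiability and the formula \eqref{eq:zeta}.} $G$ is $C^1$ in $(s_1,\mu)$ and $\partial G/\partial s_1\neq0$, so the implicit function theorem makes $\hat s_1$ continuously differentiable, with
\[
\hat s_1'(\mu)=-\frac{\partial G/\partial\mu}{\partial G/\partial s_1}=-\frac{g_K\,K/(\mu\beta)}{1-e^{-K\phi}}.
\]
This is exactly \eqref{eq:zeta}. It is positive because $g_K<0$, $K>0$, $\beta\in(0,1)$ and $1-e^{-K\phi}>0$.

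\emph{Limits.} As $\mu\downarrow1/B$, the limit point satisfies $\theta s_b\to s_b/(\lambda B)>1$, so $K$ stays bounded away from zero and continuous, and $G$ is continuous up to $\mu=1/B$. Moreover $G(0,1/B)=0$ while $G$ is strictly increasing in $s_1$. Continuity of the implicit root, or simply the bound $0<\hat s_1(\mu)$ together with $G(s,\mu)>0$ for any fixed $s>0$ once $\mu$ is near $1/B$, then forces $\hat s_1(\mu)\to0$. As $\mu\to\infty$, \cref{lem:root} gives $K\to\infty$. Since $0<g(K,\phi)\le1/K$, it follows that $G(s_1,\mu)\to s_1-\lambda B$ uniformly in $s_1$. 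Hence for any $\epsilon>0$, $G(\lambda B-\epsilon,\mu)<0$ for large $\mu$, and so $\hat s_1(\mu)>\lambda B-\epsilon$. Combined with $\hat s_1<\lambda B$, this gives $\hat s_1(\mu)\to\lambda B$.

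The main care point is the lower limit. There one must make sure that $K$ does not degenerate at $\mu=1/B$. It does not, since $\theta s_b$ stays strictly above one there. The sign checks via \eqref{eq:char} are the key computations of the whole argument.
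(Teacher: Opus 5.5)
Your proof is correct and follows the paper's argument essentially step for step. It uses the same scalar function $s_1+g(K,s_b-s_1)-\lambda B$, strict monotonicity via $1-g_\phi>0$, the endpoint signs obtained from \eqref{eq:char}, the implicit function theorem with $dK/d\mu=K/(\mu\beta)$, and the bound $g\le 1/K$ for the upper limit. The only difference is cosmetic and occurs at $\mu\downarrow 1/B$: you use continuity of $G$ up to $\mu=1/B$ together with $G(0,1/B)=0$, whereas the paper uses the uniform positive lower bound $1-e^{-KD}$ on the slope in $s_1$, and both arguments are valid.
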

\proof{Proof.}
For $\mu>1/B$, $\theta s_b=\mu s_b/\lambda>s_b/(\lambda B)>1$, so $K$ is well defined.
In the notation of \cref{ec:notation}, \eqref{eq:sysA2} reads $\psi(s_1)=\lambda B$ with
$\psi(s_1):=s_1+g(K,s_b-s_1)$. On $[0,\lambda B]$, $\psi$ is continuous with
$\psi'=1-g_\phi\ge1-e^{-KD}>0$. It satisfies $\psi(0)=g(K,s_b)=1/\theta=\lambda/\mu<\lambda B$
and $\psi(\lambda B)>\lambda B$, so the root is unique and lies in $(0,\lambda B)$. The
implicit function theorem, with $\partial\psi/\partial s_1=1-g_\phi>0$ and
$dK/d\mu=K/(\mu\beta)$ from \eqref{eq:dKdtheta}, gives \eqref{eq:zeta}, which is
positive because $g_K<0$. As $\mu\downarrow1/B$, $\psi(0)\uparrow\lambda B$, and since
$\psi'$ is bounded below by a positive constant, the root tends to zero. As
$\mu\to\infty$, $K\to\infty$ and $0<g(K,\phi)\le1/K\to0$, so the root tends to
$\lambda B$.
\Halmos
\endproof

\proof{Proof of \cref{prop:hashpower}.}
(i) Suppose $\gamma\le\gamma_0$ and let $w=1/(\gamma B)$. Then
$\bar c_{\mathrm{op}}/w=\bar c_{\mathrm{op}}\gamma B\le R_0$, so $s_1(w)=0$ and
$m(w)=1/(w\gamma)=B$. This $w$ lies in $\mathcal D_\gamma$ because
$\theta s_b=s_b/(\lambda B)>1$ and $s_1(w)<s_b$. By the uniqueness in
\cref{prop:invariant}, $w^*=1/(\gamma B)$ and there is no gap. Conversely, if
$s_1(w^*)=0$, then $m(w^*)=1/(w^*\gamma)=B$, so $w^*=1/(\gamma B)$, and $s_1(w^*)=0$
requires $\bar c_{\mathrm{op}}\gamma B\le R_0$, that is, $\gamma\le\gamma_0$.

(ii) Let $\gamma>\gamma_0$, so that $s_1^*>0$. Then $p_0>0$, so $\mathbb{E}[T_g]>0$, and
$m(w^*)=\mathbb{E}[T_g]+1/\mu^*=B$ gives $\mu^*>1/B$. By \cref{lem:daacurve},
$s_1^*=\hat s_1(\mu^*)$, and the threshold \eqref{eq:sysB2} gives
$\bar c_{\mathrm{op}}\gamma=H(\mu^*)$ with
\[
H(\mu):=\mu\bigl(R_0+\alpha\hat s_1(\mu)\bigr),\qquad \mu>1/B.
\]
The function $H$ is continuous and strictly increasing, with
$H(\mu)\to R_0/B=\bar c_{\mathrm{op}}\gamma_0$ as $\mu\downarrow1/B$ and
$H(\mu)\to\infty$ as $\mu\to\infty$. Hence $\mu^*=H^{-1}(\bar c_{\mathrm{op}}\gamma)$ is
continuous and strictly increasing in $\gamma>\gamma_0$, and so are
$s_1^*=\hat s_1(\mu^*)$, $\mathbb{E}[T_g]=B-1/\mu^*$ and
$\mathcal W=\gamma/\mu^*=(R_0+\alpha s_1^*)/\bar c_{\mathrm{op}}$.

(iii) Since $\hat s_1<\lambda B$,
$\bar c_{\mathrm{op}}\gamma=H(\mu^*)<\mu^*(R_0+\alpha\lambda B)=\mu^*\bar c_{\mathrm{op}}C$,
so $\mu^*>\gamma/C\to\infty$. \Cref{lem:daacurve} then gives $s_1^*\to\lambda B$, so
$\mathbb{E}[T_g]\to B$ and $\mathcal W\to(R_0+\alpha\lambda B)/\bar c_{\mathrm{op}}=C$,
from below because $s_1^*<\lambda B$.
\Halmos
\endproof

\section{Free Entry and the No-Gap Condition}
\label{ec:entry}
\proof{Proof of \cref{prop:entry}.}
Let $P(\gamma)$ denote the expected profit per unit of hashpower, the left side of
\eqref{eq:freeentry} minus its right side. For $\gamma\le\gamma_0$,
\cref{prop:hashpower}(i) gives $\mu^*=1/B$, so
$P(\gamma)=(R_0+\alpha\lambda B)/\gamma-(\bar c_{\mathrm{cap}}+\bar c_{\mathrm{op}})B$.
For $\gamma>\gamma_0$, the threshold \eqref{eq:sysB2} gives
$\bar c_{\mathrm{op}}/\mu^*=(R_0+\alpha s_1^*)/\gamma$, so
\begin{equation}
P(\gamma)=\frac{\alpha Q^*(\gamma)}{\gamma}-\bar c_{\mathrm{cap}}B,
\label{eq:profitgap}
\end{equation}
where $Q^*(\gamma)=\lambda B-s_1^*\in(0,\lambda B)$. Both expressions equal $\alpha\lambda B/\gamma_0-\bar c_{\mathrm{cap}}B$ at
$\gamma=\gamma_0$, because $R_0/\gamma_0=\bar c_{\mathrm{op}}B$ and $s_1^*\to0$ as
$\gamma\downarrow\gamma_0$. Hence $P$ is continuous. It is strictly decreasing on both
pieces, since $Q^*$ is strictly decreasing in $\gamma$ by \cref{prop:hashpower}(ii). As
$\gamma\downarrow0$, $P(\gamma)\to\infty$. As $\gamma\to\infty$,
$0<\alpha Q^*/\gamma<\alpha\lambda B/\gamma\to0$, so
$P(\gamma)\to-\bar c_{\mathrm{cap}}B<0$. Hence $P$ has a unique zero $\gamma^*$, with
$P>0$ below $\gamma^*$ and $P<0$ above it. If $\gamma^*>\gamma_0$, the equation
$P(\gamma^*)=0$ is \eqref{eq:entrygap}. If the capital cost were zero, then
$P(\gamma)=\alpha Q^*/\gamma>0$ for $\gamma>\gamma_0$ and
$P(\gamma)\ge P(\gamma_0)=\alpha\lambda B/\gamma_0>0$ for $\gamma\le\gamma_0$, so $P$
would have no zero.
\Halmos
\endproof

\proof{Proof of \cref{prop:nogap}.}
Since $P$ is strictly decreasing, $\gamma^*\le\gamma_0$ if and only if
$P(\gamma_0)\le0$, that is, $\alpha\lambda B/\gamma_0\le\bar c_{\mathrm{cap}}B$. With
$\gamma_0=R_0/(\bar c_{\mathrm{op}}B)$ from \eqref{eq:gamma0}, this is
$\bar c_{\mathrm{op}}/\bar c_{\mathrm{cap}}\le R_0/(\alpha\lambda B)$, which is
\eqref{eq:nogapcondition}. In this case $\mu^*=1/B$, and $P(\gamma^*)=0$ is
\eqref{eq:nogapentry}, so
$\gamma^*=(R_0+\alpha\lambda B)/\bigl((\bar c_{\mathrm{cap}}+\bar c_{\mathrm{op}})B\bigr)$.

The signs in \cref{tab:prevent} follow by inspection. A higher $R_b$, or a higher $r_0$
at fixed $r_1$, raises $R_0/(\alpha\lambda B)$, because $r_0$ raises $R_0$ and lowers
$\alpha$. A higher $\alpha$ lowers it. A higher $\bar c_{\mathrm{cap}}$ lowers
$\bar c_{\mathrm{op}}/\bar c_{\mathrm{cap}}$, and a higher $\bar c_{\mathrm{op}}$ raises
it. For $\gamma^*$, the derivatives with respect to $R_b$ and $\alpha$ are positive, those
with respect to $\bar c_{\mathrm{cap}}$ and $\bar c_{\mathrm{op}}$ are negative, and at
fixed $r_1$ the derivative of $R_0+\alpha\lambda B$ with respect to $r_0$ is
$s_b-\lambda B=D>0$.
\Halmos
\endproof

\section{Parameter Effects in the Gap Regime}
\label{app:statics}
Throughout this section $\lambda B<s_b$, and the free-entry equilibrium has a gap,
$\gamma^*>\gamma_0$. By the proof of \cref{prop:hashpower}, the equilibrium satisfies
$s_1^*=\hat s_1(\mu^*)$, the threshold
$\bar c_{\mathrm{op}}\gamma^*=\mu^*(R_0+\alpha s_1^*)$, and the free-entry condition
$\alpha Q^*=\bar c_{\mathrm{cap}}B\gamma^*$ of \eqref{eq:entrygap}. Eliminating
$\gamma^*$ and dividing by $\alpha\mu^*$ gives the master equation
$\Psi(\mu^*;V,\Gamma)=0$, with
\begin{equation}
\Psi(\mu;V,\Gamma):=\frac{\Gamma\hat Q(\mu)}{B\mu}-\hat s_1(\mu)-V,
\label{eq:master}
\end{equation}
where $\hat Q(\mu):=\lambda B-\hat s_1(\mu)$. The functions $\hat s_1$ and $\hat Q$ depend
only on $\lambda$, $s_b$ and $B$, so the fee and cost parameters enter
\eqref{eq:master} only through $V=R_0/\alpha$ and
$\Gamma=\bar c_{\mathrm{op}}/\bar c_{\mathrm{cap}}$. By \cref{lem:daacurve},
$d\hat s_1/d\mu=\zeta$ and $d\hat Q/d\mu=-\zeta$, so
\[
\frac{\partial\Psi}{\partial\mu}=-\frac{\Gamma\zeta}{B\mu}-\frac{\Gamma\hat Q}{B\mu^2}-\zeta
=-\frac{\Omega}{B\mu},
\]
where $\Omega:=(B\mu+\Gamma)\zeta+\Gamma\hat Q/\mu>0$.
At the equilibrium, \eqref{eq:master} gives $\Gamma Q^*/\mu^*=(s_1^*+V)B$, so
$\Omega=(B\mu^*+\Gamma)\zeta+(s_1^*+V)B$. Since $\partial\Psi/\partial\mu<0$, the master
equation has at most one root, and the implicit function theorem gives
\begin{equation}
\frac{d\mu^*}{dV}=-\frac{B\mu^*}{\Omega},\qquad
\frac{d\mu^*}{d\Gamma}=\frac{Q^*}{\Omega},
\label{eq:pert}
\end{equation}
and $ds_1^*=\zeta\,d\mu^*$ and $dQ^*=-\zeta\,d\mu^*$ along the curve of
\cref{lem:daacurve}. By \cref{prop:hashpower}(ii) and \eqref{eq:entrygap}, the three
metrics are
\begin{equation}
\begin{aligned}
\mathbb{E}[T_g]&=B-\frac{1}{\mu^*},\\
\gamma^*&=\frac{\alpha Q^*}{\bar c_{\mathrm{cap}}B},\\
\mathcal W&=\frac{R_0+\alpha s_1^*}{\bar c_{\mathrm{op}}},
\end{aligned}
\label{eq:metrics}
\end{equation}
so a parameter shortens the gap if and only if it lowers $\mu^*$. Two identities
simplify the algebra below. Expanding $\Omega$ and using
$(s_1^*+V)B=\Gamma Q^*/\mu^*$ gives $\Omega-B\mu^*\zeta=\Gamma\zeta+(s_1^*+V)B>0$, and
$\phi^*-Q^*=D>0$.

\proof{Proof of \cref{prop:statics}.}
\textit{Block subsidy.} $R_b$ enters only through $V$, with $\partial V/\partial R_b=1/\alpha$.
By \eqref{eq:pert}, $d\mu^*/dR_b=-B\mu^*/(\alpha\Omega)<0$, so the gap shortens. By
\eqref{eq:metrics},
\[
\begin{aligned}
\bar c_{\mathrm{op}}\frac{d\mathcal W}{dR_b}&=1+\alpha\zeta\frac{d\mu^*}{dR_b}
=\frac{\Omega-B\mu^*\zeta}{\Omega}>0,\\
\frac{d\gamma^*}{dR_b}&=-\frac{\alpha\zeta}{\bar c_{\mathrm{cap}}B}\frac{d\mu^*}{dR_b}
=\frac{\mu^*\zeta}{\bar c_{\mathrm{cap}}\Omega}>0.
\end{aligned}
\]

\textit{Low-tier fee.} With $r_1$ fixed, $\partial R_0/\partial r_0=s_b$ and
$\partial\alpha/\partial r_0=-1$, so $\partial V/\partial r_0=(s_b+V)/\alpha>0$ and
$d\mu^*/dr_0=-B\mu^*(s_b+V)/(\alpha\Omega)<0$. The gap shortens. For the hashwork,
\[
\begin{aligned}
\bar c_{\mathrm{op}}\frac{d\mathcal W}{dr_0}&=\phi^*+\alpha\zeta\frac{d\mu^*}{dr_0}
=\frac{\phi^*\Omega-B\mu^*\zeta(s_b+V)}{\Omega}\\
&=\frac{\Gamma\bigl(\zeta D+\phi^*Q^*/\mu^*\bigr)}{\Omega}>0,
\end{aligned}
\]
where the last step expands $\Omega$, substitutes $(s_1^*+V)B=\Gamma Q^*/\mu^*$ and
$s_b+V=\phi^*+s_1^*+V$, and uses $\phi^*-Q^*=D$. For the hashpower,
\[
\frac{d\gamma^*}{dr_0}=\frac{-Q^*-\alpha\zeta\,d\mu^*/dr_0}{\bar c_{\mathrm{cap}}B}
=\frac{B\mu^*\zeta D-\Gamma Q^{*2}/\mu^*}{\bar c_{\mathrm{cap}}B\,\Omega},
\]
which is positive if $\Gamma<\Gamma^\star$ and negative if $\Gamma>\Gamma^\star$, with
$\Gamma^\star:=B\mu^{*2}\zeta D/Q^{*2}$. Since $\mu^*$ depends on $\Gamma$, the threshold
$\Gamma^\star$ is evaluated at the equilibrium and is not a fixed number.

\textit{Fee premium.} With $r_0$ fixed, $R_0$ is fixed and
$\partial V/\partial\alpha=-V/\alpha$, so $d\mu^*/d\alpha=B\mu^*V/(\alpha\Omega)>0$. The
gap lengthens. Then
\[
\begin{aligned}
\bar c_{\mathrm{op}}\frac{d\mathcal W}{d\alpha}&=s_1^*+\alpha\zeta\frac{d\mu^*}{d\alpha}>0,\\
\frac{d\gamma^*}{d\alpha}&=\frac{Q^*\Omega-B\mu^*\zeta V}{\bar c_{\mathrm{cap}}B\,\Omega}\\
&=\frac{\lambda B^2\mu^*\zeta+\Gamma Q^{*2}/\mu^*}{\bar c_{\mathrm{cap}}B\,\Omega}>0,
\end{aligned}
\]
where the last step expands $\Omega$, substitutes $V=\Gamma Q^*/(B\mu^*)-s_1^*$, and uses
$Q^*+s_1^*=\lambda B$.

\textit{Capital cost.} $\bar c_{\mathrm{cap}}$ enters only through $\Gamma$, with
$d\Gamma/d\bar c_{\mathrm{cap}}=-\Gamma/\bar c_{\mathrm{cap}}$, so
$d\mu^*/d\bar c_{\mathrm{cap}}=-\Gamma Q^*/(\bar c_{\mathrm{cap}}\Omega)<0$. The gap
shortens. Since $\mathcal W$ does not depend on $\bar c_{\mathrm{cap}}$ directly,
\[
\begin{aligned}
\frac{d\mathcal W}{d\bar c_{\mathrm{cap}}}&=\frac{\alpha\zeta}{\bar c_{\mathrm{op}}}
\frac{d\mu^*}{d\bar c_{\mathrm{cap}}}<0,\\
\frac{d\gamma^*}{d\bar c_{\mathrm{cap}}}
&=\frac{\alpha Q^*}{\bar c_{\mathrm{cap}}^2B}\left(\frac{\Gamma\zeta}{\Omega}-1\right)<0,
\end{aligned}
\]
because $\Omega>\Gamma\zeta$.

\textit{Operating cost.} $\bar c_{\mathrm{op}}$ enters only through $\Gamma$, with
$d\Gamma/d\bar c_{\mathrm{op}}=1/\bar c_{\mathrm{cap}}$, so
$d\mu^*/d\bar c_{\mathrm{op}}=Q^*/(\bar c_{\mathrm{cap}}\Omega)>0$. The gap lengthens.
Then
\[
\begin{aligned}
\frac{d\gamma^*}{d\bar c_{\mathrm{op}}}
&=-\frac{\alpha\zeta Q^*}{\bar c_{\mathrm{cap}}^2B\,\Omega}<0,\\
\frac{d\mathcal W}{d\bar c_{\mathrm{op}}}
&=-\frac{\alpha Q^*\,(\Omega-B\mu^*\zeta)}{\bar c_{\mathrm{op}}\bar c_{\mathrm{cap}}B\mu^*\,\Omega}<0,
\end{aligned}
\]
where the second expression differentiates $\mathcal W=\alpha(V+s_1^*)/\bar c_{\mathrm{op}}$
and uses $(V+s_1^*)/\bar c_{\mathrm{op}}=Q^*/(\bar c_{\mathrm{cap}}B\mu^*)$.

Collecting these signs gives \cref{tab:reduce}.
\Halmos
\endproof

\section{Estimating the Fee Inputs}
\label{ec:fees}
This section explains how \cref{sec:numerical-estimation} estimates $s_b r_0$, the fee
premium $(r_1-r_0)\lambda B$ and $\lambda$ from block data. The data are the 12{,}096 blocks
between heights 957{,}600 and 969{,}695 \citep{mempoolspace2026}. For each block we use its
timestamp and total fees and, where stated, the size and the fee of each of its
transactions.

\paragraph{Fee accumulation.} By \eqref{eq:reward}, a block found $\tau$ minutes after its
predecessor earns the fees $s_b r_0+(r_1-r_0)(X+\lambda\tau)$ before it fills with Type-1
transactions, where $X$ is the Type-1 backlog at the start of its round. When the
predecessor was found quickly, it could include every Type-1 transaction waiting at that
time, so that $X$ is most likely zero and the fees are $s_b r_0+(r_1-r_0)\lambda\tau$. We
therefore keep the blocks whose predecessor was found within 2 minutes and whose own
$\tau$ is at most 5 minutes, 737 blocks in total. We limit $\tau$ to 5 minutes because the fees grow linearly in $\tau$ only
until the block fills with Type-1 transactions, and short intervals stay well below that
point. Fees per block are skewed to the right, so we fit a robust line, whose slope is the
median of the slopes between all pairs of blocks and whose intercept is the median fee
minus this slope times the median $\tau$. Panel (a) of \cref{fig:ec-fees} shows the
blocks, the median fee in each minute and the fitted line. Its intercept is
$s_b r_0=0.0045$~BTC, that is, $r_0=0.45$ satoshi per vB, and its slope gives
$(r_1-r_0)\lambda B=0.0154$~BTC per target round. Block timestamps are set by the miners and
are not exact, which adds noise to $\tau$.

\paragraph{The Type-1 arrival rate.} The block fees reveal only the product
$(r_1-r_0)\lambda$ and not its two factors separately. To separate $\lambda$, we use the size and the fee of every transaction
in the same 737 blocks. A transaction counts as Type-1 if its feerate exceeds the barrier
$r_0(1+p)$, and $\lambda$ is the slope of the total size of the Type-1 transactions of a block
on $\tau$, fitted in the same robust way. Panel (b) of \cref{fig:ec-fees} shows $\lambda$ for
margins $p$ from 5\% to 300\%. Up to $p=100\%$, $\lambda$ stays between 70{,}000 and
75{,}000~vB per minute. Beyond the point at which the barrier reaches 1 satoshi per vB,
$\lambda$ drops, because many transactions pay exactly this feerate, and their size grows with
$\tau$ like that of Type-1 transactions. We use $p=20\%$, which gives $\lambda=73{,}000$~vB per
minute and $\xi=\lambda B/s_b=0.73$.

\paragraph{Checks.} In the two-tier model, each arriving Type-1 transaction displaces a
Type-0 transaction that pays $r_0$, so the fees paid below the barrier fall at the rate
$r_0\lambda$. For $p=20\%$ they fall by 0.0031~BTC per target round, against 0.0032~BTC
predicted by $r_0\lambda B$. The same holds for every $p$ up to 100\%, and it fails above 1
satoshi per vB. A second check uses the identity that the fees of a block equal
$s_b r_0+(r_1-r_0)U$, where $U$ is its Type-1 size. A robust regression of the block fees
on $U$ does not depend on whether the Type-1 transactions arrived during the round or were
left over from earlier rounds. It gives $(r_1-r_0)$ between 1.7 and 1.9 satoshi per vB and a
value of $\nu$ without the subsidy between 0.29 and 0.30, in line with the estimate of 0.29.
Blocks found less than a minute after a fast predecessor still contain about 65{,}000~vB of
Type-1 transactions in the median, so some Type-1 transactions do carry over. On average,
these blocks contain about 160{,}000~vB of Type-1 transactions. The two-tier model, which
treats all Type-1 transactions alike, predicts about 225{,}000~vB, that is, more carryover
than the data show, and this difference is one of its approximations.

\begin{figure}[ht]
\centering
\includegraphics[width=\textwidth]{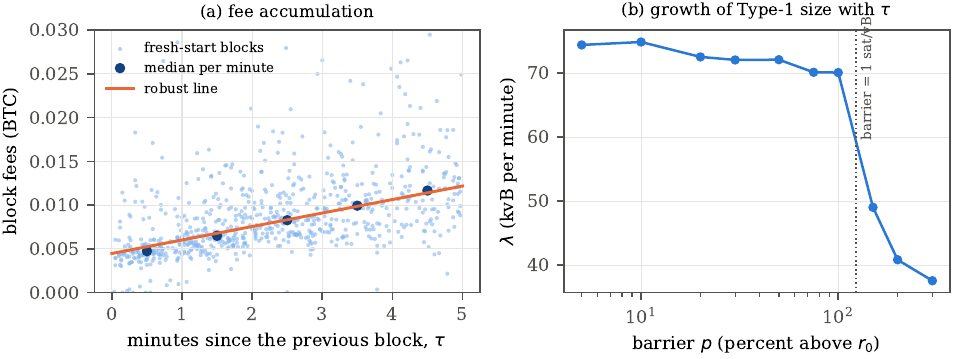}
\caption{Estimating the fee inputs from the 737 blocks whose predecessor was found within
2 minutes. (a) Block fees against the time since the previous block, with the median per
minute and the robust line. (b) Growth of the Type-1 size of a block per minute, when a
transaction counts as Type-1 above the barrier $r_0(1+p)$.}
\label{fig:ec-fees}
\end{figure}
\end{document}